\pgfplotsset{compat=newest}
\newtheorem{theorem}{Theorem}[section]
\newtheorem{lemma}[theorem]{Lemma}
\newtheorem{proposition}[theorem]{Proposition}
\newtheorem{corollary}[theorem]{Corollary}
\theoremstyle{definition}
\newtheorem{definition}{Definition}[section]
\DeclareMathOperator{\R}{{\mathbb{R}}}
\DeclareMathOperator{\N}{{\mathbb{N}}}
\DeclareMathOperator{\Z}{{\mathbb{Z}}}
\DeclareMathOperator{\conv}{{conv}}
\DeclareMathOperator{\argmin}{{argmin}}
\DeclareMathOperator{\poly}{{poly}}
\newcommand{\pb}{\bm{p}}
\newcommand{\qb}{\bm{q}}
\newcommand{\xb}{\bm{x}}
\newcommand{\yb}{\bm{y}}
\newcommand{\zb}{\bm{z}}
\newcommand{\vb}{\bm{v}}
\newcommand{\wb}{\bm{w}}
\newcommand{\eb}{\bm{e}}
\renewcommand{\L}{\mathcal{L}}
\newcommand{\eol}{\textsc{EndOfLine}}
\newcommand{\housing}{\textsc{HousingMarket}}
\newcommand{\kkm}{\textsc{KKM}}
\newcommand{\rkkm}{\textsc{RainbowKKM}}
\newcommand{\cake}{\textsc{CakeCutting}}
\newcommand{\spernerX}{\textsc{2D-Sperner}}
\newcommand{\sperner}{\textsc{Sperner}}
\newtcolorbox{problem}[2][]{empty, 
coltitle=black,
fonttitle=\bfseries\sffamily,
attach boxed title to top left={yshift=-2.5mm},
boxed title style={empty, size=small, top=1mm, bottom=0pt},
varwidth boxed title=0.3\linewidth,
frame code={
\path (title.east|-frame.north) coordinate (aux);
\path[draw=black, line width=0.3mm, rounded corners]
(frame.west) |- ([xshift=-2.5mm]title.north east) to[out=0, in=180] ([xshift=7.5mm]aux)-|(frame.east)|-(frame.south)-|cycle;  
},
title={#2},#1}
\title{
The Computational Complexity of the Housing Market%
\thanks{This project has received funding from the European Research Council (ERC) under the European Union’s Horizon 2020 research and innovation programme (grant agreement No.~949699). This material is based upon work supported by the National Science Foundation under Grant No. DMS-1928930 and by the Alfred P. Sloan Foundation under grant G-2021-16778, while Teytelboym was in residence at the Simons Laufer Mathematical Sciences Institute (formerly MSRI) in Berkeley, California, during the Fall 2023 semester.
We are grateful to Alex Hollender and Aviad Rubinstein for their comments.
}}
\author{%
Edwin Lock%
\footnote{Department of Economics, University of Oxford, \href{mailto:edwin.lock@economics.ox.ac.uk}{edwin.lock@economics.ox.ac.uk}}
\and
Zephyr Qiu%
\footnote{ETH Zürich, \href{mailto:zephyr.qiu.cn@gmail.com}{zephyr.qiu.cn@gmail.com}}%
\and Alexander Teytelboym%
\footnote{Department of Economics, University of Oxford, \href{mailto:alexander.teytelboym@economics.ox.ac.uk}{alexander.teytelboym@economics.ox.ac.uk}}%
}
\begin{document}

\maketitle

\begin{abstract}
We prove that the classic problem of finding a competitive equilibrium in an exchange economy with indivisible goods, money, and unit-demand agents is PPAD-complete. In this ``housing market'', agents have preferences over the house and amount of money they end up with, but can experience income effects. Our results contrast with the existence of polynomial-time algorithms for related problems: Top Trading Cycles for the ``housing exchange'' problem in which there are no transfers and the Hungarian algorithm for the ``housing assignment'' problem in which agents' utilities are linear in money. Along the way, we prove that the Rainbow-KKM problem, a total search problem based on a generalization by Gale of the Knaster–Kuratowski–Mazurkiewicz lemma, is PPAD-complete. Our reductions also imply bounds on the query complexity of finding competitive equilibrium.
\end{abstract}

\section{Introduction}
The trade of indivisible objects (e.g.,~houses) among unit-demand agents is one of the classic settings in market design. 
If there is no divisible num\'eraire commodity such as money---we refer to such an economy as a ``house exchange''---then Gale's Top Trading Cycles algorithm  finds the unique allocation in the weak core \citep{shapley1974cores,roth1977weak}. 
If agents' utility functions are linear in money---we refer to such a transferable utility economy as a ``housing assignment''---then an efficient allocation can be found by a straightforward solution (e.g.,~the Hungarian algorithm) to an assignment game \citep{koopmans1957assignment,shapley1971assignment}. 
However, both the assumptions of no transfers or of quasilinear utility are somewhat extreme, especially when one considers real-world trade of high-value objects such as houses. 
A more realistic assumption is that agents' utilities are non-linear in money, that is, agents experience income effects and their willingness to pay for a house might depend on their level of wealth (i.e., amount of money or the value of their own house). 
Surprisingly, a competitive equilibrium allocation always exists in such a ``housing market'', a result shown (under various assumptions) by \citet{quinzii1984core}, \citet{gale1984equilibrium} and \citet{svensson1984competitive}.
But while the computational properties of the Top Trading Cycles algorithm in the house exchange model and the Hungarian algorithm in the house assignment model have been known for decades, nothing is known about the computational complexity of finding equilibrium in the housing market.

In this paper, we settle the computational complexity of the housing market in a general version introduced by \citet{gale1984equilibrium}. In particular, we show that \housing{}, the computational problem of finding an approximate competitive equilibrium in the housing market model, is PPAD-complete \citep{papadimitriou1994complexity} even when the market consists of three agents with identical preferences. Moreover, we show that the query complexity of \housing{} with four or more agents is exponential in the approximation parameter. These hardness results stand in a stark contrast to the polynomial-time complexity of Top Trading Cycles algorithm for housing exchange and the Hungarian algorithm for housing assignment, and puts \housing{} in same complexity class as the computation of Arrow-Debreu equilibria \citep{chen2009settlingarrow} or Nash equilibria \citep{daskalakis2009complexity, chen2009settlingnash}.\footnote{See Part III in \citet{rubinstein2019hardness} for a superb overview of PPAD and for many results about important economic problems, such as computing an approximate Bayes-Nash equilibrium and an approximate competitive equilibrium from equal incomes, in this class.}

We study the complexity of the housing market problem in two computational models. In the \textit{white-box model}, also known as the polynomial function model, we assume that the preference functions of agents are given in the form of algorithms (or descriptions of Turing machines) with polynomial running time guarantees. We work in this model to show that \housing{} is PPAD-complete.
Secondly, we consider the \textit{black-box model} (or \textit{function oracle model}) in which the preference functions of agents are given by function oracles that can be queried to determine whether an agent demands a specific good at specific prices. The \textit{query complexity} is then defined as the number of queries needed to find a solution. Note that the \housing{} problem is at least as hard in the black-box model as in the white-box model, as the latter restricts the class of preferences to those expressible with polynomial-time functions. Moreover, the white-box model provides additional information about agent preferences through the algorithm implementing the preference functions. In the black-box model, we show exponential (in the approximation parameter) upper and lower bounds on the query complexity of \housing{} with $n$ agents.

The complexity class PPAD captures the computational complexity of many important problems that arise in economics and game theory. Two well-known PPAD-complete problems are \sperner{} and \cake{}. \sperner{} is the problem of finding a panchromatic triangle in a triangulation of a simplex or hypercube, while \cake{} captures the computational challenge of finding an envy-free allocation of a one-dimensional cake among players with heterogeneous preferences over pieces of cake. (See \cref{section:cake-cutting,section:sperner-to-rkkm,section:sperner-problem} for formal definitions of \cake{} and \sperner{}.)
To prove our main result that \housing{} is PPAD-complete, we first show that the problem is computationally equivalent to \rkkm{}, the generalization of the Knaster–Kuratowski–Mazurkiewicz lemma to multiple coverings due to \citet{gale1984equilibrium}. Our polynomial-time reductions between the two problems also imply that \housing{} and \rkkm{} have the same query complexity in the black-box model. We then develop polynomial-time reductions from two famous PPAD-complete problems (\cake{} and \spernerX{}) to \rkkm{}. Our reduction from \cake{} to \rkkm{} establishes lower bounds on the query complexity of \housing{}, as well as PPAD-hardness, for four or more agents. Surprisingly, our second reduction, from \spernerX{} to \rkkm{}, implies that \housing{} remains PPAD-hard even when the market consists of three agents and these agents have identical preferences. Finally, we show membership of \rkkm{} in PPAD by reducing \rkkm{} to \sperner{}. This reduction uses the technique of labeling and coloring the vertices of a triangulated simplex initially developed in \citet{su1999rental} to provide the first computational procedure for finding an envy-free cake-cutting solution; this technique was then refined in \citet{deng2012algorithmic} using Kuhn's triangulation to prove PPAD-hardness of \cake{}. Moreover, our reduction implies an exponential upper bound on the query complexity of \rkkm{}, which builds on a query complexity bound for \sperner{} shown in \citet{deng2011discrete}. Our results are summarized in \cref{table:results}, and the main polynomial-time reductions developed in this paper are depicted in \cref{fig:PPAD-reductions}.

\begin{figure}
\centering
\begin{tikzpicture}[xscale=1, yscale=0.8]
\tikzstyle{newproblem}=[draw, thick, rounded rectangle, fill=yellow!20, inner sep=8, outer sep=6]
\tikzstyle{knownproblem}=[draw, thick, rectangle, fill=blue!20, inner sep=8, outer sep=6]
\tikzstyle{reduction}=[->, ultra thick]
    \node[newproblem] (A) at (-6,0) {\housing{}};
    \node[knownproblem] (B) at (0,2.5) {\cake{}};
    \node[newproblem] (C) at (0,0) {\rkkm{}};
    \node[knownproblem] (D) at (0,-2.5) {\spernerX{}};
    \node[knownproblem] (E) at (6,0) {\sperner{}};
    \draw[reduction] (A) to[bend left=10] node[above, midway] {Thm~\ref{thm:housing-to-rkkm}} (C);
    \draw[reduction] (C) to[bend left=10] node[below, midway] {Thm~\ref{thm:rkkm-to-housing}} (A);
    \draw[reduction] (B) -- node[right, midway] {Thm~\ref{thm:cake-to-rkkm}} (C);
    \draw[reduction] (D) -- node[right, midway] {Thms~\ref{thm:sperner-to-kkm} \& \ref{thm:kkm-to-rkkm}} (C);
    \draw[reduction] (C) -- node[above, midway] {Thm~\ref{thm:rkkm-to-sperner}} (E);
\end{tikzpicture}
\caption{Illustration of the main reductions developed in this paper. The three problems shown in blue boxes are known to be PPAD-complete; \cake{} was shown to be PPAD-complete by \citep{deng2012algorithmic,hollender2023envy}, \spernerX{} in \citep{chen2009complexity} and \sperner{} is a canonical PPAD-complete problem introduced in \cite{papadimitriou1994complexity}. Arrows denote reductions from origin to destination.}
\label{fig:PPAD-reductions}
\end{figure}
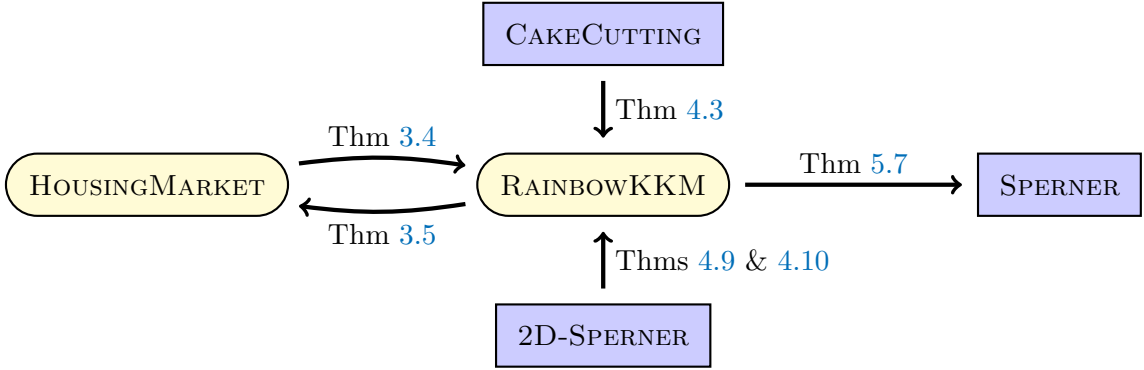

Our main result has implications for economic theory, computation and market design more broadly, as the housing market model is a key building block in understanding competitive equilibrium with indivisible goods in the presence of income effects \citep{baldwin2023equilibrium}. As the ability to compute equilibrium allocations is central to practical market design, our results suggest that more work is required to understand how to efficiently compute equilibrium allocation when budget constraints are a salient feature of the marketplace.

This paper relates to three strands of the literature.
Firstly, our results add to a growing body of work showing the PPAD-completeness of computing equilibria in markets \citep{chen2009settlingarrow,chen2009spending}. Our model, which guarantees existence of equilibrium with indivisible goods and income effects, is conceptually related to the work of \citet{chen2017complexity} which shows PPAD-hardness of solving Arrow-Debreu markets with non-monotone utilities (over divisible goods). It is also similar in spirit (although not in techniques) to the model of \citet{chen2022computational}, which studies equilibria in the classic unit-demand pseudomarket of \citet{hylland1979efficient}.  However, we prove our results in Gale's model of the housing market which is very general and allows for externalities and non-monotonicity of utility in money. It is an open question whether the PPAD-completeness survives with stronger assumptions on preferences (as, e.g., in the \citet{quinzii1984core} model).

Secondly, our reductions demonstrate parallels between \cake{} and the \housing{} and \rkkm{} problems. \cake{} was first shown to be PPAD-complete by \citet{deng2012algorithmic} in a model with demand functions. PPAD-completeness of \cake{} was then extended recently to the more stringent model with utility functions by \citet{hollender2023envy}. We make use of this latter cake cutting model when we reduce \cake{} to \housing{}. In Gale's model of the housing market, agent preferences are also represented by demand functions. It is an open question whether PPAD-hardness continues to hold for \housing{} when preferences are instead represented by utility functions. \cake{} has also been studied with externalities \citep{zhang2013externalities} and with respect to query complexity \citep{branzei2017query, hollender2023envy}.

Thirdly, our results are loosely related to a broader agenda on mechanism design with income effects (see, e.g.,  \citet{che1998standard,benoit2001multiple,bhattacharya2010budget,dobzinski2012multi,saitoh2008vickrey,morimoto2015strategy,baisa2017auction}). These papers typically consider the design of optimal mechanisms in the presence of income effects, whereas we focus on computation of competitive equilibrium prices and allocations.

\begin{table}[tb!]
    \centering
    \begin{tabular}{c|ccc}
        \toprule
        Complexity model    & $n=2$ & $n=3$ & $n \geq 4$\\
        \midrule
        white-box   & P     & PPAD-hard & PPAD-hard \\
        black-box   & $O(\log ( \frac{1}{\varepsilon}))$ & ? & $\Theta(\poly(\frac{1}{\varepsilon}))$\\
        \bottomrule
    \end{tabular}   
    \caption{The \textbf{computational complexity} (in the white-box model) and \textbf{query complexity} (in the black-box model) of the \housing{} problem with $n$ agents (and the computationally equivalent \rkkm{} problem). The two problems lie in PPAD for all $n$. The query complexity for $n=3$ is unknown but expected to be $\Theta(\poly(\frac{1}{\varepsilon}))$.}
    \label{table:results}
\end{table}

\paragraph{Organization.} The rest of the paper is organised as follows. At the end of this section, we introduce the PPAD complexity class and notation. \Cref{section:housing-and-rkkm} states the model, defines competitive equilibrium, introduces the Rainbow-KKM lemma used by \citet{gale1984equilibrium} to prove the  existence of competitive equilibrium in his model. In this section, we also provide initial reductions. \Cref{section:homeomorphism} establishes the computational equivalence of \housing{} and \rkkm{} in the white-box and black-box models. In \cref{section:ppadhardness}, we provide a reduction from \cake{} to \rkkm{} to derive a lower bound on the query complexity of \housing{} and \rkkm{}, and a reduction from \spernerX{} to \rkkm{} to establish PPAD-hardness with three identical agents. Section~\ref{section:membership} proves membership of \housing{} and \rkkm{} in PPAD, concluding our proof that the two problems are PPAD-complete, and gives an upper bound on the query complexity. Section~\ref{section:conclusion} is a conclusion.

\paragraph{The PPAD complexity class.}
Computational problems for which the existence of a solution is guaranteed are called \textit{total search problems}. The computational challenge is to find a solution. Examples of total search problems include integer factorisation, computing fixed points, and finding Nash equilibria. The class TFNP consists of all total search problems that can be solved in non-deterministic polynomial time; in other words, correctness of a solution to the problem can be verified in polynomial time. PPAD is the much-studied subclass of TFNP containing all problems that admit a polynomial-time reduction to the canonical problem \eol{} \citep{papadimitriou1994complexity}. For an introduction to PPAD and a definition of \eol{}, we refer to \citet{goldberg2011survey}.
A problem is PPAD-hard if \eol{} reduces to it, and PPAD-complete if it lies in PPAD and is PPAD-hard. PPAD-hard problems are believed not to be solvable in polynomial time.
As polynomial-time reductions are transitive, we can show membership of PPAD by reducing a problem to any other problem in PPAD, and PPAD-hardness by reducing from any problem that is PPAD-hard. Reductions between two problems also help us relate their query complexities.

\begin{definition}
For any two total search problems $\mathcal{P}$ and $\mathcal{Q}$, a \textit{polynomial-time reduction} from $\mathcal{P}$ to $\mathcal{Q}$ consists of two polynomial-time computable functions $f$ and $g$ such that $f$ maps any instance $x$ of $\mathcal{P}$ to an instance $f(x)$ of $\mathcal{Q}$, and $g$ maps any solution $s$ for $f(\xb)$ to a solution $g(s)$ for $\xb$.
\end{definition}

\paragraph{Notation.}
$\Z$ denotes the integers, $\N$ the positive natural numbers, and $\N_0$ the natural numbers including $0$. We write $[n] \coloneqq \{1, \ldots, n\}$ and $[n]_0 \coloneqq \{0,1,\ldots, n\}$. Let $\Delta_{n-1} \coloneqq \conv \{\eb^i \mid i \in [n] \}$ denote the standard $(n-1)$-simplex in $\R^n$. Throughout, the \textit{distance between two points $\xb, \yb \in \R^n$} is defined using the $L_1$ norm as $\|\xb - \yb \|_1 = \sum_{i \in [n]} |x_i - y_i|$. We write $\xb \geq \yb$ for two vectors $\xb$ and $\yb$ if the inequality holds elementwise. A function $f:X \to Y$ with $X,Y \subseteq \R^n$ is \emph{Lipschitz-continuous with constant $K$} (or \emph{$K$-Lipschitz}) if there exists a constant $K$ such that $\| f(\xb) - f(\yb) \|_1 \leq K \|\xb - \yb \|_1$ for all $\xb, \yb \in X$.

\section{The Housing Market and Rainbow-KKM problems}
\label{section:housing-and-rkkm}
We begin by introducing the \housing{} and \rkkm{} problems and developing initial results.

\subsection{The housing market}

Consider a market with $n$ agents $[n]$ that are each endowed with a house. Each house is identified with its agent, so the set of houses is $[n]$. The goal is to find a price for each house and an assignment of houses to agents so that every agent receives a house they prefer at these market prices.

The demand of each agent $i \in [n]$ is expressed using $n+1$ \emph{preference sets} $P^i = (P^i_0, \ldots, P^i_{n})$ that cover $\mathbb{R}^n$. At market prices $\pb \in \R^n$, agent $i$ demands good $j$ if $\pb \in P^i_j$, and nothing if $\pb \in R^i_0$. If the agent demands nothing, she prefers to exchange her house for money without obtaining a new house. Note that an agent may be indifferent between multiple houses (or between demanding nothing and demanding one or more houses) if preference sets overlap at market prices. \citet{gale1984equilibrium} imposes the following mild assumptions on the preference set of every agent $i \in [n]$. Let $B_n \coloneqq \{\bm{p} \in \mathbb{R}^n \mid \bm{p} \geq 0, p_j = 0 \text{ for some } j \in [n]\}$ be the boundary of the positive quadrant. For each agent $i$, we assume that

\begin{enumerate}[(i)]
    \item \label{assumption:closed-preference-sets} sets $P^i_0, P^i_1, \ldots, P^i_n$ are closed,
    \item \label{assumption:bounding-box} there exists some $M>0$ such that $\bm{p} \not \in P^i_j$ if $p_j \geq M$,
    \item \label{assumption:covering} $P^i_1, \ldots, P^i_{n}$ cover $\Sigma^M_{n-1} = \{\pb \in B_n \mid 0 \leq \pb \leq M \}$.
\end{enumerate}
Without loss of generality, we assume $M=1$ (re-scaling prices if necessary) and write $\Sigma_{n} \coloneqq \Sigma^M_{n}$.

Agents only wish to own at most one house, but otherwise the permissible preferences are very general: they allow for externalities among objects and demand to increase in prices (e.g.~Veblen goods). An \textit{assignment} is a permutation of $[n]$ that maps agents to houses. A \textit{market equilibrium} consists of market prices $\pb$ and an assignment $\pi$ such that every agent receives a house they demand at these prices; more formally, $\pb \in P^i_{\pi(i)}$ for every agent $i \in [n]$. An \textit{$\varepsilon$-equilibrium} is a pair $(\pb,\pi)$ such that $\pb$ is $\varepsilon$-close to $P^i_{\pi(i)}$ for every agent $i$, i.e.~there exists $\pb^i \in R^i_{\pi(i)}$ such that $\|\pb - \pb^i \|_1 \leq \varepsilon$ for every $i \in [n]$. \citet{gale1984equilibrium} establishes that assumptions \eqref{assumption:closed-preference-sets} to \eqref{assumption:covering} guarantee the existence of market equilibrium. 

\begin{theorem}[\citep{gale1984equilibrium}]
\label{thm:housing-equilibrium-existence}
    Under assumptions (i) to (iii), there exists a market equilibrium.
\end{theorem}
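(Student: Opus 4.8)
The plan is to recast the statement as an instance of the Rainbow-KKM lemma of \citet{gale1984equilibrium}. First I would note that it is enough to find an equilibrium whose prices lie in the region $\Sigma_{n-1}$: by assumption~\eqref{assumption:covering} the sets $P^i_1, \ldots, P^i_n$ already cover $\Sigma_{n-1}$, so writing $D^i_j \coloneqq P^i_j \cap \Sigma_{n-1}$ for each agent $i$ and house $j \in [n]$, a pair $(\pb, \pi)$ with $\pb \in \Sigma_{n-1}$ and $\pb \in D^i_{\pi(i)}$ for all $i$ is precisely a market equilibrium. For each fixed $i$, the sets $D^i_1, \ldots, D^i_n$ are closed (assumption~\eqref{assumption:closed-preference-sets}) and cover $\Sigma_{n-1}$.

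To place this in the standard KKM setting I would transport the data along a homeomorphism $\phi \colon \Sigma_{n-1} \to \Delta_{n-1}$. A convenient choice is $\phi(\pb)_j = (1 - p_j) / \sum_k (1 - p_k)$, with inverse $\phi^{-1}(\qb) = \bm{1} - \qb / \max_k q_k$; both are well defined and continuous because $\pb \in \Sigma_{n-1}$ forces $\sum_k (1-p_k) \geq 1$ while $\qb \in \Delta_{n-1}$ forces $\max_k q_k > 0$, and a short computation shows the two composites are the identity. The crucial feature of $\phi$ is that it carries the face $\{\pb \in \Sigma_{n-1} : p_j = 1\}$ bijectively onto the facet $\{\qb \in \Delta_{n-1} : q_j = 0\}$. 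Setting $C^i_j \coloneqq \phi(D^i_j)$, these are closed subsets of $\Delta_{n-1}$ whose union (for fixed $i$) is all of $\Delta_{n-1}$, and assumption~\eqref{assumption:bounding-box} with $M = 1$ --- which says $\pb \notin P^i_j$ whenever $p_j = 1$ --- translates into $C^i_j \cap \{\qb \in \Delta_{n-1} : q_j = 0\} = \emptyset$. Covering together with this facet-avoidance implies the KKM boundary condition for the family $(C^i_1, \ldots, C^i_n)$: a point of the face $\conv\{\eb^k : k \in S\}$ lies in some $C^i_j$, and $j$ cannot lie outside $S$, as then its $j$-th coordinate would vanish.

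This furnishes, for each of the $n$ agents, a KKM covering of $\Delta_{n-1}$. Applying the Rainbow-KKM lemma yields a single point $\qb \in \Delta_{n-1}$ and a permutation $\pi$ of $[n]$ with $\qb \in C^i_{\pi(i)}$ for every $i$. Then $\pb \coloneqq \phi^{-1}(\qb) \in \Sigma_{n-1}$ satisfies $\pb \in P^i_{\pi(i)}$ for all $i$, i.e.\ $(\pb, \pi)$ is a market equilibrium, completing the proof.

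The substantive point is identifying the correct KKM boundary condition on $\Sigma_{n-1}$: it is not visible there directly, and what supplies it is assumption~\eqref{assumption:bounding-box} read through $\phi$ --- so I expect the part needing care to be checking that $\phi$ behaves as claimed on the $n$ relevant facets. The remaining ingredients (closedness of the $D^i_j$ and $C^i_j$, the covering properties, and the permutation bookkeeping) are routine. Of course this only reduces the theorem to the Rainbow-KKM lemma; a fully self-contained argument would additionally prove Rainbow-KKM --- e.g.\ from Sperner's lemma or Brouwer's fixed point theorem --- which is in essence the object studied computationally in the rest of this paper as \rkkm{}.
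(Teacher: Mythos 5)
Your proof is correct and follows the same overall strategy as Gale's original argument, which the paper cites but does not reproduce (it also underlies the reduction in \cref{thm:housing-to-rkkm}): transport the preference sets from $\Sigma_n$ to $\Delta_{n-1}$ along a homeomorphism that maps $\{p_j = 1\}$ to the facet $\{q_j = 0\}$, check that assumptions (i)--(iii) yield a family of KKM coverings, and invoke Rainbow-KKM. The one genuine difference is your choice of homeomorphism. Gale and the paper use the piecewise-affine map $\phi(\pb)_{\pi(k)} = (1-p_{\pi(1)})/n + (p_{\pi(1)}-p_{\pi(2)})/(n-1) + \cdots + (p_{\pi(k-1)}-p_{\pi(k)})/(n-k+1)$ defined per sorting region $\Sigma_\pi$, while you use the single global formula $\phi(\pb)_j = (1-p_j)/\sum_k(1-p_k)$ with inverse $\phi^{-1}(\qb)_j = 1 - q_j/\max_k q_k$. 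Your map is cleaner for the purely existential claim; it is well defined on $\Sigma_n$ (since $\min_j p_j = 0$ forces $\sum_k(1-p_k) \geq 1$), is a continuous bijection between compact sets, and sends $p_j = 1$ to $q_j = 0$ as required, so the facet-avoidance argument goes through exactly as you state it. What the paper's piecewise-affine choice buys is a cleaner explicit Lipschitz analysis ($n$-Lipschitz one way, $n^2$-Lipschitz the other), which matters for its quantitative reductions between \emph{approximate} solutions but not for the existence theorem you are proving; your map is also Lipschitz with a polynomial constant, just a slightly worse one. One small remark: the verification that the two composites are the identity uses $\min_j p_j = 0$ on $\Sigma_n$ (so that $\max_j \phi(\pb)_j = 1/\sum_k(1-p_k)$), and the reader should be shown this line — it is the only place the structure of $\Sigma_n$ enters the bijectivity check.
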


Hence finding an (approximate) equilibrium is a total search problem. As market prices can, in general, be irrational, we are interested in finding an $\varepsilon$-equilibrium for some approximation parameter $\varepsilon$. Moreover, we formalise preference sets in the computational setting by associating the preference sets $P^i = (P^i_1, \ldots, P^i_n)$ of each agent $i \in [n]$ with a \textit{preference function} $f^i:\R^n \times [n]_0 \to \{0,1\}$ so that $f^i(\pb,j) \coloneqq 1$ if $\pb \in C^i_j$ and $f^i(\pb,j) \coloneqq 0$ otherwise. We say that a preference function satisfies assumptions (i) to (iii) if its associated preference sets do so. This allows us to define the computational problem as follows.

\begin{problem}{\housing{}}
\textbf{Input:} An approximation parameter $\varepsilon \in (0, \frac{1}{4})$. Preference functions $f^1, \ldots, f^n$ satisfying assumptions (i) to (iii) for the agents $[n]$.

\textbf{Output:} An $\varepsilon$-equilibrium $(\pb,\pi)$.
\end{problem}
We refer to the computational problem with a fixed number $n$ of agents as $n$-\housing{}. In the black-box model, we assume that $f^1, \ldots, f^n$ are function oracles. In the white-box model, $f^1, \ldots, f^n$ are given as polynomial-time algorithms.

In \cref{thm:n-to-n+1}, we show that $n$-\housing{} reduces in polynomial time to $(n+1)$-\housing{} with the same approximation parameter. Starting with a market for $n$ agents, the reduction adds an additional agent $n+1$ and corresponding house $n+1$. Preferences are designed so that, for any equilibrium prices $\pb \in \R^{n+1}$ of the new market, agents $i \in [n]$ demand the same houses at $\pb$ in the new market as they do at $(p_1, \ldots, p_n)$ in the original market. Hence the prices and allocation restricted to $[n]$ form an equilibrium for the original market. It is also clear from the reduction in the proof of \cref{thm:n-to-n+1} that the two problems have the same query complexity.

\begin{theorem}
\label{thm:n-to-n+1}
For any $n \geq 1$, there exists a polynomial-time reduction from $n$-\housing{} to $(n+1)$-\housing{}.
\end{theorem}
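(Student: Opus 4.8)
The plan is to start from an $n$-agent market with preference functions $f^1,\dots,f^n$ (with $M=1$ by the standing normalisation) and construct an $(n+1)$-agent market whose preference functions $g^1,\dots,g^{n+1}$ encode the original, so that any $\varepsilon$-equilibrium of the new market restricts to one of the original market. For an original agent $i\in[n]$ I would let it ignore the price $p_{n+1}$ of the new house by setting $Q^i_j \coloneqq \{\pb\in\R^{n+1} : (p_1,\dots,p_n)\in P^i_j\}$ for every $j\in[n]_0$; these sets are closed, satisfy assumption (ii) with $M=1$ (since $P^i_j\subseteq\{p_j<1\}$), and together cover $\R^{n+1}$. The one thing they fail is assumption (iii): a point $\pb\in\Sigma_n$ with $p_{n+1}>0$ has a zero coordinate in $[n]$, so $(p_1,\dots,p_n)\in\Sigma_{n-1}$ and is covered by the original (iii); but a point of $\Sigma_n$ with $p_{n+1}=0$ may have $(p_1,\dots,p_n)$ in the interior of $[0,1]^n$, which need not be covered. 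I patch this by adding $Q^i_{n+1}\coloneqq\{\pb : p_{n+1}\le 0\}$, which picks up exactly the face $p_{n+1}=0$ of $\Sigma_n$; crucially this slab sits in $\{p_{n+1}\le 0\}$, far from where $\varepsilon$-equilibrium prices of house $n+1$ will turn out to lie.

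For the new agent $n+1$ I would make house $n+1$ its default demand, with houses in $[n]$ available only at price $0$ once house $n+1$ is expensive: set $Q^{n+1}_{n+1}\coloneqq\{\pb : p_{n+1}\le \tfrac12\}$, $Q^{n+1}_j\coloneqq\{\pb : p_{n+1}\ge\tfrac12 \text{ and } p_j\le 0\}$ for $j\in[n]$, and $Q^{n+1}_0\coloneqq\{\pb : p_{n+1}\ge\tfrac12\}$. One checks directly that these sets are closed, satisfy (ii) with $M=1$ (each $Q^{n+1}_j$, $j\le n$, forces $p_j=0$; $Q^{n+1}_{n+1}$ forces $p_{n+1}\le\tfrac12$), cover $\R^{n+1}$ (via $Q^{n+1}_0\cup Q^{n+1}_{n+1}$), and cover $\Sigma_n$ (if $p_{n+1}\le\tfrac12$ use $Q^{n+1}_{n+1}$; otherwise $\pb\in\Sigma_n$ has some $p_j=0$ with $j\in[n]$, so use $Q^{n+1}_j$). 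Hence the new market is a legal instance of $(n+1)$-\housing{} with the same $\varepsilon$, and it has an equilibrium by \cref{thm:housing-equilibrium-existence}. All the $Q$'s are computable in polynomial time from the $f^i$ (each query reduces to at most one query to some $f^i$), so the instance map is a polynomial-time reduction preserving query complexity.

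The heart of the argument is the solution map. I claim that in any $\varepsilon$-equilibrium $(\pb^*,\pi^*)$ of the new market, $\pi^*(n+1)=n+1$. Suppose instead $\pi^*(n+1)=j^*\in[n]$. Since $\pi^*$ is a permutation of $[n+1]$, house $n+1$ is assigned to some agent $i_0\ne n+1$, hence $i_0\in[n]$. Being $\varepsilon$-close to $Q^{n+1}_{j^*}\subseteq\{p_{n+1}\ge\tfrac12\}$ gives $p^*_{n+1}\ge\tfrac12-\varepsilon$, while being $\varepsilon$-close to $Q^{i_0}_{n+1}=\{p_{n+1}\le 0\}$ gives $p^*_{n+1}\le\varepsilon$; together $\tfrac12-\varepsilon\le\varepsilon$, i.e.\ $\varepsilon\ge\tfrac14$, contradicting $\varepsilon\in(0,\tfrac14)$. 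So $\pi^*(n+1)=n+1$ and $\pi\coloneqq\pi^*|_{[n]}$ is a permutation of $[n]$. For each $i\in[n]$, $\varepsilon$-closeness of $\pb^*$ to $Q^i_{\pi(i)}=\{\pb : (p_1,\dots,p_n)\in P^i_{\pi(i)}\}$ immediately yields that $(p^*_1,\dots,p^*_n)$ is $\varepsilon$-close to $P^i_{\pi(i)}$, since projecting onto the first $n$ coordinates does not increase the $L_1$ distance. Thus $g:(\pb^*,\pi^*)\mapsto\big((p^*_1,\dots,p^*_n),\pi\big)$ returns an $\varepsilon$-equilibrium of the original market, completing the reduction.

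The only genuine obstacle is the tension between assumption (iii) — the new agents must demand \emph{some} house everywhere on $\Sigma_n$ — and the requirement that house $n+1$ stay out of the original agents' "real" demand so that the permutation is forced to send $n+1\mapsto n+1$. The resolution above confines all the auxiliary demand ($Q^i_{n+1}$ for $i\le n$, and agent $n+1$'s off-default demand) to thin slabs near $\{p_{n+1}=0\}$ and $\{p_{n+1}=\tfrac12\}$, which the slack $\varepsilon<\tfrac14$ keeps disjoint in any $\varepsilon$-equilibrium; everything else is routine verification of assumptions (i)--(iii).
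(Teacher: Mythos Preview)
Your proposal is correct and follows essentially the same approach as the paper: you add a dummy agent and house, lift the original agents' preferences by ignoring the new coordinate, confine all auxiliary demand for house $n{+}1$ to thin slabs in the $p_{n+1}$-direction, and use the gap $\varepsilon<\tfrac14$ to force $\pi^*(n{+}1)=n{+}1$ before projecting back. The only differences from the paper are cosmetic (you use thresholds $0$ and $\tfrac12$ where the paper uses $0$ and $\tfrac34$, and half-spaces $\{p_{n+1}\le 0\}$, $\{p_j\le 0\}$ where the paper uses the hyperplanes $\{p_{n+1}=0\}$, $\{p_j=0\}$), and your contradiction $\tfrac12-\varepsilon\le\varepsilon$ is argued directly on $p^*_{n+1}$ rather than via the witness points $\qb^i$; the substance is identical.
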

\begin{proof}
Suppose $(\varepsilon, f^1, \ldots, f^n)$ is an instance of $n$-\housing{} associated with preference sets $P^1, \ldots, P^n$ for $n$ agents. We construct an instance of $(n+1)$-\housing{} with the same approximation parameter and preference sets $Q^1, \ldots, Q^{n+1}$ for $n+1$ agents as follows. For the first $n$ agents $i \in [n]$, we extend the old preference sets $P^i_j$ by defining 
\begin{equation}
\label{eq:preference-sets-construction}
\begin{aligned}
Q^i_0 &\coloneqq \R^n, \\
Q^i_j &\coloneqq \left \{ \qb \in \R^{n+1} \mid (q_1, \ldots, q_n) \in P^i_j \right \} \text{ for all } j \in [n], \\
Q^i_{n+1} &\coloneqq \left \{ \qb \in \R^{n+1} \mid q_{n+1} = 0 \right \}.
\end{aligned}
\end{equation}
For the last agent $n+1$, we define
\begin{equation}
\label{eq:preference-sets-construction-2}
\begin{aligned}
Q^{n+1}_0 &\coloneqq \R^n, \\
Q^{n+1}_j &\coloneqq \left \{ \qb \in \R^n \mid q_j = 0 \text{ and } q_{n+1} \geq \tfrac{3}{4} \right \} \text{ for all } j \in [n], \\
Q^{n+1}_{n+1} &\coloneqq \left \{ \qb \in \R^n \mid q_{n+1} \leq \tfrac{3}{4} \right \}.
\end{aligned}
\end{equation}

It is straightforward that the preference functions for $(Q^i_1, \ldots, Q^i_n)$ can be constructed efficiently and make at most one call to preference function $f^i$ for $(P^i_1, \ldots, P^i_n)$. We argue that these preference sets satisfy Gale's assumptions \eqref{assumption:closed-preference-sets} to \eqref{assumption:covering}. It is immediate that $Q^i_j$ is closed for all $i,j \in [n+1]$, so \eqref{assumption:closed-preference-sets} holds. Secondly, \eqref{assumption:bounding-box} follows from the fact that $P^i_j$ also satisfies \eqref{assumption:bounding-box} when $i,j \in [n]$, and is immediate for $i=n+1$ or $j = n+1$. Finally, we show that \eqref{assumption:covering} holds. Indeed, suppose $\qb \in \Sigma_{n+1}$. If $q_{n+1} = 0$, then $\qb \in Q^i_{n+1}$ for all agents $i \in [n+1]$. If $q_{n+1} > 0$, then $q_j = 0$ for some other house $j \in [n]$. For agents $i \in [n]$, this implies that $(q_1, \ldots, q_n) \in P^i_k$ for some $k \in [n]$ and so $\qb \in Q^i_k$, as the preference sets $(P^i_1, \ldots, P^i_n)$ satisfy $\eqref{assumption:covering}$. For agent $n+1$, we see that $\qb \in Q^{n+1}_{j}$ or $\qb \in Q^{n+1}_{n+1}$.

Now let $(\qb,\tau)$ be an $\varepsilon$-equilibrium of the new market with $n+1$ agents. By definition, there exists $\qb^i \in Q^i_{\tau(i)}$ with $\| \qb - \qb^i\|_1 \leq \varepsilon$ for every $i \in [n+1]$. We argue that $\tau$ assigns house $n+1$ to agent $n+1$. Suppose not, so $\tau(n+1) < n+1$ and $\tau(i) = n+1$ for some agent $i \in [n]$. By construction of $Q^i_{n+1}$ we have $q^i_{n+1} = 0$. It follows that $q^{n+1}_{n+1} \leq 2 \varepsilon \leq \frac{1}{2}$. But then $\qb^{n+1} \notin Q^{n+1}_{\tau(n+1)}$, a contradiction. Define the permutation $\pi$ of $[n]$ as the restriction of $\tau$ to $[n]$ and $\pb \coloneqq (q_1, \ldots, q_n)$. We now show that $(\pb,\pi)$ is an $\varepsilon$-equilibrium for the original market with $n$ agents. Let $\pb^i \coloneqq (q^i_1, \ldots, q^i_n)$ for each $i \in [n]$. We see that $\qb^i \in Q^i_{\tau(i)}$ implies $\pb^i \in P^i_{\pi(i)}$ by construction of $Q^i_j$. Moreover, $\| \pb - \pb^i \|_1 \leq \| \qb - \qb^i\| \leq \varepsilon$ for all $i \in [n]$.
\end{proof}

\subsection{The \kkm{} and \rkkm{} problems}
\label{section:rkkm}
We now introduce the KKM and Rainbow-KKM lemmas, as well as their corresponding computational problems. The \rkkm{} problem will serve as our intermediary problem to show that \housing{} is PPAD-complete.
Recall the standard $(n-1)$-simplex $\Delta_{n-1} \coloneqq \conv \{\eb^i \mid i \in [n] \}$. For any $S \subseteq [n]$, we let $F_S \coloneqq \conv \{\eb^i \mid i \in S\}$ be the face of $\Delta_{n-1}$ spanned by vectors $\eb^i$ for all $i \in S$. A \emph{KKM covering of $\Delta_{n-1}$} is a collection of $n$ closed subsets $C_1, \ldots, C_n$ of $\R^n$ such that $F_S \subseteq \bigcup_{i \in S} C_i$ for every $S \subseteq [n]$.
The Knaster-Kuratowski-Mazurkiewicz (KKM) lemma states that every KKM covering $(C_1, \ldots, C_n)$ of $\Delta_{n-1}$ admits a point that is contained in all $C_i$, $i \in [n]$. For the proof of \cref{thm:housing-equilibrium-existence}, \citet{gale1984equilibrium} extends this result to families of KKM coverings.

\begin{lemma}[\citep{knaster1929beweis}]
\label{lemma:KKM-point}
    Let $(C_1, \ldots, C_n)$ be a KKM covering of $\Delta_{n-1}$. There exists a point $\xb \in \Delta_{n-1}$ that lies in $\bigcap_{i \in [n]}C_i$.
\end{lemma}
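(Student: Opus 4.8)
The plan is to derive the KKM lemma from Sperner's lemma together with a compactness argument, following the classical route. First I would fix a sequence of triangulations $T_1, T_2, \ldots$ of $\Delta_{n-1}$ whose mesh tends to $0$ (e.g.\ iterated barycentric subdivisions). Given a triangulation $T_k$, I would label each of its vertices $\vb$ as follows: let $S(\vb) \coloneqq \{i \in [n] \mid v_i > 0\}$ be the carrier of $\vb$, i.e.\ the index set of the minimal face $F_{S(\vb)}$ containing $\vb$. Since $(C_1, \ldots, C_n)$ is a KKM covering, $\vb \in F_{S(\vb)} \subseteq \bigcup_{i \in S(\vb)} C_i$, so there is some $i \in S(\vb)$ with $\vb \in C_i$; assign such an $i$ as the label $\ell(\vb)$ of $\vb$.

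By construction this $\ell$ is an admissible Sperner labeling: a vertex lying on a face $F_S$ has carrier contained in $S$, hence receives a label in $S$, and in particular each vertex $\eb^i$ receives label $i$. Sperner's lemma then yields, for every $k$, a simplex $\sigma_k$ of $T_k$ whose $n$ vertices carry all $n$ labels; writing $\vb^{k,i}$ for the vertex of $\sigma_k$ labelled $i$, we have $\vb^{k,i} \in C_i$ for every $i \in [n]$ and every $k$.

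To conclude I would pass to the limit. The vertices $\vb^{k,1}$ lie in the compact set $\Delta_{n-1}$, so along a subsequence they converge to some $\xb \in \Delta_{n-1}$; since $\operatorname{diam}(\sigma_k)$ is at most the mesh of $T_k$ and thus tends to $0$, along the same subsequence $\vb^{k,i} \to \xb$ for every $i$. For each fixed $i$, the points $\vb^{k,i}$ lie in $C_i$ and converge to $\xb$, and $C_i$ is closed, so $\xb \in C_i$; therefore $\xb \in \bigcap_{i \in [n]} C_i$, as required. The one point requiring care — rather than a genuine obstacle — is the well-definedness of the labeling: this is precisely where the KKM hypothesis must be invoked at the minimal carrier face $F_{S(\vb)}$ of each vertex and not merely at $\Delta_{n-1}$ itself, and it is exactly this choice of carrier that makes $\ell$ admissible for Sperner's lemma. (One could instead deduce the lemma from Brouwer's fixed point theorem, but the Sperner route is closer to the combinatorial machinery used elsewhere in the paper.)
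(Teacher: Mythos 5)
Your proof is correct and is the classical derivation of the KKM lemma from Sperner's lemma plus a compactness argument. The paper itself does not prove this lemma; it simply states it and cites the original 1929 reference, so there is no in-paper proof to compare against. That said, your labeling-by-carrier-and-covering-set device is exactly the combinatorial mechanism the paper later deploys (in a quantitative, algorithmic form) in its reduction from \rkkm{} to \sperner{} in Section~\ref{section:membership}, where a labeling $\L$ and a coloring $c$ together select, for each vertex, a covering set containing (the barycentric coordinates of) that vertex; so your route is very much in the spirit of the paper's machinery. One small presentational remark: you correctly identify the well-definedness of $\ell$ as the place where the KKM hypothesis is used, but it is worth being explicit that you need the covering property on \emph{every} face $F_S$, not just on $\Delta_{n-1}$, since a boundary vertex $\vb$ with $v_i = 0$ might otherwise lie only in some $C_i$ with $i \notin S(\vb)$, giving an inadmissible label; your invocation of $F_{S(\vb)} \subseteq \bigcup_{i \in S(\vb)} C_i$ handles this correctly.
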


\begin{lemma}[\citep{gale1984equilibrium}]
\label{lemma:RKKM-point}
Let $C^1, \ldots, C^n$ be $n$ KKM coverings of $\Delta_{n-1}$. There exists a permutation $\pi$ of $[n]$ and a point $\xb \in \Delta_{n-1}$ such that $\xb \in C^i_{\pi(i)}$ for every $i \in [n]$.
\end{lemma}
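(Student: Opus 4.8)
The plan is to follow Gale's original argument, which is a \emph{colorful} refinement of the Sperner-lemma proof of \cref{lemma:KKM-point}. First I would fix a sequence of triangulations $\mathcal{T}_1, \mathcal{T}_2, \ldots$ of $\Delta_{n-1}$ whose mesh tends to $0$ (e.g.\ iterated barycentric subdivisions, or Kuhn's triangulation at finer and finer scales). The KKM hypotheses are then used to produce, for each triangulation, $n$ \emph{proper Sperner labelings} simultaneously. For a vertex $v$ of $\mathcal{T}_k$ write $\mathrm{carr}(v) \coloneqq \{ j \in [n] \mid v_j > 0 \}$ for its carrier, so $v$ lies in the face $F_{\mathrm{carr}(v)}$ and in no smaller face. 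Since $(C^i_1, \ldots, C^i_n)$ is a KKM covering, $F_{\mathrm{carr}(v)} \subseteq \bigcup_{j \in \mathrm{carr}(v)} C^i_j$, so there is some $j \in \mathrm{carr}(v)$ with $v \in C^i_j$; I would set $\lambda_i(v)$ to be the least such $j$. Each map $\lambda_i \colon V(\mathcal{T}_k) \to [n]$ then satisfies $\lambda_i(v) \in \mathrm{carr}(v)$, i.e.\ it is a Sperner labeling (in particular $\lambda_i(\eb^\ell) = \ell$ on the vertices of $\Delta_{n-1}$), and by construction $v \in C^i_{\lambda_i(v)}$.

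The core combinatorial step is a \emph{colorful} (permutation) version of Sperner's lemma: given $n$ Sperner labelings $\lambda_1, \ldots, \lambda_n$ of a triangulation of $\Delta_{n-1}$, there is a cell $\sigma = \{v^1, \ldots, v^n\}$ of the triangulation and a permutation $\pi$ of $[n]$ such that $\lambda_i(v^i) = \pi(i)$ for every $i \in [n]$ (when the $\lambda_i$ all coincide this just asserts the existence of a fully-labeled cell). Applying this to $\mathcal{T}_k$ yields a cell $\sigma_k = \{v^{k,1}, \ldots, v^{k,n}\}$ and a permutation $\pi_k$ with $v^{k,i} \in C^i_{\pi_k(i)}$ for all $i$. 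Since there are only $n!$ permutations, after passing to a subsequence we may assume $\pi_k \equiv \pi$ is fixed; since $\Delta_{n-1}$ is compact and $\mathrm{diam}(\sigma_k) \to 0$, after a further subsequence the cells $\sigma_k$ shrink to a single point $\xb \in \Delta_{n-1}$, in the sense that $v^{k,i} \to \xb$ for every $i$. Finally, each $C^i_{\pi(i)}$ is closed and contains the convergent sequence $(v^{k,i})_k$, so $\xb \in C^i_{\pi(i)}$ for every $i$, which is the desired conclusion. (Taking all $C^i$ equal recovers \cref{lemma:KKM-point}.)

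The one genuinely non-routine ingredient is the colorful Sperner lemma; I expect it to be the main obstacle. It does not reduce to ordinary Sperner by a black-box argument, since a single labeling cannot see the interaction among the $n$ coverings. Instead one proves it by a parity / path-following argument in the spirit of the standard proof of Sperner's lemma — tracking a suitable notion of ``almost colorful'' facets of cells so that the relevant count is odd — or by induction on $n$, as in Bapat's constructive proof. This is also the step that would later need to be made quantitative (with an explicit triangulation and mesh) when the existence statement is turned into the reduction to \sperner{} in \cref{section:membership}; for the present lemma only the qualitative conclusion is required, so I would either invoke that combinatorial result directly or, for self-containedness, include its short path-following proof.
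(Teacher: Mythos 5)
The paper cites this lemma to \citet{gale1984equilibrium} and does not reproduce a proof, so there is no in-paper argument to compare against line by line. Your proposal is a correct reconstruction of Gale's original argument: build KKM-induced Sperner labelings $\lambda_1,\ldots,\lambda_n$ on finer and finer triangulations, invoke Gale's permutation (colorful) version of Sperner's lemma to get, for each mesh, a cell and a permutation, pass to a subsequence over the finitely many permutations, use compactness to shrink the cells to a point, and use closedness of the covering sets to conclude. It is worth noting, though, that the machinery the paper builds in \cref{section:membership} (the reduction of \cref{thm:rkkm-to-sperner}), applied at increasingly fine mesh, gives a genuinely different proof that sidesteps the colorful Sperner lemma --- precisely the ingredient you flag as the one non-routine obstacle. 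There, instead of $n$ independent Sperner labelings, one fixes a single auxiliary Simmons--Su labeling $\L$ of the Kuhn triangulation, which pre-assigns each vertex a covering index so that every cell's vertices receive all $n$ distinct labels, and then defines \emph{one} ordinary Sperner coloring $c(\vb)=\min\{j : \alpha(\vb)\in C^{\L(\vb)}_j\}$. Ordinary Sperner's lemma then yields a panchromatic cell whose $i$-labeled vertex lies in $C^i_{\pi(i)}$ for a permutation $\pi$, and the same compactness argument finishes. So your remark that the colorful statement ``does not reduce to ordinary Sperner by a black-box argument'' is a bit too strong: the labeling-then-coloring device is exactly such a reduction, and it is what the paper uses because it lands directly on the canonical \sperner{} problem --- which is what the membership-in-PPAD argument needs --- rather than on a colorful variant whose PPAD membership would need its own justification. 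Your qualitative proof is fine as written; when you make it quantitative, as you anticipate, swap the colorful Sperner lemma for this labeling-then-coloring construction.
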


The proofs of \cref{lemma:KKM-point,lemma:RKKM-point} do not lead to an efficient way to compute the point $\xb$ in question. Moreover, as $\xb$ can be irrational, we consider the computational problems of finding an approximate solution. To formalize KKM coverings in our computational models, we associate a KKM covering $(C_1, \ldots, C_n)$ of $\Delta_{n-1}$ with a \textit{KKM covering function $g:\R^n \times [n] \to \{0,1\}$} by defining $g(\xb,j) \coloneqq 1$ if $\xb \in C^i_j$ and $g^i(\xb,j) \coloneqq 0$ otherwise. This leads to the following computational problems in which the covering functions are given as function oracles or as polynomial-time algorithms.

\begin{problem}{\kkm{}}
\textbf{Input:} Approximation parameter $\varepsilon \in (0, \frac{1}{4})$. Dimension $n$. KKM covering function $g$ of $\Delta_{n-1}$.

\textbf{Output:} Point $\xb$ for which $\| \xb - \xb^i \|_1 \leq \varepsilon$ for some $\xb^i \in C_i$ for every $i \in [n]$.
\end{problem}
We say that a pair $(\xb, \pi)$ consisting of point $\xb \in \Delta_{n-1}$ and permutation $\pi$ of $[n]$ is an \textit{$\varepsilon$-approximate Rainbow-KKM solution} if there exists $\xb^i \in C^i_{\pi(i)}$ with $\|\xb - \xb^i\|_1 \leq \varepsilon$ for every $i \in [n]$. We refer to the \kkm{} problem with fixed dimension $n=3$ as 3D-\kkm{}.

\begin{problem}{\rkkm{}}
\textbf{Input:} Approximation parameter $ \varepsilon \in (0, \frac{1}{4})$. KKM covering functions $g^1, \ldots, g^n$ of $\Delta_{n-1}$.

\textbf{Output:} $\varepsilon$-approximate Rainbow-KKM solution $(\xb,\pi)$.
\end{problem}
We refer to the \rkkm{} problem with fixed dimension $n$ as $n$-\rkkm{}. As we will see in \cref{section:sperner-to-rkkm}, finding an approximate Rainbow-KKM point is at least as hard as finding an approximate KKM point.

\subsection{Sparse KKM coverings}
In \cref{section:membership}, we reduce \rkkm{} to the problem \sperner{}. In the reduction, we assume that each KKM covering $(C_1, \ldots, C_n)$ of $\Delta_{n-1}$ satisfies $C_i \cap F_{[n] \setminus \{i\}} = \emptyset$ for every $i \in [n]$. We call such a KKM covering \textit{sparse}. We now argue that this is without loss of generality, as we can reduce any \rkkm{} instance to an instance with sparse KKM coverings. Note that we cannot simply make a KKM covering $(C_1, \ldots, C_n)$ sparse by subtracting $F_{[n] \setminus \{i\}}$ from $C_i$ for each $i \in [n]$, as the resulting sets are not closed.

Suppose $(C_1, \ldots, C_n)$ is a KKM covering of $\Delta_{n-1}$. We now construct a KKM covering $(D_1, \ldots, D_n)$ parametrised by $\delta > 0$ that is sparse. Moreover, we will see that for every $\xb \in D_i$ there exists a point $\yb \in C_i$ close to $\xb$. This forms the basis of our reduction from \rkkm{} to the version with sparse KKM coverings.

Fix parameter $\delta \in (0,\frac{1}{4})$. First we define a projection function $\tau: \Delta_{n-1} \to \Delta_{n-1}$. Let $\tau(\xb)$ be the vector obtained from $\xb$ by reducing all entries less or equal to $\delta$ down to $0$, and normalizing the remaining entries so that the entries sum to $1$. We write
\[
\hat{\tau}(\xb)_i \coloneqq
\begin{cases}
    0       &\text{if } x_i \leq \delta,\\
    x_i     & \text{else},
\end{cases}
\]
and $\tau(\xb) = \frac{1}{\| \hat{\tau}(\xb) \|_1}\hat{\tau}(\xb)$. Note that $\|\tau(\xb) - \xb \|_1 \leq 2n \delta$ for any $\xb \in \Delta_{n-1}$. We define $D_i$ for each $i \in [n]$ as
\begin{equation}
\label{eq:sparse-covering-construction}
D_i \coloneqq \{ \xb \in \Delta_{n-1} \mid x_i \geq \delta \text{ and } \tau(\xb) \in C_i \}.
\end{equation}

We now argue that $(D_1, \ldots, D_n)$ is a sparse KKM covering of $\Delta_{n-1}$. To see that each $D_i$ is closed, write $D_i$ as the union of sets $D^T_i \coloneqq \{\xb \in D_i \mid \{j \in [n] \mid x_j \leq \delta \} = T \}$ for all $T \subseteq [n] \setminus \{i\}$. By a standard limiting argument and the fact that $C_i$ is closed, we see that each $D^T_i$ is closed, and so their union $D_i$ is also closed. Next, we see that $(D_1, \ldots, D_n)$ satisfies the KKM covering property. Fix $S \subseteq [n]$ and $\xb \in F_S$. We let $S' = \{ i \in [n] \mid \tau(\xb)_i > 0 \}$, so $S' \subseteq S$ and $\tau(\xb) \in F_{S'}$. As $(C_1, \ldots, C_n)$ is a KKM covering, $\tau(\xb) \in \bigcup_{i \in S'} C_i$. Fix $j \in S'$ with $\tau(\xb) \in C_j$. Note that $\tau(\xb)_j > 0$ implies $\tau(\xb)_j > \delta$. So, by construction, $\tau(\xb) \in D_j$. Hence $\tau(\xb) \in \bigcup_{i \in S} D_i$, which concludes the proof that $(D_1, \ldots, D_n)$ is a KKM covering. Finally, as $D_i$ contains only points $\xb$ satisfying $x_i \geq \delta$ and $\xb \in D_{[n] \setminus \{i\}}$ implies $x_i = 0$, we see that $(D_1, \ldots, D_n)$ is sparse.

\begin{proposition}
\label{proposition:sparse-KKM-covering}
There exists a polynomial-time reduction from $n$-\rkkm{} to the problem $n-\rkkm{}$ with sparse coverings.
\end{proposition}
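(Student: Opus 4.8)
The plan is to instantiate the construction of the preceding paragraphs with a carefully chosen parameter $\delta$ and to show that an approximate Rainbow-KKM solution for the sparse instance is, essentially verbatim, an approximate solution for the original instance. All the structural work---that $(D_1, \ldots, D_n)$ is a closed, sparse KKM covering and that $\|\tau(\xb) - \xb\|_1 \leq 2n\delta$---has already been carried out above, so only the parameter bookkeeping and a triangle inequality remain.

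Given an instance $(\varepsilon, g^1, \ldots, g^n)$ of $n$-\rkkm{} with associated coverings $(C^1, \ldots, C^n)$, the reduction $f$ sets $\delta \coloneqq \varepsilon/(4n)$ and $\varepsilon' \coloneqq \varepsilon/2$, and outputs the instance $(\varepsilon', \hat g^1, \ldots, \hat g^n)$, where $\hat g^i$ is the KKM covering function of the sparse covering $(D^i_1, \ldots, D^i_n)$ built from $(C^i_1, \ldots, C^i_n)$ via \eqref{eq:sparse-covering-construction}. Since $\varepsilon \in (0, \tfrac14)$ and $n \geq 1$, both $\delta$ and $\varepsilon'$ lie in $(0, \tfrac14)$, so this is a valid instance, and the preceding discussion shows each $(D^i_1, \ldots, D^i_n)$ is a sparse KKM covering of $\Delta_{n-1}$. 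The value $\hat g^i(\xb, j)$ is computable in polynomial time: first check whether $\xb \in \Delta_{n-1}$ (returning $0$ otherwise), then check $x_j \geq \delta$, compute $\tau(\xb)$ by zeroing out the coordinates at most $\delta$ and renormalising (well-defined because some coordinate of $\xb \in \Delta_{n-1}$ exceeds $\tfrac1n > \delta$, so $\hat\tau(\xb) \neq \bm{0}$), and finally query $g^i(\tau(\xb), j)$. Thus $f$ is polynomial-time and makes a single oracle call per query, which also preserves query complexity.

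For the solution map, suppose $(\xb, \pi)$ is an $\varepsilon'$-approximate Rainbow-KKM solution for the sparse instance, witnessed by points $\xb^i \in D^i_{\pi(i)}$ with $\|\xb - \xb^i\|_1 \leq \varepsilon'$ for all $i \in [n]$. We set $g(\xb, \pi) \coloneqq (\xb, \pi)$; note $\xb \in \Delta_{n-1}$ by definition of a Rainbow-KKM solution. To verify this is an $\varepsilon$-approximate Rainbow-KKM solution for $(\varepsilon, g^1, \ldots, g^n)$, put $\yb^i \coloneqq \tau(\xb^i)$. By the definition of $D^i_{\pi(i)}$ in \eqref{eq:sparse-covering-construction} we have $\yb^i \in C^i_{\pi(i)}$, and since $\xb^i \in D^i_{\pi(i)} \subseteq \Delta_{n-1}$ the displacement bound gives $\|\xb^i - \yb^i\|_1 = \|\xb^i - \tau(\xb^i)\|_1 \leq 2n\delta$. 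Hence
\[
\|\xb - \yb^i\|_1 \leq \|\xb - \xb^i\|_1 + \|\xb^i - \tau(\xb^i)\|_1 \leq \varepsilon' + 2n\delta = \tfrac{\varepsilon}{2} + \tfrac{\varepsilon}{2} = \varepsilon
\]
for every $i \in [n]$, as required.

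The only point that needs care---and it is dispatched above rather than being a genuine obstacle---is guaranteeing that $\tau$ is well-defined and efficiently evaluable on the inputs to which it is applied (both inside $\hat g^i$ and on the witnesses $\xb^i$); this is exactly what the choice $\delta < \tfrac1n$ secures. Everything else follows from the construction preceding the statement, so the proof amounts to the two displayed estimates and the parameter choice above.
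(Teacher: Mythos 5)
Your proof is correct and follows the paper's approach: instantiate the sparse construction of the preceding paragraph, set $\varepsilon' = \varepsilon/2$, and use the triangle inequality with the displacement bound $\|\xb^i - \tau(\xb^i)\|_1 \leq 2n\delta$. The only (harmless) difference is that you return $(\xb, \pi)$ directly rather than $(\tau(\xb), \pi)$ as the paper does, which saves one term in the triangle inequality and lets you take $\delta = \varepsilon/(4n)$ instead of the paper's $\varepsilon/(8n)$; both choices make the estimate close to exactly $\varepsilon$, so both are fine.
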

\begin{proof}
Suppose $(\varepsilon, g^1, \ldots, g^n)$ is an instance of $n$-\rkkm{} associated with KKM coverings $C^1, \ldots, C^n$. We construct an instance $(\varepsilon' = \frac{\varepsilon}{2}, h^1, \ldots, h^n)$ of $n$-\rkkm{} associated with sparse KKM coverings $D^1, \ldots, D^n$ that are constructed as described in \eqref{eq:sparse-covering-construction} with parameter $\delta = \frac{\varepsilon}{8n}$. It is straightforward that $h^i$ can be implemented efficiently with at most one call to $g^i$.

Suppose $(\xb,\pi)$ is a solution for instance $(\varepsilon', h^1, \ldots, h^n)$. Hence there exists $\xb^i \in D^i_{\pi(i)}$ with $\| \xb - \xb^i \|_1 \leq \varepsilon'$ for every $i \in [n]$. By construction of the sets $D^i_j$ in \eqref{eq:sparse-covering-construction}, we have $\tau(\xb^i) \in C^i_{\pi(i)}$. Moreover, we know that $\| \xb - \tau(\xb) \|_1 \leq 2n \delta$ and $\| \xb^i - \tau(\xb^i) \|_1 \leq 2n \delta$, so $\| \tau(\xb) - \tau(\xb^i) \|_1 \leq \varepsilon' + 4n \delta = \varepsilon$. Hence $(\tau(\xb), \pi)$ is a solution for the original instance $(\varepsilon, g^1, \ldots, g^n)$.
\end{proof}

\section{Reductions between \housing{} and \rkkm{}}
\label{section:homeomorphism}
We now establish the computational equivalence of \housing{} and \rkkm{} by proving polynomial-time reductions in both directions. We note that our reductions preserve the dimensionality of the problems; that is, an instance of $n$-\housing{} is reduced to an instance of $n$-\rkkm{}, and vice versa. Moreover, it follows from our reductions that $n$-\housing{} and $n$-\rkkm{} have the same query complexity.

In \cref{section:connecting-domains}, we define a homeomorphism $\phi : \Sigma_{n} \to \Delta_{n-1}$ from the `relevant' domain $\Sigma_{n}$ of the housing market to the domain $\Delta_{n-1}$ of the Rainbow-KKM problem. This homeomorphism was first introduced by \citet{gale1984equilibrium} to prove equilibrium existence for the housing market via the Rainbow-KKM lemma. \Cref{section:equivalence-reductions} then provides the reductions between \housing{} and \rkkm{}.

\subsection{Connecting the domains of \housing{} and \rkkm{}}
\label{section:connecting-domains}
Recall that $\Sigma_{n} = \{ \xb \in \R^n \mid 0 \leq \xb \leq 1, \text{ and } x_i = 0 \text{ for some } i \in [n]\}$ is the intersection of the unit cube with the boundary of the positive orthant $B_n$.
For each permutation $\pi$ of $[n]$, define the two sets
\begin{align*}
    \Sigma_\pi &\coloneqq \{\pb \in \Sigma_n \mid p_{\pi(1)} \geq p_{\pi(2)} \geq \cdots \geq p_{\pi(n)} = 0 \}\\
    \text{and }\Delta_{\pi} &\coloneqq \{\xb \in \Delta_{n-1} \mid x_{\pi(1)} \leq x_{\pi(2)} \leq \cdots \leq x_{\pi(n)} \}.
\end{align*}
It is immediate that $\Sigma_n = \bigcup_{\pi} \Sigma_{\pi}$ and $\Delta_{n-1} = \bigcup_{\pi} \Delta_{\pi}$. For each $\pb \in \Sigma_\pi$, we define
\begin{equation}
\label{eq:phi}
\phi(\pb)_{\pi(k)} \coloneqq
                      \frac{1 - p_{\pi(1)}}{n}
                    + \frac{p_{\pi(1)} - p_{\pi(2)}}{n-1}
                    + \frac{p_{\pi(k-1)} - p_{\pi(k)}}{n-k+1}
\end{equation}
The inverse $\phi^{-1} : \Delta_{n-1} \to \Sigma_n$ of $\phi$ is defined, for every $\xb \in \Delta_{\pi}$, as
\begin{equation}
\label{eq:phi-inverse}
\phi^{-1}(\xb)_{\pi(k)} \coloneqq 1 - x_{\pi(1)} - x_{\pi(2)} - \cdots - x_{\pi(k-1)} - (n-k+1)x_{\pi(k)}.
\end{equation}

\Cref{lemma:phi-properties} shows that $\phi^{-1}$ is indeed the inverse of $\phi$. Moreover, \cref{lemma:phi-lipschitz,lemma:phi-inverse-lipschitz} prove that $\phi$ is $n$-Lipschitz and $\phi^{-1}$ is $n^2$-Lipschitz. In particular, this establishes that $\phi$ is a homeomorphism.

\begin{lemma}
\label{lemma:phi-properties}
The function $\phi$ is a bijection from $\Sigma_n$ to $\Delta_{n-1}$, and $\phi^{-1}$ is its inverse. Moreover, $\phi(\Delta_{\pi}) = \Sigma_{\pi}$ for every permutation $\pi$ of $[n]$.
\end{lemma}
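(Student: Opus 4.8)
The plan is to establish, in order: (1) that $\phi$ and $\phi^{-1}$ are well-defined on all of $\Sigma_n = \bigcup_\pi \Sigma_\pi$ and $\Delta_{n-1} = \bigcup_\pi \Delta_\pi$ respectively, despite being specified piecewise on overlapping pieces; (2) that $\phi(\Sigma_\pi) \subseteq \Delta_\pi$ and $\phi^{-1}(\Delta_\pi) \subseteq \Sigma_\pi$; (3) that $\phi^{-1} \circ \phi = \mathrm{id}_{\Sigma_n}$ and $\phi \circ \phi^{-1} = \mathrm{id}_{\Delta_{n-1}}$; and (4) that these combine to give the bijection and the image identity (which, since $\phi$ maps $\Sigma_n$ to $\Delta_{n-1}$, reads $\phi(\Sigma_\pi) = \Delta_\pi$ rather than the transposed form in the statement). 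Steps (1)--(3) suffice because the $\Sigma_\pi$ cover $\Sigma_n$, the $\Delta_\pi$ cover $\Delta_{n-1}$, and any map with a two-sided inverse is a bijection.

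For step (1), I would note that if $\pb \in \Sigma_\pi \cap \Sigma_{\pi'}$ then $\pi$ and $\pi'$ induce the same non-increasing rearrangement of $\pb$, and so are linked by a sequence of transpositions, each swapping two adjacent positions $k-1,k$ whose $\pb$-coordinates are equal. Such a swap changes only the single summand $\tfrac{p_{\pi(k-1)} - p_{\pi(k)}}{n-k+1}$ of \eqref{eq:phi}, which equals $0$; hence $\phi(\pb)$ is unaffected. For $\phi^{-1}$, a tie $x_{\pi(k-1)} = x_{\pi(k)}$ makes the $\pi$- and $\pi'$-expressions in \eqref{eq:phi-inverse} for the two swapped coordinates agree by direct inspection. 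This consistency check is the only genuinely conceptual point; everything else is computation with telescoping sums.

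For steps (2) and (3), after relabeling coordinates we may assume $\pi = \mathrm{id}$, so $\pb \in \Sigma_{\mathrm{id}}$ means $p_1 \ge \cdots \ge p_n = 0$. Writing $x_k := \phi(\pb)_k$, formula \eqref{eq:phi} gives $x_1 = \tfrac{1-p_1}{n} \ge 0$ and $x_k - x_{k-1} = \tfrac{p_{k-1}-p_k}{n-k+1} \ge 0$, so $\phi(\pb)$ is nonnegative and nondecreasing; summing \eqref{eq:phi} over $k$, each summand $\tfrac{p_{j-1}-p_j}{n-j+1}$ occurs exactly $n-j+1$ times, giving $\sum_k x_k = (1-p_1) + \sum_{j=2}^n (p_{j-1}-p_j) = (1-p_1)+(p_1-p_n) = 1$, so $\phi(\pb) \in \Delta_{\mathrm{id}}$. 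Symmetrically, for $\xb \in \Delta_{\mathrm{id}}$ and $p_k := \phi^{-1}(\xb)_k$, formula \eqref{eq:phi-inverse} yields $p_n = 1 - \sum_j x_j = 0$, $p_{k-1} - p_k = (n-k+1)(x_k - x_{k-1}) \ge 0$, and $p_1 = 1 - n x_1 \le 1$; monotonicity then forces $0 \le p_k \le 1$, so $\phi^{-1}(\xb) \in \Sigma_{\mathrm{id}}$. For step (3), a short induction on $k$ using $x_k - x_{k-1} = \tfrac{p_{k-1}-p_k}{n-k+1}$ shows $1 - \sum_{j<k} x_j - (n-k+1)x_k = p_k$, i.e.\ $\phi^{-1}(\phi(\pb)) = \pb$; the base case is $1 - nx_1 = p_1$, and the inductive step substitutes $(n-k+1)x_k = (n-k+1)x_{k-1} + (p_{k-1}-p_k)$ and collapses using the hypothesis at $k-1$. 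Conversely, from $\tfrac{1-p_1}{n} = x_1$ and $\tfrac{p_{l-1}-p_l}{n-l+1} = x_l - x_{l-1}$, formula \eqref{eq:phi} telescopes to $\phi(\phi^{-1}(\xb))_k = x_1 + \sum_{l=2}^k (x_l - x_{l-1}) = x_k$.

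Finally, step (4): since $\bigcup_\pi \Sigma_\pi = \Sigma_n$ and $\bigcup_\pi \Delta_\pi = \Delta_{n-1}$, the maps $\phi$ and $\phi^{-1}$ are defined on the full domains, take values in $\Delta_{n-1}$ and $\Sigma_n$, and are mutually inverse, so $\phi$ is a bijection $\Sigma_n \to \Delta_{n-1}$ with inverse $\phi^{-1}$. And $\phi(\Sigma_\pi) \subseteq \Delta_\pi$ together with $\phi^{-1}(\Delta_\pi) \subseteq \Sigma_\pi$ gives $\Delta_\pi = \phi(\phi^{-1}(\Delta_\pi)) \subseteq \phi(\Sigma_\pi) \subseteq \Delta_\pi$, hence $\phi(\Sigma_\pi) = \Delta_\pi$. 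The main obstacle, as noted, is the well-definedness on overlapping pieces in step (1); the remaining steps are routine bookkeeping.
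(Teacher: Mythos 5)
Your proof is correct and follows the same approach as the paper's: verify $\phi^{-1}\circ\phi=\mathrm{id}$ and $\phi\circ\phi^{-1}=\mathrm{id}$, then establish $\phi(\Sigma_\pi)=\Delta_\pi$ via the two inclusions $\phi(\Sigma_\pi)\subseteq\Delta_\pi$ and $\phi^{-1}(\Delta_\pi)\subseteq\Sigma_\pi$. You go somewhat further than the paper by explicitly verifying well-definedness of the piecewise formulas on the overlaps $\Sigma_\pi\cap\Sigma_{\pi'}$ (a point the paper leaves implicit), and you correctly observe that the lemma statement should read $\phi(\Sigma_\pi)=\Delta_\pi$ rather than $\phi(\Delta_\pi)=\Sigma_\pi$, which is what the proof actually establishes.
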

\begin{proof}
It is straightforward to check that $\phi^{-1}(\phi(\pb)) = \pb$ for every $\pb \in \Sigma_{n}$ and $\phi(\phi^{-1}(\xb)) = \xb$ for every $\xb \in \Delta_{n-1}$, so $\phi^{-1}$ is the inverse of $\phi$.

Next we show that $\phi(\Sigma_{\pi}) = \Delta_{\pi}$. Fix $\pb \in \Sigma_\pi$. By construction of $\Sigma_{\pi}$, we have $p_{\pi(k-1)} - p_{\pi(k)} \geq 0$ for all $k \in \{2, \ldots, n\}$. It follows from \eqref{eq:phi} that $\phi(\pb)_{\pi(1)} \geq 0$ and $\phi(\pb)_{\pi(k-1)} \leq \phi(\pb)_{\pi(k)}$ for all $k \in \{2, \ldots, n\}$, and
\[
\sum_{i \in [n]} \phi(\pb)_{\pi(i)} = n \cdot \frac{1-p_{\pi(1)}}{n} + (n-1) \cdot \frac{p_{\pi(1)} - p_{\pi(2)}}{n-1} + \cdots + 1 \cdot \frac{p_{\pi(n-1)} - p_{\pi(n)}}{1} = 1 - p_{\pi(n)} = 1.
\]
Hence, $\phi(\pb) \in \Delta_{\pi}$. As $\pb$ was chosen arbitrarily, we get $\phi(\Sigma_\pi) \subseteq \Delta_{\pi}$. To show that $\phi(\Sigma_\pi) = \Delta_{\pi}$, we now fix $\xb \in \Delta_{\pi}$ and show that $\phi^{-1}(\xb) \in \Sigma_{\pi}$. Let $\pb \coloneqq \phi^{-1}(\xb)$. By \eqref{eq:phi-inverse}, we have $p_{\pi(k-1)} - p_{\pi(k)} = (n - k +1) (x_{\pi(k-1)} - x_{\pi(k)}) \leq 0$, and so $p_{\pi(k-1)} \geq p_{\pi(k)}$, for every $k \in \{2,\ldots, n\}$. Moreover, $\xb \in \Delta_{\pi}$ tells us that $x_1 \leq \frac{1}{n}$ and $\sum_{i \in [n]}x_{\pi(i)} = 1$. Hence, $p_{\pi(1)} = 1 - n x_{\pi(1)} \in [0,1]$ and $p_{\pi(n)} = 1 - \sum_{i \in [n]} x_{\pi(i)} = 0$.
\end{proof}

\begin{lemma}
\label{lemma:phi-lipschitz}
The function $\phi$ is $n$-Lipschitz.
\end{lemma}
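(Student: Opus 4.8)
The plan is to exploit the piecewise-affine structure of $\phi$: on each region $\Sigma_\pi$ the map $\phi$ is affine, so I would first bound its Lipschitz constant there and then stitch the regions together along the facets of the unit cube to reach arbitrary pairs of points.

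First I would fix a permutation $\pi$ and bound $\phi|_{\Sigma_\pi}$. Since $p_{\pi(n)} = 0$ on $\Sigma_\pi$, formula~\eqref{eq:phi} exhibits $\phi|_{\Sigma_\pi}$ as an affine function of the free coordinates $p_{\pi(1)}, \dots, p_{\pi(n-1)}$; a short differentiation shows that $p_{\pi(m)}$ enters $\phi(\pb)_{\pi(k)}$ with coefficient $\tfrac{1}{(n-m)(n-m+1)}$ when $m<k$, with coefficient $-\tfrac{1}{n-m+1}$ when $m=k$, and with coefficient $0$ when $m>k$. Hence the absolute column sum of the Jacobian associated with $p_{\pi(m)}$ equals $\tfrac{1}{n-m+1} + (n-m)\cdot\tfrac{1}{(n-m)(n-m+1)} = \tfrac{2}{n-m+1} \le 1$ for every $m \in [n-1]$; since the $\ell_1 \to \ell_1$ operator norm of a matrix is its largest absolute column sum, this gives $\|\phi(\pb) - \phi(\qb)\|_1 \le \|\pb - \qb\|_1$ for all $\pb, \qb \in \Sigma_\pi$. (This is already stronger than the claimed constant $n$, and the sharper bound propagates cleanly.)

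Next I would handle two points $\pb, \qb$ lying in a common facet $G_i \coloneqq \{\xb \in [0,1]^n \mid x_i = 0\}$ of the cube; note $G_i \subseteq \Sigma_n$. The segment $[\pb, \qb]$ stays in $G_i$, hence in $\Sigma_n = \bigcup_\pi \Sigma_\pi$, and it meets each polytope $\Sigma_\pi$ in a (possibly empty or degenerate) subsegment. From this finite closed cover of $[\pb, \qb]$ I can extract points $\pb = \zb^0, \zb^1, \dots, \zb^m = \qb$, lying in order along the segment, with each $[\zb^{l-1}, \zb^l]$ contained in some $\Sigma_{\pi_l}$. Because $\phi$ is a well-defined function it agrees on the overlaps $\Sigma_\pi \cap \Sigma_{\pi'}$, so the previous step together with the triangle inequality yields $\|\phi(\pb) - \phi(\qb)\|_1 \le \sum_l \|\phi(\zb^{l-1}) - \phi(\zb^l)\|_1 \le \sum_l \|\zb^{l-1} - \zb^l\|_1 = \|\pb - \qb\|_1$, the last equality because consecutive $\zb^l$ lie on a common line segment.

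Finally I would treat arbitrary $\pb, \qb \in \Sigma_n$, say with $p_i = 0$ and $q_j = 0$; if $i = j$ the previous step applies, so assume $i \ne j$. Let $\bm{r}$ be $\pb$ with its $j$-th coordinate replaced by $0$; then $\bm{r} \in G_i \cap G_j$, so the previous step applies to each of the pairs $(\pb, \bm{r})$ and $(\bm{r}, \qb)$. Since $\|\pb - \bm{r}\|_1 = p_j$ and $\|\bm{r} - \qb\|_1 = \sum_{k \ne j} |p_k - q_k|$, and $q_j = 0$ forces $\|\pb - \qb\|_1 = p_j + \sum_{k \ne j}|p_k - q_k|$, the triangle inequality gives $\|\phi(\pb) - \phi(\qb)\|_1 \le \|\pb - \bm{r}\|_1 + \|\bm{r} - \qb\|_1 = \|\pb - \qb\|_1 \le n \|\pb - \qb\|_1$, which is the claim. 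The only genuine subtlety here is that $\Sigma_n$ is not convex, so the straight segment between two arbitrary points of $\Sigma_n$ need not stay in the domain of $\phi$; the last two steps are exactly the bookkeeping needed to route around this, while the Jacobian computation in the first step is routine.
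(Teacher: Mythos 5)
Your proof is correct, and in fact proves the stronger statement that $\phi$ is $1$-Lipschitz. Your opening step — bounding $\phi$ on a single $\Sigma_\pi$ — mirrors the paper's first paragraph, but you compute the $\ell_1 \to \ell_1$ operator norm of the Jacobian exactly (obtaining $1$), whereas the paper settles for a cruder termwise bound giving $n\delta$. The real divergence is in how you pass between pieces. The paper perturbs one coordinate at a time, with a lengthy four-way case analysis when the two points straddle distinct $\Sigma_\pi$'s, and then strings the one-coordinate bounds together by the triangle inequality. You instead observe that the $\Sigma_\pi$'s slice any facet $G_i$ into convex pieces, so a segment lying entirely in one facet decomposes into consecutive subsegments, and $L_1$-length is additive along a segment — so no factor is lost. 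For arbitrary $\pb, \qb \in \Sigma_n$ you interpolate through $\bm{r} \in G_i \cap G_j$, and again no factor is lost because $q_j = 0$ makes the $L_1$ lengths of the two legs add to $\|\pb - \qb\|_1$. This is cleaner, sharper, and also sidesteps a subtlety the paper's final paragraph glosses over: when perturbing one coordinate at a time, the intermediate points $\qb^{(k)}$ need not lie in $\Sigma_n$ at all (take $\pb = (0, \tfrac12, \tfrac12)$ and $\qb = (\tfrac12, \tfrac12, 0)$, so $\qb^{(1)} = (\tfrac12,\tfrac12,\tfrac12) \notin \Sigma_3$), so the single-coordinate case as proved does not directly apply to them; your segment-within-a-facet decomposition avoids this entirely, as you explicitly note.
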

\begin{proof}
In order to prove that $\phi$ is $n$-Lipschitz, we need to show $\|\phi(\pb) - \phi(\qb)\|_1 \leq n \|\pb - \qb \|_1$ for all $\pb, \qb \in \Sigma_n$. Fix two points $\pb, \qb \in \Sigma_n$, and let $\|\pb-\qb\|_1 = \delta$.
If $\pb$ and $\qb$ both lie in $\Sigma_\pi$ for some permutation $\pi$, then
\begin{equation}
\label{eq:p-q-same-simplex}
| \phi(\pb)_{\pi(k)} - \phi(\qb)_{\pi(k)} | = 
\sum_{j \in [k]} \left( \frac{1}{n-j+1} - \frac{1}{n-j+2} \right ) | p_{\pi(j)} - q_{\pi(j)} |
\leq \sum_{j \in [k]} | p_{\pi(j)} - q_{\pi(j)} | \leq \delta
\end{equation}
and so $\|\phi(\pb) - \phi(\qb)\|_1 \leq n \delta$.

From now on, we assume that $\pb \in \Sigma_{\pi}$ and $\qb \notin \Sigma_{\pi}$. Suppose first that $\pb$ and $\qb$ differ only in one entry, so $\qb = \pb + \alpha \eb^{\pi(j)}$ for some $j \in [n]$. Without loss of generality, we can assume that $\alpha > 0$ (by relabeling $\pb$ and $\qb$ if $\alpha < 0$). Note that $\qb \notin \Sigma_{\pi}$ implies $j \geq 2$. Moreover, increasing $p_{\pi(j)}$ by $\alpha$ moves it to the position of $p_{\pi(i)}$ for some $i < j$, and moves entries $p_{\pi(i)}, \ldots, p_{\pi(j-1)}$ one position to the right.  The entries of $\pb$ and $\qb$ in descending order are, thus, given by
\begin{align}
\label{eq:sorted-p-and-q}
    &p_{\pi(1)}, p_{\pi(2)}, \ldots, p_{\pi(i-1)}, p_{\pi(i)}, \ldots, p_{\pi(j-1)}, p_{\pi(j)}, p_{\pi(j+1)}, \ldots, p_{\pi(n)}, \\
    &p_{\pi(1)}, p_{\pi(2)}, \ldots, p_{\pi(i-1)}, p_{\pi(j)}+\alpha, p_{\pi(i)}, \ldots, p_{\pi(j-1)}, p_{\pi(j+1)}, \ldots, p_{\pi(n)}.
\end{align}
Observe that 
\begin{equation}
\label{eq:diameter}
p_{\pi(j)} \leq p_{\pi(k)} \leq p_{\pi(j)} + \alpha, \ \forall k \in \{i, \ldots, j\}.
\end{equation}

We will now show that $|\phi(\pb)_{\pi(k)} - \phi(\qb)_{\pi(k)}| \leq \alpha$ for all $k \in [n]$. Fix some $k \in [n]$ and note
\[
\phi(\pb)_{\pi(k)} = \frac{1-p_{\pi(1)}}{n} + \sum_{l=2}^k \frac{p_{\pi(l-1)} - p_{\pi(l)}}{n-l+1}.
\]

If $k \in \{1, \ldots, i-1\}$, then $\phi(\pb)_{\pi(k)} = \phi(\qb)_{\pi(k)}$ by \eqref{eq:sorted-p-and-q}. Now suppose that $k \in \{i, \ldots, j-1\}$. Then
\begin{align*}
\phi(\qb)_{\pi(k)}
= &\frac{1-p_{\pi(1)}}{n} + \sum_{l=2}^{i-1} \frac{p_{\pi(l-1)} - p_{\pi(l)}}{n-l+1} \\
    + &\frac{p_{\pi(i-1)} - (p_{\pi(j)}+\alpha)}{n-i+1}
    + \frac{(p_{\pi(j)} + \alpha) - p_{\pi(i)}}{n-i+2} \\
    + &\sum_{l=i+1}^k \frac{p_{\pi(l-1)} - p_{\pi(l)}}{n-l+2},
\end{align*}
and so
\begin{align*}
|\phi(\pb)_{\pi(k)} - \phi(\qb)_{\pi(k)}|
    &= \frac{p_{\pi(i-1)} - p_{\pi(i)}}{n-i+1}
        - \frac{p_{\pi(i-1)} - (p_{\pi(j)} + \alpha)}{n-i+1}
        - \frac{p_{\pi(j)}+\alpha - p_{\pi(i)}}{n-i+2}\\
        &+ \sum_{l=i+1}^k \left (\frac{1}{n-i+1} - \frac{1}{n-i+2} \right ) (p_{\pi(l-1)} - p_{\pi(l)})\\
    &\leq \left (\frac{1}{n-i+1} - \frac{1}{n-i+2} \right )(p_{\pi(j)}+\alpha - p_{\pi(i)} + p_{\pi(i)} - p_{\pi(k)})\\
    &\leq p_{\pi(j)}+\alpha - p_{\pi(k)} \leq \alpha.
\end{align*}
The last inequality follows from \eqref{eq:diameter}.
Next, suppose $k=j$. In this case, we have
\[
\phi(\qb)_{\pi(j)} = \frac{1-p_{\pi(1)}}{n} + \sum_{l=2}^{i-1} \frac{p_{\pi(l-1)}-p_{\pi(l)}}{n-l+1} + \frac{p_{\pi(i-1)} - (p_{\pi(j)} + \alpha)}{n-i+1},
\]
so
\begin{align*}
|\phi(\pb)_{\pi(j)} - \pi(\qb)_{\pi(j)}|
    &= \sum_{l=i}^j \frac{p_{\pi(l-1)} - p_{\pi(l)}}{n-l+1} - \frac{p_{\pi(i-1)} - (p_{\pi(j)} + \alpha)}{n-i+1} \\
    &= \frac{p_{\pi(j)} + \alpha - p_{\pi(i)}}{n-i+1} + \sum_{l=i+1}^j \frac{p_{\pi(l-1)} - p_{\pi(l)}}{n-l+1}\\
    &\leq p_{\pi(j)} + \alpha - p_{\pi(i)} + p_{\pi(i)} - p_{\pi(j)} = \alpha.
\end{align*}
Now suppose that $k \in \{j+1, \ldots, n\}$. Hence,
\begin{align*}
\phi(\qb)_{\pi(k)} 
    &= \frac{1-p_{\pi(1)}}{n} + \sum_{l=2}^{i-1} \frac{p_{\pi(l-1)}-p_{\pi(l)}}{n-l+1} \\
    &+ \frac{p_{\pi(i-1)} - (p_{\pi(j)} + \alpha)}{n-i+1} + \frac{p_{\pi(j)} + \alpha - p_{\pi(i)}}{n-i+2} \\
    &+ \sum_{l=i}^{j-1} \frac{p_{\pi(l-1)}-p_{\pi(l)}}{n-l+1} + \frac{p_{\pi(j-1)} - p_{\pi(j+1)}}{n-j+2}\\
    &+ \sum_{l=j+2}^{k} \frac{p_{\pi(l-1)}-p_{\pi(l)}}{n-l+1}.
\end{align*}
So we have
\begin{align*}
|\phi(\pb)_{\pi(k)} - \phi(\qb)_{\pi(k)}|
    &= \frac{p_{\pi(i-1)} - p_{\pi(i)}}{n-i+1} - \frac{p_{\pi(i-1)} - p_{\pi(j)}+\alpha}{n-i+1} - \frac{p_{\pi(j)} + \alpha - p_{\pi(i)}}{n-i+2} \\
    &+ \sum_{l=i+1}^{j-1} \left ( \frac{1}{n-j+1} - \frac{1}{n-j+2} \right ) (p_{\pi(l-1)} - p_{\pi(l)}) \\
    &+ \frac{p_{\pi(j-1)} - p_{\pi(j)}}{n-j+1} - \frac{p_{\pi(j)} - p_{\pi(j+1)}}{n-i+2} - \frac{p_{\pi(j-1)} - p_{\pi(j+1)}}{n-i+2}\\
    &= \left ( \frac{1}{n-j+1} - \frac{1}{n-j+2} \right ) ( p_{\pi(j)} + \alpha - p_{\pi(i)}) \\
    &+ \left ( \frac{1}{n-j+1} - \frac{1}{n-j+2} \right ) (p_{\pi(i)} - p_{\pi(j-1)}) \\
    &+ \left ( \frac{1}{n-j+1} - \frac{1}{n-j+2} \right ) (p_{\pi(j-1)}- p_{\pi(j)}) \\
    &\leq \alpha.
\end{align*}

Finally, suppose $\pb$ and $\qb$ differ in arbitrarily many entries, so $\qb = \pb + \sum_{i \in [n]} \alpha_i \eb^i$ for some $\alpha \in \R^n$ and we have $\delta = \| \bm{\alpha} \|_1$. Define $\qb^{(k)} = \pb + \sum_{i=1}^k \alpha_i \eb^i$. (So $\qb^{(0)} = \pb$ and $\qb^{(n)} = \qb$.) Then
\[
\| \phi(\pb) - \phi(\qb) \|_1 = \left \| \sum_{i=1}^n \phi(\qb^{(i-1)}) - \phi(\qb^{(i)}) \right \|_1 \leq \sum_{i=1}^n \left \| \phi(\qb^{(i-1)}) - \phi(\qb^{(i)}) \right \|_1  \leq n \sum_{i=1}^n |\alpha_i| \leq n \delta.
\]
The first inequality makes use of the triangle inequality. The second inequality applies the result from the previous paragraph to each $\|\phi(\pb^{(i-1)}) - \phi(\qb^{(i)})\|_1$, as $\qb^{(i-1)}$ and $\qb^{(i)}$ differ in one entry, by $\alpha_i$.
\end{proof}

\begin{lemma}
\label{lemma:phi-inverse-lipschitz}
The function $\phi^{-1}$ is $n^2$-Lipschitz.
\end{lemma}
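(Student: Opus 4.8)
The plan is to reduce the general bound to the special case where $\xb$ and $\yb$ lie in a common region $\Delta_\pi$, on which $\phi^{-1}$ is the affine map given by \eqref{eq:phi-inverse} and the estimate is immediate. First I would record this ``same-region'' bound; then I would join arbitrary $\xb,\yb\in\Delta_{n-1}$ by the straight segment between them (which stays inside $\Delta_{n-1}$ by convexity), cut the segment at the finitely many points where it crosses a hyperplane $\{x_i=x_j\}$, observe that each resulting subsegment lies in a single closed region $\Delta_\pi$, apply the same-region bound on each piece, and chain the estimates using that the cut points are colinear.

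For the same-region bound, fix a permutation $\pi$ and $\xb,\yb\in\Delta_\pi$. From \eqref{eq:phi-inverse} one has $\phi^{-1}(\xb)_{\pi(k)} - \phi^{-1}(\yb)_{\pi(k)} = -\sum_{j=1}^{k-1}(x_{\pi(j)}-y_{\pi(j)}) - (n-k+1)(x_{\pi(k)}-y_{\pi(k)})$, so by the triangle inequality $|\phi^{-1}(\xb)_{\pi(k)} - \phi^{-1}(\yb)_{\pi(k)}| \le n\sum_{j=1}^{k}|x_{\pi(j)}-y_{\pi(j)}| \le n\|\xb-\yb\|_1$, and summing over the $n$ coordinates gives $\|\phi^{-1}(\xb)-\phi^{-1}(\yb)\|_1 \le n^2\|\xb-\yb\|_1$. (Collecting coefficients more carefully in fact yields the sharper constant $2n-1$, but $n^2$ is all that is claimed.) I would note in passing that \eqref{eq:phi-inverse} is independent of the choice of $\pi$ when $\xb$ lies in several regions $\Delta_\pi$ — a one-line check when two consecutive coordinates coincide — so that $\phi^{-1}$ is a genuine function, as already used implicitly in \cref{lemma:phi-properties}; note also that no continuity of $\phi^{-1}$ is needed as input, since it will follow from the bound being proved.

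For the chaining step, write $\zb(s)\coloneqq\xb+s(\yb-\xb)$ for $s\in[0,1]$. For each pair $i\ne j$ the map $s\mapsto z(s)_i-z(s)_j$ is affine, hence either identically zero or zero at a unique value of $s$; collecting these finitely many values together with $0$ and $1$ gives $0=s_0\le s_1\le\dots\le s_{m+1}=1$. On each open subinterval $(s_t,s_{t+1})$ no coordinate difference changes sign, so there is a permutation $\pi_t$ sorting the coordinates of $\zb$ throughout, and since $\Delta_{\pi_t}$ is closed the whole subsegment $\zb([s_t,s_{t+1}])$ lies in $\Delta_{\pi_t}$. Applying the same-region bound to each $\Delta_{\pi_t}$ and summing yields $\|\phi^{-1}(\xb)-\phi^{-1}(\yb)\|_1\le n^2\sum_t\|\zb(s_t)-\zb(s_{t+1})\|_1$; and since the points $\zb(s_t)$ are colinear and ordered, each coordinate is monotone along the segment, so $\sum_t\|\zb(s_t)-\zb(s_{t+1})\|_1=\|\xb-\yb\|_1$, which finishes the proof. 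The only thing requiring care is this last step's bookkeeping — finitely many crossings, each piece inside one region — but there is no genuine obstacle, as $\phi^{-1}$ is piecewise affine and the $L_1$ norm is additive along a segment. In contrast to \cref{lemma:phi-lipschitz}, one cannot reduce to single-coordinate perturbations here, since the coordinates of points in $\Delta_{n-1}$ sum to one and two distinct points therefore differ in at least two coordinates; this is exactly why the argument goes through convexity of the domain rather than a coordinate-by-coordinate chain.
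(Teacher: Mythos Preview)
Your proof is correct, and it takes a genuinely different route from the paper's.

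The paper proceeds by first \emph{extending} $\phi^{-1}$ from $\Delta_{n-1}$ to all of $\R^n$ (defining it by \eqref{eq:phi-inverse} on each cone $\R^n_\pi = \{\xb \in \R^n \mid x_{\pi(1)} \le \cdots \le x_{\pi(n)}\}$). On this larger domain, single-coordinate perturbations $\yb = \xb \pm \alpha \eb^j$ make sense, and the paper then mirrors the structure of the proof of \cref{lemma:phi-lipschitz}: establish the same-region bound, handle a single-coordinate perturbation by a case analysis on where the moved entry lands in the sorted order, and finally chain along $\xb^{(k)} = \xb + \sum_{i\le k}\alpha_i \eb^i$. Your final remark that ``one cannot reduce to single-coordinate perturbations here'' is thus true only if one insists on staying inside $\Delta_{n-1}$; the paper's trick is precisely to leave $\Delta_{n-1}$ so that the coordinate-by-coordinate chain is available.

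Your approach instead exploits convexity of $\Delta_{n-1}$: the straight segment from $\xb$ to $\yb$ stays in the simplex, meets only finitely many of the hyperplanes $\{x_i = x_j\}$, and hence decomposes into subsegments each lying in a single closed $\Delta_\pi$. Applying the same-region bound on each piece and using that $L_1$-lengths add along a straight line gives the result. This is shorter and avoids the four-case analysis the paper carries out for single-coordinate perturbations; it also stays within the natural domain and needs no auxiliary extension. The paper's approach has the (minor) virtue of paralleling \cref{lemma:phi-lipschitz} exactly, but your argument is the more economical of the two. Your aside that the constant can be sharpened to $2n-1$ by collecting coefficients in the same-region estimate is also correct and not noted in the paper.
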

\begin{proof}
Define $\R^n_{\pi} = \{\xb \in \R^n \mid x_{\pi(1)} \leq x_{\pi(2)} \leq \cdots \leq x_{\pi(n)} \}$ for each permutation $\pi$ of $[n]$. For the purpose of this proof, we extend $\phi^{-1}$ to domain $\R^n$ by defining $\phi^{-1}(\xb)_{\pi(k)}$ as in \eqref{eq:phi-inverse} for every $\xb \in \R^n_{\pi}$.
We will show that $\|\phi^{-1}(\xb) - \phi^{-1}(\yb) \| \leq n^2 \|\xb - \yb \|_1$ for all $\xb, \yb \in \R^n$. The proof is conceptually similar to that of \cref{lemma:phi-lipschitz}. Fix $\xb, \yb \in \R^n$, and define $\delta \coloneqq \|\xb - \yb \|_1$. If $\xb$ and $\yb$ both lie in $\R^n_{\pi}$ for some permutation $\pi$, then, by applying the definition \eqref{eq:phi-inverse} of $\phi^{-1}$, we have
\[
|\phi^{-1}(\xb)_{\pi(k)} - \phi^{-1}(\yb)_{\pi(k)}| \leq (n-k+1)|x_{\pi(k)} - y_{\pi(k)}| + \sum_{i=1}^{k-1} |x_{\pi(i)} - y_{\pi(i)}| \leq (n-k+1) \delta + (k-1)\delta = n \delta,
\]
so $\| \phi^{-1}(\xb) - \phi^{-1}(\yb) \|_1 \leq n^2 \delta$.

From now on, we assume that $\xb \in \R^n_\pi$ and $\yb \notin \R^n_\pi$. Suppose first that $\pb$ and $\qb$ differ only in one entry, so $\qb = \pb - \alpha \eb^{\pi(j)}$ for some $j \in [n]$. Without loss of generality, assume $\alpha > 0$ (relabeling $\pb$ and $\qb$ if necessary). Note that $\qb \notin \R^n_{\pi}$ implies $j \geq 2$. Decreasing entry $x_{\pi(j)}$ by $\alpha$ moves it to the position of $x_{\pi(i)}$ for some $i < j$ (strict inequality holds due to $\yb \notin \R^n_{\pi}$), and moves entries $x_{\pi(i)}, \ldots, x_{\pi(j-1)}$ one position to the right. The entries of $\yb$ in ascending order are, thus,

\begin{align}
\label{eq:sorted-x-and-y}
    &x_{\pi(1)}, x_{\pi(2)}, \ldots, x_{\pi(i-1)}, x_{\pi(j)}-\alpha, x_{\pi(i)}, \ldots, x_{\pi(j-1)}, x_{\pi(j+1)}, \ldots, x_{\pi(n)}.
\end{align}
Observe that 
\begin{equation}
\label{eq:xy-diameter}
x_{\pi(j)} - \alpha \leq x_{\pi(k)} \leq x_{\pi(j)}, \ \forall k \in \{i, \ldots, j\}.
\end{equation}

We will now show that $|\phi^{-1}(\xb)_{\pi(k)} - \phi^{-1}(\yb)_{\pi(k)}| \leq n \alpha$ for all $k \in [n]$, which implies $\| \phi^{-1}(\xb) - \phi^{-1}(\yb) \|_1 \leq n^2 \delta$. Fix $k \in [n]$ and note that 
\[
\phi^{-1}(\xb)_{\pi(k)} = 1 - \sum_{l=1}^{k-1} x_{\pi(l)} - (n-k+1) x_{\pi(k)}.
\]
We distinguish between four cases.
If $k \in \{1, \ldots, i-1 \}$, then $\xb$ and $\yb$ coincide on entries $\pi(1), \ldots, \pi(k)$, so $\phi^{-1}(\xb)_{\pi(k)} = \phi(\yb)_{\pi(k)}$.
Secondly, suppose $k \in \{i, \ldots, j-1 \}$. Then $\phi^{-1}(\yb)_{\pi(k)} = 1 - \sum_{l=1}^{k-1} x_{\pi(l)} - (x_{\pi(j)} - \alpha) - (n-k) \xb_{\pi(k)}$, which implies $|\phi^{-1}(\xb)_{\pi(k)} - \phi^{-1}(\yb)_{\pi(k)}| \leq |x_{\pi(j)} - \alpha - x_{\pi(k)}| \leq \alpha$ by \eqref{eq:xy-diameter}.
Thirdly, suppose $k=j$. Then $\phi^{-1}(\yb)_{\pi(k)} = 1 - \sum_{l=1}^{i-1} x_{\pi(l)} - (n-i+1) (x_{\pi(j)}-\alpha)$, so $|\phi^{-1}(\xb)_{\pi(k)} - \phi^{-1}(\yb)_{\pi(k)}| \leq \left |\sum_{l=i}^{j-1} (x_{\pi(j)} - x_{\pi(l)}) - (n-i+1)\alpha \right | \leq n \alpha$, by \eqref{eq:xy-diameter}.
Lastly, suppose $k \in \{j+1, \ldots, n\}$. Then $\phi^{-1}(\yb)_{\pi(k)} = 1 + \alpha - \sum_{l=1}^{k-1} x_{\pi(l)} - (n-k+1) x_{\pi(k)}$, so $|\phi^{-1}(\xb)_{\pi(k)} - \phi^{-1}(\yb)_{\pi(k)}| = \alpha$.

Now suppose $\xb$ and $\yb$ differ in arbitrarily many entries. As in the previous proof, we have $\yb = \xb + \sum_{i=1}^n \alpha_i \eb^i$ for some $\bm{\alpha} \in \R^n$ and $\delta = \| \bm{\alpha} \|_1$. Define $\xb^{(k)} = \xb + \sum_{i=1}^k \alpha_i \eb^i$ for all $k \in \{0, \ldots, n\}$, so $\xb^{(0)} = \xb$ and $\xb^{(n)} = \yb$. Then $\xb^{(i-1)}$ and $\xb^{(i)}$ differ on one entry, by $\alpha_i$. Hence,
\[
\| \phi^{-1}(\xb) - \phi^{-1}(\yb)\|_1 \leq \sum_{i=1}^n \|\phi^{-1}(\xb^{(i-1)}) - \phi^{-1}(\xb^{(i)}) \|_1 \leq n^2 \sum_{i=1}^n |\alpha_i| \leq n^2 \delta.
\]
\end{proof}

\subsection{The computational equivalence}
We now provide polynomial-time reductions between \housing{} and \rkkm{}, and show that the two problems have the same query complexity in the black-box model.

\label{section:equivalence-reductions}
\begin{theorem}
\label{thm:housing-to-rkkm}
There exists a polynomial-time reduction from \housing{} to \rkkm{}.
\end{theorem}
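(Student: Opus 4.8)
The plan is to transport the agents' preference sets to KKM coverings of $\Delta_{n-1}$ via Gale's homeomorphism $\phi$. Given a \housing{} instance $(\varepsilon, f^1, \ldots, f^n)$ with associated preference sets $P^1, \ldots, P^n$, I would define, for each agent $i \in [n]$ and each house $j \in [n]$,
\[
C^i_j \coloneqq \{ \xb \in \R^n \mid \phi^{-1}(\xb) \in P^i_j \},
\]
where $\phi^{-1}$ denotes the piecewise-linear extension to $\R^n$ used in the proof of \cref{lemma:phi-inverse-lipschitz}. The corresponding covering function is $g^i(\xb, j) \coloneqq f^i(\phi^{-1}(\xb), j)$, which can be evaluated in polynomial time by sorting the coordinates of $\xb$, performing the arithmetic in \eqref{eq:phi-inverse}, and making a single call to $f^i$. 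The reduction $f$ outputs the \rkkm{} instance $(\varepsilon', g^1, \ldots, g^n)$ with the rescaled parameter $\varepsilon' \coloneqq \varepsilon / n^2 \in (0, \tfrac14)$.

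Next I would verify that each tuple $(C^i_1, \ldots, C^i_n)$ is a KKM covering of $\Delta_{n-1}$. Closedness of $C^i_j$ follows because $\phi^{-1}$ is continuous (indeed $n^2$-Lipschitz by \cref{lemma:phi-inverse-lipschitz}) and $P^i_j$ is closed by assumption~\eqref{assumption:closed-preference-sets}. For the covering property, fix $S \subseteq [n]$ and $\xb \in F_S$, so $x_j = 0$ for every $j \notin S$. The key observation is that $\phi^{-1}(\xb)_j = 1$ for every such $j$: taking any permutation $\pi$ with $\xb \in \Delta_\pi$ and writing $j = \pi(k)$, the minimality of $x_j = 0$ among the sorted entries forces $x_{\pi(1)} = \cdots = x_{\pi(k)} = 0$, so \eqref{eq:phi-inverse} evaluates to $1$. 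Assumption~\eqref{assumption:bounding-box} (with $M = 1$) then forces $\phi^{-1}(\xb) \notin P^i_j$ for every $j \notin S$. On the other hand, $\xb \in \Delta_{n-1}$ implies $\phi^{-1}(\xb) \in \Sigma_n$ by \cref{lemma:phi-properties}, so assumption~\eqref{assumption:covering} provides some $j^\star \in [n]$ with $\phi^{-1}(\xb) \in P^i_{j^\star}$; necessarily $j^\star \in S$, whence $\xb \in C^i_{j^\star} \subseteq \bigcup_{j \in S} C^i_j$. This is exactly the structural fact underlying Gale's existence proof.

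For the solution map $g$, suppose $(\xb, \pi)$ is an $\varepsilon'$-approximate Rainbow-KKM solution of the constructed instance, so there exist $\xb^i \in C^i_{\pi(i)}$ with $\|\xb - \xb^i\|_1 \leq \varepsilon'$ for every $i \in [n]$. Put $\pb \coloneqq \phi^{-1}(\xb)$ and $\pb^i \coloneqq \phi^{-1}(\xb^i)$. Then $\pb^i \in P^i_{\pi(i)}$ by definition of $C^i_{\pi(i)}$, and by \cref{lemma:phi-inverse-lipschitz},
\[
\|\pb - \pb^i\|_1 = \|\phi^{-1}(\xb) - \phi^{-1}(\xb^i)\|_1 \leq n^2 \|\xb - \xb^i\|_1 \leq n^2 \varepsilon' = \varepsilon,
\]
so $(\pb, \pi)$ is an $\varepsilon$-equilibrium of the original market. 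Since the reduction introduces $n$ coverings of $\Delta_{n-1}$ and each query to $g^i$ triggers exactly one query to $f^i$, it also preserves the number of agents and the query complexity, as claimed at the start of this section. The only step I expect to require genuine care is the covering-property argument: one must pin down precisely where $\phi^{-1}$ sends the face $F_S$ — namely into the price vectors attaining the maximal price $1$ on every house outside $S$ — so that boundedness assumption~\eqref{assumption:bounding-box} eliminates the houses outside $S$ while covering assumption~\eqref{assumption:covering} necessarily supplies one inside $S$. The remaining ingredients (closedness via continuity, polynomial-time evaluability of $\phi^{-1}$, and the $n^2$-Lipschitz blow-up dictating $\varepsilon' = \varepsilon/n^2$) are routine.
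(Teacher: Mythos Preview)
Your proposal is correct and follows essentially the same route as the paper: define $C^i_j$ as the $\phi$-pullback of $P^i_j$, set $\varepsilon' = \varepsilon/n^2$, verify the KKM covering property by observing that $\phi^{-1}$ sends $F_S$ to prices with $p_j = 1$ for $j \notin S$ (so assumption~\eqref{assumption:bounding-box} rules out those houses while \eqref{assumption:covering} supplies one inside $S$), and convert solutions back via the $n^2$-Lipschitz bound of \cref{lemma:phi-inverse-lipschitz}. The only cosmetic difference is that the paper writes $C^i_j = \phi(P^i_j \cap \Sigma_n) \subseteq \Delta_{n-1}$ and additionally notes the resulting coverings are sparse, whereas you extend $C^i_j$ to all of $\R^n$ via the extended $\phi^{-1}$; since \rkkm{} only cares about points in $\Delta_{n-1}$, the two formulations coincide where it matters.
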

\begin{proof}
Suppose $(\varepsilon, f^1, \ldots, f^n)$ is an instance of \housing{}, and each of the $n$ functions $f^i$ is associated with preference set $P^i = (P^i_1, \ldots, P^i_n)$. We construct an instance $(\varepsilon', g^1, \ldots, g^n)$ of $\rkkm{}$. Set the approximation parameter to $\varepsilon' = \frac{\varepsilon}{n^2}$, and let each $g^i$ be the covering function associated with the sparse KKM covering $C^i = (C^i_1, \ldots, C^i_n)$ of $\Delta_{n-1}$ defined by $C^i_j \coloneqq \phi(P^i_j \cap \Sigma_n)$ for every $i,j \in [n]$. As $\phi$ is a bijection, a point $\xb$ lies in $C^i_j$ if and only if $\phi^{-1}(\xb)$ lies in $P^i_j \cap \Sigma_n$. Thus, we can efficiently implement $g^i$ as a polynomial-time function which makes at most one call to $f^i$.

Next, we verify that the $(C^{i}_1, \ldots, C^i_n)$ are sparse KKM coverings of $\Delta_{n-1}$. Fix an agent $i$. It is immediate that the sets $C^i_j$ are closed, as the $P^i_j$ are closed and $\phi$ is a homeomorphism. We now check that $F_S \subseteq \bigcup_{j \in S} C^i_j$ for every $S \subseteq [n]$. Suppose that $S = [n]$. By assumption \eqref{assumption:covering} on preference sets, $P^i_1, \ldots, P^i_n$ are a covering of $\Sigma_n$ and $\phi$ is a bijection from $\Sigma_n$ to $\Delta_{n-1}$. Hence, $C^i_1, \ldots, C^i_n$ is a covering of $\Delta_{n-1}$. Now suppose $S \subsetneq [n]$ and let $\xb \in F_S$, so $x_j = 0$ for all $j \notin S$. Let $T = [n] \setminus S$. We pick a permutation $\pi$ of $[n]$ such that $T = \{\pi(1), \ldots, \pi(|T|)\}$ and $\xb \in \Delta_{\pi}$, so $x_{\pi(j)} = 0$ for all $j \in [|T|]$. Let $\pb = \phi^{-1}(\xb)$. By the definition of $\phi$ in \eqref{eq:phi-inverse}, we have $p_{\pi(j)} = 1$ for all $j \in [|T|]$. It follows by assumption \eqref{assumption:bounding-box} on preference sets that $\pb \notin \bigcup_{j \in [|T|]} P^i_{\pi(j)} = \bigcup_{k \in T} P^i_{k}$. As $(C^i_1, \ldots, C^i_n)$ is a covering of $\Delta_{n-1}$, we have $\phi(\pb) = \xb \in \bigcup_{k \in [n]} C^i_k$. It follows that $\xb \in \bigcup_{k \in S} C^i_k$. Finally, we use a similar argument to show that $(C^i_1, \ldots, C^i_n)$ is sparse, i.e.~$C^i_j \cap F_{[n] \setminus \{j\}} = \emptyset$ for every $i \in [n]$. Suppose $\xb \in C^i_j$, so $\pb = \phi^{-1}(\xb) \in P^i_j$. Suppose $\xb \in F_{[n] \setminus \{j\}}$, so $x_j = 0$. Letting $\pi$ be a permutation with $\pi(1) = j$, we see that $p_j = 1$ and so, by assumption (ii) of preference sets we have $\pb \notin P^i_j$, a contradiction to $\xb \in C^i_j$.
    
Let $(\xb, \pi)$ be an $\varepsilon'$-approximate Rainbow-KKM solution for our \rkkm{} instance. Thus there exists a point $\xb^i \in C^i_{\pi(i)}$ with $\| \xb - \xb^i \|_1 \leq \varepsilon'$ for every $i \in [n]$. Let $\pb = \phi^{-1}(\xb)$ and $\pb^i = \phi^{-1}(\xb^i)$. Firstly, $\xb^i \in C^i_{\pi(i)}$ implies $\pb^i \in P^i_{\pi(i)}$. Moreover, $\|\pb - \pb^i \|_1 = \|\phi^{-1}(\xb) - \phi^{-1}(\xb^i) \|_1 \leq n^2 \varepsilon' = \varepsilon$ for every $i \in [n]$, as $\phi^{-1}$ is $n^2$-Lipschitz. It follows that $(\pb, \pi)$ is an $\varepsilon$-equilibrium solution for the initial \housing{} instance.
\end{proof}

\begin{theorem}
\label{thm:rkkm-to-housing}
There exists a polynomial-time reduction from \rkkm{} to \housing{}.
\end{theorem}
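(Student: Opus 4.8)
The plan is to reverse the construction from the proof of \cref{thm:housing-to-rkkm}: use the homeomorphism $\phi:\Sigma_n\to\Delta_{n-1}$ to pull each KKM covering back to a family of preference sets on $\Sigma_n$. The only genuine obstruction is Gale's bounding-box assumption \eqref{assumption:bounding-box}. Since $\phi^{-1}$ sends every point of the face $F_{[n]\setminus\{j\}}$ to a price $\pb$ with $p_j=1$, a pulled-back set $\phi^{-1}(C_j)$ can satisfy \eqref{assumption:bounding-box} only when $C_j$ avoids $F_{[n]\setminus\{j\}}$, i.e.\ only when the covering is \emph{sparse}. I therefore first apply \cref{proposition:sparse-KKM-covering} and assume without loss of generality that the input \rkkm{} instance $(\varepsilon,g^1,\dots,g^n)$ has sparse KKM coverings $C^1,\dots,C^n$ of $\Delta_{n-1}$.

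From such an instance I build an $n$-\housing{} instance with approximation parameter $\varepsilon'=\varepsilon/(4n)$ and preference sets $P^i_j\coloneqq\phi^{-1}(C^i_j\cap\Delta_{n-1})$ for $j\in[n]$, together with $P^i_0\coloneqq\R^n$. The function $f^i$ is computed by testing whether the query price $\pb$ lies in $\Sigma_n$, and if so returning $g^i(\phi(\pb),j)$; evaluating $\phi$ only requires sorting $\pb$ to find the relevant permutation and applying the closed form \eqref{eq:phi}, so this is polynomial time with at most one call to $g^i$. I then check Gale's assumptions. Closedness of $P^i_j$ holds because $C^i_j\cap\Delta_{n-1}$ is compact and $\phi^{-1}$ is continuous. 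Assumption \eqref{assumption:bounding-box} holds with $M=1$: we have $P^i_j\subseteq\Sigma_n\subseteq[0,1]^n$, and if $\pb=\phi^{-1}(\xb)\in P^i_j$ had $p_j\ge 1$ then $p_j=1$, which by \eqref{eq:phi-inverse} forces $x_j=0$, so $\xb\in C^i_j\cap F_{[n]\setminus\{j\}}$, contradicting sparsity. Assumption \eqref{assumption:covering} holds because $C^i_1,\dots,C^i_n$ cover $\Delta_{n-1}=F_{[n]}$ and $\phi$ is a bijection onto $\Delta_{n-1}$, so their $\phi^{-1}$-images cover $\Sigma_n$. Thus the instance is valid, and by \cref{thm:housing-equilibrium-existence} it has a market equilibrium.

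For the solution map, let $(\pb,\pi)$ be an $\varepsilon'$-equilibrium, witnessed by points $\pb^i\in P^i_{\pi(i)}\subseteq\Sigma_n$ with $\|\pb-\pb^i\|_1\le\varepsilon'$ (the witnesses themselves need not be known to us). Since $\pb$ may lie slightly outside $\Sigma_n$, project it: clamp $\pb$ coordinatewise into $[0,1]^n$ to get $\hat\pb$ — which moves it by at most $\varepsilon'$ in $L_1$, because $\pb^1\in[0,1]^n$ — and then zero out the smallest coordinate of $\hat\pb$ to obtain $\tilde\pb\in\Sigma_n$, moving it by an amount at most that smallest coordinate, which is at most $2\varepsilon'$ since $\pb^1$ has a zero coordinate. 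Hence $\|\tilde\pb-\pb^i\|_1\le 4\varepsilon'$ for every $i$. Put $\xb\coloneqq\phi(\tilde\pb)\in\Delta_{n-1}$ and $\xb^i\coloneqq\phi(\pb^i)$; then $\pb^i\in P^i_{\pi(i)}$ gives $\xb^i\in C^i_{\pi(i)}$, and since $\phi$ is $n$-Lipschitz (\cref{lemma:phi-lipschitz}) we get $\|\xb-\xb^i\|_1\le 4n\varepsilon'=\varepsilon$. Thus $(\xb,\pi)$ is an $\varepsilon$-approximate Rainbow-KKM solution, and both the instance map and the solution map are polynomial time.

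The one step I expect to require care is exactly the tension between \eqref{assumption:bounding-box} and the pull-back, which is why the reduction must route through sparse coverings; the projection handling prices outside $\Sigma_n$ and the remaining verifications are routine. Finally, since this reduction and the one behind \cref{proposition:sparse-KKM-covering} each make one oracle call per query and preserve the dimension, it follows that $n$-\rkkm{} and $n$-\housing{} have the same query complexity, completing the equivalence begun with \cref{thm:housing-to-rkkm}.
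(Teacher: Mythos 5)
Your proof is correct and follows the same core construction as the paper: pull the (sparse) KKM coverings back via $\phi^{-1}$ to obtain preference sets, use sparsity to guarantee Gale's bounding-box assumption, and use the Lipschitz constant of $\phi$ to convert an approximate equilibrium into an approximate Rainbow-KKM point. There is one place where you are actually more careful than the paper's own argument: the paper applies $\phi$ directly to the returned equilibrium price $\pb$, but $\pb$ is only guaranteed to be $\varepsilon'$-close to the witnesses $\pb^i \in \Sigma_n$, not to lie in $\Sigma_n$ itself, so $\phi(\pb)$ need not be defined. The paper notes that some coordinate of $\pb$ is at most $\varepsilon'$, which hints at the needed projection, but never spells it out; you make the projection explicit (clamp to $[0,1]^n$, then zero the smallest coordinate), paying a constant factor in the approximation parameter ($\varepsilon' = \varepsilon/(4n)$ versus the paper's $\varepsilon/n$). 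Your numerics are slightly loose (the minimum coordinate of $\hat\pb$ is bounded by $\varepsilon'$, not $2\varepsilon'$, so $3\varepsilon'$ would suffice in place of $4\varepsilon'$), but this only gives a harmless extra margin. Everything else — the routing through \cref{proposition:sparse-KKM-covering}, the verification of assumptions (i)--(iii), the observation that $f^i$ makes one call to $g^i$, and the query-complexity consequence — matches the paper.
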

\begin{proof}
Suppose $(\varepsilon, g^1, \ldots, g^n)$ is an instance of $\rkkm{}$, and each $g^i$ is associated with KKM covering $C^i_1, \ldots, C^i_n$ of $\Delta_{n-1}$. Without loss of generality, we can assume that the KKM coverings are sparse (cf.~\cref{section:rkkm}). We construct an instance $(\varepsilon', f^1, \ldots, f^n)$ of \housing{} with $\varepsilon' = \frac{\varepsilon}{n}$. Each preference function $f^i$ is associated with preference sets $P^i \coloneqq (P^i_0, \ldots, P^i_n)$ given by $P^i_0 = \R^n$ and $P^i_j = \phi^{-1}(C^i_j \cap \Delta_{n-1})$ for $j \in [n]$. As $\pb \in P^i_0$ for all $\pb \in \R^n$, and $\pb \in P^i_j$ if and only if $\phi(\pb) \in C^i_j \cap \Delta_{n-1}$ for any $j \in [n]$, we see that each $f^i$ can be implemented efficiently with at most one call to $g^i$.

It is immediate that $P^i$ covers $\R^n$, and that $P^i_0$ is closed. Moreover, by assumption the $C^i_j$ are closed and cover $\Delta_{n-1}$. Hence, as $\phi^{-1}$ is a homeomorphism from $\Delta$ to $\Sigma_{n}$, we see that the $P^i_j$ are closed and cover $\Sigma_n$. So $P^1, \ldots, P^n$ satisfy assumptions \eqref{assumption:closed-preference-sets} and \eqref{assumption:covering} on preference sets. Finally, we argue that $p_j \geq 1$ implies $\pb \notin P^i_j$ for every $j \in [n]$, in order to show that the preference sets satisfy assumption \eqref{assumption:bounding-box}. If $\pb \notin \Sigma_n$, then this follows from $P^i_j \subseteq \Sigma_n$. Now suppose $\pb \in \Sigma_n$ and $p_j = 1$. Then $\pb$ lies in $\Sigma_{\pi}$ for permutation $\pi = (j,\ldots)$ as defined in \cref{section:connecting-domains}. Hence $\xb \coloneqq \phi^{-1}(\pb)$ satisfies $x_j = 0$ and so $\xb \in F_{[n]\setminus\{j\}}$. As the KKM coverings are sparse by assumption, we get $\xb \notin C^i_j$.

Now let $(\pb, \pi)$ be an $\varepsilon'$-equilibrium solution for the \housing{} instance. Thus there exists $\pb^i \in P^i_{\pi(i)}$ with $\|\pb - \pb^i\|_1 \leq \varepsilon'$ for every $i \in [n]$. Note that $P^i_{\pi(i)} = \phi^{-1}(C^i_{\pi(i)} \cap \Delta_{n-1})$ is a subset of $\Sigma_n$ by definition of $\phi^{-1}$, so at least one entry of $\pb^i$ is $0$. It follows that $p_j \leq \varepsilon'$ for some $j \in [n]$. We claim that $(\xb, \pi)$ with $\xb = \phi(\pb)$ is an $\varepsilon$-approximate Rainbow-KKM solution for the original $\rkkm{}$ instance. We have $\phi(\pb^i) \in C^i_{\pi(i)}$. Moreover, as $\phi$ is $n$-Lipschitz by \cref{lemma:phi-lipschitz}, $\| \phi(\pb) - \phi(\pb^i) \|_1 \leq n \varepsilon' = \varepsilon$.
\end{proof}

\cref{thm:housing-to-rkkm,thm:rkkm-to-housing} establish that \housing{} is PPAD-complete (in the white-box model) if and only if \rkkm{} is. Moreover, in the proofs above we described how to implement the covering functions using one call to the preference functions, and vice versa. Hence \housing{} and \rkkm{} have the same query complexity in the black-box model.

\begin{corollary}
\label{corollary:housing-rkkm-query-complexity}
\housing{} and \rkkm{} have the same query complexity.
\end{corollary}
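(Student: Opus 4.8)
The plan is to read the claim off directly from the two reductions in \cref{thm:housing-to-rkkm,thm:rkkm-to-housing}, whose key feature is that they are \emph{query-efficient} and dimension-preserving: an instance of $n$-\housing{} is mapped to an instance of $n$-\rkkm{} (and vice versa) in such a way that answering a single oracle call in the constructed instance requires at most a single oracle call to the original instance, and the solution-translation step uses no further oracle calls.

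Concretely, I would first recall that the reduction of \cref{thm:housing-to-rkkm} sends $(\varepsilon, f^1, \ldots, f^n)$ to $(\varepsilon/n^2, g^1, \ldots, g^n)$ with $g^i(\xb, j) = f^i(\phi^{-1}(\xb), j)$, where $\phi^{-1}$ is computed in polynomial time without consulting any oracle, and the solution is translated by applying $\phi^{-1}$ again without further oracle calls. Hence any algorithm solving $n$-\rkkm{} with parameter $\varepsilon/n^2$ using $q$ queries yields an algorithm solving $n$-\housing{} with parameter $\varepsilon$ using $q$ queries, so the query complexity of the latter is at most that of the former. Symmetrically, the reduction of \cref{thm:rkkm-to-housing} sends $(\varepsilon, g^1, \ldots, g^n)$ to $(\varepsilon/n, f^1, \ldots, f^n)$ with $f^i(\pb, 0) \equiv 1$ (requiring no query) and $f^i(\pb, j) = g^i(\phi(\pb), j)$ for $j \in [n]$, again with $\phi$ oracle-independent; so the query complexity of $n$-\rkkm{} with parameter $\varepsilon$ is at most that of $n$-\housing{} with parameter $\varepsilon/n$.

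Chaining the two inequalities shows that, for each fixed $n$, the two query complexities coincide up to replacing $\varepsilon$ by a $\poly(n)$ multiple of itself; in the regimes of interest this means they agree (for instance both $\Theta(\poly(1/\varepsilon))$ and $O(\log(1/\varepsilon))$ are preserved, since $\log(n^2/\varepsilon) = O(\log(1/\varepsilon))$ for constant $n$). There is no genuine obstacle here; the only points to check carefully are the bookkeeping that $\phi$ and $\phi^{-1}$ are evaluated purely computationally and thus add no queries, and that the always-available outside option $P^i_0 = \R^n$ in the \housing{} direction is answered without a query.
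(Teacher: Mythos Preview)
Your proposal is correct and follows exactly the approach the paper takes: the corollary is stated immediately after \cref{thm:housing-to-rkkm,thm:rkkm-to-housing} and is justified simply by observing that each reduction implements the new oracle functions using at most one call to the original oracle (and the solution-translation step is oracle-free). Your explicit bookkeeping of the $\poly(n)$ shift in $\varepsilon$ is more careful than the paper, which leaves this implicit, but the substance is identical.
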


It also follows from the reduction of \cref{thm:housing-to-rkkm} that computing a market equilibrium in the housing market is straightforward when the market consists of two agents. Indeed, we briefly describe how an $\varepsilon$-approximate Rainbow-KKM solution can be found in polynomial time using straightforward binary search on the interval $\Delta_1$ when $n=2$. Suppose $(C^1_1, C^1_2)$ and $(C^2_1, C^2_2)$ are two KKM coverings of $\Delta_1 = \conv \{\eb^1, \eb^2 \}$, and initialize the search interval $[\xb,\yb]$ with $\xb = (0,1)$ and $\yb = (1,0)$. Note that $\xb \in C^1_1$ and $\yb \in C^2_2$, and $\|\xb - \yb \|_1 = 2$. Now repeat the following until $\|\xb - \yb \|_1 \leq \varepsilon$, at which point $\xb$ and $\yb$ are both $\varepsilon$-approximate Rainbow-KKM solutions. Compute the halfway point $\zb = \frac{1}{2}\xb + \frac{1}{2} \yb$. If $\zb$ does not lie in $C^1_1 \cup C^2_2$, then the KKM covering property guarantees $\zb \in C^1_2 \cap C^2_1$, and we have found an exact Rainbow-KKM solution. Otherwise, update the left boundary $\xb$ of the search interval to $\zb$ if $\zb \in C^1_1$ and the right boundary $\yb$ of the search interval to $\zb$ if $\zb \in C^2_2$. As $\|\xb - \yb \|_1$ reduces by at least half in every iteration and we maintain the invariants $\xb \in C^1_1$ and $\yb \in C^2_2$ throughout, this procedure terminates correctly with a running time and query complexity of $O(\log(\frac{1}{\varepsilon}))$.

\section{PPAD-Hardness of \housing{} and \rkkm}\label{section:ppadhardness}
We first prove that \housing{} is PPAD-hard by reducing \cake{} to the intermediate problem \rkkm{}. Our reduction shows that any algorithm for solving the housing market with a fixed number $d$ of agents can be applied to cutting a cake for $d$ players. It also leads to an exponential lower bound $\Omega(\poly (\frac{1}{\varepsilon}))$ on the query complexity of $d$-\rkkm{} and $d$-\housing{} in the black-box model for $d \geq 4$.

In \cref{section:sperner-to-rkkm}, we also provide a reduction from \spernerX{} to 3D-\kkm{}. As 3D-\kkm{} can be reduced to $3$-\rkkm{} by making multiple copies of the same KKM covering, this reduction implies that \housing{} is hard even if we have only three agents with identical preferences. \cake{} and \spernerX{} are known to be PPAD-complete problems \citep{deng2012algorithmic,hollender2023envy,chen2009complexity}.

\subsection{Reducing from the \cake{} problem}
\label{section:cake-cutting}
Consider a cake represented by one-dimensional interval $[0,1]$. The goal is to divide the cake among $d$ players by making $d-1$ cuts and assigning the resulting pieces to agents so that no agent envies another. The preferences of each agent $i \in [n]$ are described by a utility function $u^i$ defined on the set of all possible pieces $[a,b] \subseteq [0,1]$ (with $a \leq b$). These utility functions satisfy
\begin{enumerate}[(i)]
    \item \label{item:cake-1} $u^i(\emptyset) = 0$ and $u^i([a,b]) > 0$ for all $0 \leq a < b \leq 1$,
    \item \label{item:cake-2} $u^i([a,b]) - u^i([a',b']) \leq K (|a-a'| + |b-b'|)$ for any two intervals $[a,b], [a',b'] \subseteq [0,1]$ and some constant $K$.
\end{enumerate}
The first condition is known as \textit{non-negativity} and \textit{hungriness}, and the second condition as \textit{Lipschitz continuity (for intervals)}.

A $(d-1)$-cut dividing the cake $[0,1]$ into $d$ pieces can be represented by a vector $\bm{\xb} \in [0,1]^d$ with $\sum_{i \in [n]} x_i = 1$. Entry $x_k$ denotes the length of the $k$-th piece, and so the $k$-th piece of the cut is the interval $I_k(\xb) \coloneqq [\sum_{j \in [k-1]} x_j, \sum_{j \in [k]} x_j]$. It is easy to see that the space of all $(d-1)$-cuts consists of the simplex $\Delta_{d-1}$. A cut $\xb$ is \textit{envy-free} if there exists a one-to-one assignment of pieces to players so that each player (weakly) prefers their piece to all other pieces, i.e., there exists a permutation $\pi$ of $[d]$ such that $u^i(I_{\pi(i)}(\xb)) \geq u^i(I_k(\xb))$ for all $k \in [n]$. A cut is \textit{$\varepsilon$-envy-free} if $u^i(I_{\pi(i)}(\xb)) \geq u^i(I_k(\xb)) - \varepsilon$.

\begin{theorem}[\cite{stromquist1980how}]
Under assumptions \eqref{item:cake-1} and \eqref{item:cake-2}, there exists an envy-free cut.
\end{theorem}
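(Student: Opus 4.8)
The plan is to derive this as a consequence of the Rainbow-KKM lemma (\cref{lemma:RKKM-point}), exactly in the spirit of the reduction chain sketched in the introduction, since \cake{} with utility functions is essentially the combinatorial content of \rkkm{}. First I would fix the cake $[0,1]$ and the $d$ utility functions $u^1,\ldots,u^d$ satisfying \eqref{item:cake-1} and \eqref{item:cake-2}. For each player $i \in [d]$ I define a candidate cover of the simplex $\Delta_{d-1}$ of $(d-1)$-cuts by setting, for each $j \in [d]$,
\[
C^i_j \coloneqq \{ \xb \in \Delta_{d-1} \mid u^i(I_j(\xb)) \geq u^i(I_k(\xb)) \text{ for all } k \in [d] \},
\]
the set of cuts at which player $i$ finds the $j$-th piece (weakly) best. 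Each $C^i_j$ is closed because $\xb \mapsto I_j(\xb)$ is continuous and $u^i$ is continuous (indeed Lipschitz) on intervals, so the defining weak inequalities are preserved under limits. For every $\xb$, player $i$ has at least one best piece, so $\bigcup_{j \in [d]} C^i_j = \Delta_{d-1}$.

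The key structural step is to verify the KKM covering property: $F_S \subseteq \bigcup_{j \in S} C^i_j$ for every $S \subseteq [d]$. If $\xb \in F_S$, then $x_k = 0$ for every $k \notin S$, which means the piece $I_k(\xb)$ is degenerate, $I_k(\xb) = \{a\}$ for the single point $a = \sum_{j\in[k-1]} x_j$, so by hungriness $u^i(I_k(\xb)) = u^i(\emptyset) = 0$, whereas $u^i(I_j(\xb)) \geq 0$ always and is strictly positive for at least one $j \in S$ (since the total length is $1$, some piece indexed in $S$ has positive length). Hence a best piece for player $i$ at $\xb$ must lie in $S$, giving $\xb \in \bigcup_{j \in S} C^i_j$. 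Therefore $(C^i_1,\ldots,C^i_d)$ is a KKM covering of $\Delta_{d-1}$ for each $i$.

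Now I apply the Rainbow-KKM lemma (\cref{lemma:RKKM-point}) to the $d$ coverings $C^1,\ldots,C^d$: there is a permutation $\pi$ of $[d]$ and a point $\xb \in \Delta_{d-1}$ with $\xb \in C^i_{\pi(i)}$ for every $i$. Unpacking the definition, this says precisely that the cut $\xb$, with piece $\pi(i)$ assigned to player $i$, satisfies $u^i(I_{\pi(i)}(\xb)) \geq u^i(I_k(\xb))$ for all $k \in [d]$ and all $i$; since $\pi$ is a permutation this is a valid one-to-one assignment, so $\xb$ is an envy-free cut. I expect the main (minor) obstacle to be the degenerate-piece bookkeeping in the KKM step — making sure that when several coordinates of $\xb$ vanish one still correctly identifies a positive-length, hence best, piece whose index lies in $S$, and handling the edge case where $u^i$ need only be defined/continuous on nondegenerate intervals by appealing to hungriness to assign value $0$ to points. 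Everything else is a direct translation between the envy-free and Rainbow-KKM formulations.
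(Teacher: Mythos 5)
The paper does not prove this statement itself; it is cited directly from \citet{stromquist1980how}, whose original argument proceeds via a moving-knife/topological argument rather than via the KKM lemma. Your proof is therefore a genuinely different (and quite clean) route: you derive Stromquist's existence theorem as an immediate corollary of Gale's Rainbow-KKM lemma (\cref{lemma:RKKM-point}). What is worth noting is that your construction of the coverings $C^i_j$ and your verification of the KKM property are \emph{exactly} the same construction the paper uses later in the proof of \cref{thm:cake-to-rkkm}, where it appears in the guise of a computational reduction from \cake{} to \rkkm{}. So you have, in effect, observed that the mathematical content of that reduction already contains a self-contained existence proof once one invokes Gale's lemma --- a unification the paper leaves implicit. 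Each approach buys something: Stromquist's original proof is independent of Gale's machinery (and historically prior), while your route factors the existence theorem through the same lemma that underlies the housing-market existence result, making the structural kinship between the two economics problems visible.

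One small point deserves more care than your parenthetical acknowledgement gives it. The hypotheses as stated give $u^i(\emptyset)=0$ and $u^i([a,b])>0$ for $a<b$, but say nothing explicit about a degenerate interval $[a,a]$. The Lipschitz condition alone does not force $u^i([a,a])=0$; it only forces $u^i([a,a]) = \lim_{b\to a^+}u^i([a,b])$, and nothing prevents that limit from being positive. If it were positive, your verification that a best piece at $\xb\in F_S$ must be indexed in $S$ would fail. You need to take the (standard, and implicitly adopted in the paper's own \cref{thm:cake-to-rkkm}) convention that a degenerate interval is identified with $\emptyset$, so $u^i([a,a])=0$. Making this identification explicit closes the gap; without it the KKM covering property is not actually established from the stated axioms.
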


In the corresponding total search problem \cake{} stated below, we assume that $u^1, \ldots, u^d$ are given as function oracles or polynomial-time algorithms, depending on the computational model. We let $d$-\cake{} denote the problem with a fixed number $d$ of players.

\begin{problem}{\cake{}}
\textbf{Input:} Approximation parameter $\varepsilon > 0$. Utility functions $u^1, \ldots, u^d$ for agents $[d]$ satisfying assumptions \eqref{item:cake-1} and \eqref{item:cake-2}.

\textbf{Output:} $\varepsilon$-envy-free cut $\xb \in \Delta_{d-1}$ and corresponding assignment $\pi$ of pieces to players.
\end{problem}

\begin{theorem}[\citep{hollender2023envy}]
\label{thm:cake-is-PPAD-complete}
\cake{} is PPAD-complete.
\end{theorem}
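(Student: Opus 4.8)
The plan is to prove the two halves separately: membership in PPAD via a Sperner-type argument, and PPAD-hardness by reduction from a canonical PPAD-complete problem. Totality of the search problem is already supplied by Stromquist's theorem, so the only work is the two reductions.

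For \textbf{membership}, I would follow the Simmons--Su ``rental harmony'' approach. Given utilities $u^1,\dots,u^d$ satisfying \eqref{item:cake-1}--\eqref{item:cake-2} together with the parameter $\varepsilon$, triangulate the cut simplex $\Delta_{d-1}$ with a mesh fine enough that any two cuts lying in a common cell differ by at most $\varepsilon/(dK)$ in each coordinate; the number of vertices is polynomial in $1/\varepsilon$, $K$ and $d$, so the triangulation admits a succinct description. Colour the vertices of the triangulation by assigning each vertex an ``owner'' $i\in[d]$ (cyclically, so that every cell carries all $d$ owners) together with a piece index $j$ that owner $i$ weakly most prefers at that cut, with a fixed tie-breaking rule. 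Hungriness (condition \eqref{item:cake-1}) ensures that on every boundary face where some pieces have length zero no agent ever points to an empty piece, which makes the colouring proper in the Sperner/KKM sense; a fully-coloured (panchromatic) cell therefore exists by the coloured variant of Sperner's lemma, and any vertex $\xb$ of that cell together with the induced bijection $\pi$ is, by Lipschitz continuity, an $\varepsilon$-envy-free cut. Since \sperner{} lies in PPAD and constructing the labelled triangulation is polynomial-time, this is a polynomial-time reduction from \cake{} to \sperner{}, placing \cake{} in PPAD.

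For \textbf{PPAD-hardness}, I would reduce from a high-dimensional \sperner{}-type problem (equivalently from \eol{}), building utility functions $u^1,\dots,u^d$ whose only envy-free behaviour simulates the colouring of the given Sperner instance. The idea is to use a block of ``structural'' agents whose valuations are concentrated near prescribed cut locations, so that envy-freeness forces the $d-1$ cuts to lie approximately at grid points encoding a point of the simplex, plus ``label'' agents whose preferences over the resulting pieces read off the Sperner colour at that grid point; an envy-free assignment of pieces to agents then exists only when the colours are arranged in the pattern that signals a panchromatic subsimplex, and decoding the cut back into a Sperner solution is immediate.

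I expect the \textbf{main obstacle} to be this hardness direction in the utility-function model. Unlike the demand-function model of \citet{deng2012algorithmic}, here an agent's ``demand'' (which piece it prefers) is induced by a Lipschitz utility, so one cannot install arbitrarily sharp thresholds; the gadgets must be piecewise-linear and Lipschitz yet still rigid enough that an \emph{approximate} envy-free cut is forced to encode an \emph{approximate} Sperner solution, with every error parameter tracked through the Lipschitz constants. Making this robustness argument quantitative --- choosing valuation slopes, block sizes, and the relation between the cake-cutting $\varepsilon$ and the Sperner mesh so that the reduction stays polynomial --- is the technical heart of the proof, and is precisely the refinement carried out by \citet{hollender2023envy}.
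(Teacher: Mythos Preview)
The paper does not prove this theorem; it is stated with attribution to \citet{hollender2023envy} and invoked as a black box in the reduction chain of \cref{fig:PPAD-reductions}. There is therefore no in-paper proof to compare your proposal against.

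That said, your sketch is a faithful high-level outline of how the cited result is obtained: membership via the Simmons--Su/Sperner labeling route (essentially the same machinery the present paper uses in \cref{section:membership} for \rkkm{}), and hardness by a gadget reduction from a \sperner{}-type problem, with the main technical burden correctly identified as making the gadgets survive the passage from demand functions to Lipschitz utility functions. One inaccuracy worth flagging: you write that ``the number of vertices is polynomial in $1/\varepsilon$, $K$ and $d$''. It is not --- a triangulation of $\Delta_{d-1}$ with mesh $\Theta(\varepsilon/K)$ has $(\Theta(K/\varepsilon))^{d-1}$ vertices, exponential in $d$. PPAD membership does not require the triangulation to be small; it requires only that the colour of any single vertex be computable in time polynomial in the input, after which the reduction hands off to \sperner{} (which is in PPAD for arbitrary dimension). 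Your conclusion is correct, but the justification you gave for it is not.
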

An earlier result by \citet{deng2012algorithmic} showed that \cake{} is PPAD-complete when preferences are not specified via utility functions but by means of more general `preferences sets', analogous to our model of the housing market.

We are now ready to reduce \cake{} to \rkkm{}, which implies that \rkkm{} and \housing{} are PPAD-hard. Our proof of \cref{thm:cake-to-rkkm} relies on the representation of cake preferences in terms of utilities.
\begin{theorem}
\label{thm:cake-to-rkkm}
There exists a polynomial-time reduction from $d$-\cake{} to $d$-\rkkm{}.
\end{theorem}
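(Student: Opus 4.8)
The plan is to associate to each cake-cutting player $i \in [d]$ a single KKM covering $C^i = (C^i_1, \ldots, C^i_d)$ of $\Delta_{d-1}$ whose piece $C^i_k$ consists of all cuts $\xb$ at which player $i$ (weakly) prefers the $k$-th piece to every other piece. Formally I would set
\[
C^i_k \coloneqq \{ \xb \in \Delta_{d-1} \mid u^i(I_k(\xb)) \geq u^i(I_j(\xb)) \text{ for all } j \in [d]\},
\]
possibly after a technical relaxation (see below), and let $g^i$ be the associated covering function, which is implementable with a constant number of calls to $u^i$ — note $I_j(\xb)$ is computed from $\xb$ by prefix sums, so the reduction map $f$ is polynomial-time. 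I would also need to rescale the approximation parameter, feeding the $\rkkm{}$ instance some $\varepsilon'$ chosen so that an $\varepsilon'$-approximate Rainbow-KKM solution translates back to an $\varepsilon$-envy-free cut via the Lipschitz constant $K$ from assumption \eqref{item:cake-2}.

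The three things to verify are: (a) each $C^i$ is a genuine KKM covering of $\Delta_{d-1}$; (b) a Rainbow-KKM solution for $(C^1, \ldots, C^d)$ yields an envy-free assignment; and (c) the approximate versions line up. For (a), closedness of $C^i_k$ follows from continuity of $\xb \mapsto u^i(I_k(\xb))$ (a consequence of the Lipschitz assumption), and for each $x\in\Delta_{d-1}$ the maximizing piece index gives $\xb\in\bigcup_k C^i_k$, so $C^i_1,\dots,C^i_d$ covers $\Delta_{d-1}$. The face condition $F_S \subseteq \bigcup_{k \in S} C^i_k$ is the crux: if $\xb \in F_S$, then $x_k = 0$ for $k \notin S$, meaning the pieces $I_k(\xb)$ with $k \notin S$ are empty (degenerate singletons), so by hungriness \eqref{item:cake-1} player $i$ assigns them utility $0$ and strictly prefers any nonempty piece, which forces the argmax to lie in $S$; hence $\xb \in \bigcup_{k\in S} C^i_k$. (I should be slightly careful: a cut in $F_S$ may still give zero-length pieces indexed \emph{inside} $S$, but that is fine since we only need the argmax to avoid indices outside $S$, and if \emph{all} pieces in $S$ are also empty then $S$ would have to be a singleton with that one piece being the whole cake — handled directly.) For (b), the permutation $\pi$ and point $\xb$ from \cref{lemma:RKKM-point} satisfy $\xb \in C^i_{\pi(i)}$ for all $i$, which by definition of $C^i_{\pi(i)}$ says exactly that player $i$ weakly prefers piece $\pi(i)$ — an envy-free cut; the approximate statement follows analogously using that $\| \xb - \xb^i \|_1 \le \varepsilon'$ together with the Lipschitz bound to control $|u^i(I_k(\xb)) - u^i(I_k(\xb^i))|$.

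I expect the main obstacle to be item (c) combined with a subtlety in (a): the naive ``argmax'' sets $C^i_k$ are closed, but for the \emph{approximate} solution the relevant sets $C_i$ referenced in the $\rkkm{}$ output condition must still produce the right guarantee, and one must track how the $L_1$ distance between cuts propagates through the prefix-sum map into interval-endpoint distances (each endpoint moves by at most $\|\xb - \xb^i\|_1$), then through the $K$-Lipschitz utility. So the bookkeeping gives roughly $|u^i(I_k(\xb)) - u^i(I_k(\xb^i))| \le 2K\varepsilon'$, and choosing $\varepsilon' = \varepsilon/(4K)$ (or similar) makes $\xb$ an $\varepsilon$-envy-free cut. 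A secondary point worth being careful about is whether the Lipschitz constant $K$ is part of the input or merely assumed to exist; if it is not given explicitly, the reduction still works because $f$ can use it symbolically while $g$ uses it to pick the output, or one argues $K$ can be assumed to be encoded in the instance. Finally, I would remark that since $d$-\cake{} is PPAD-hard for $d\ge 4$ by \cref{thm:cake-is-PPAD-complete} (via \citet{hollender2023envy,deng2012algorithmic}), this reduction establishes PPAD-hardness of $d$-\rkkm{} and hence (by \cref{thm:rkkm-to-housing}) of $d$-\housing{} for $d \ge 4$, and transfers the known query-complexity lower bound for \cake{} to $\rkkm{}$ and \housing{}.
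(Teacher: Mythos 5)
Your proposal is correct and follows essentially the same route as the paper: same argmax-based covering sets $C^i_k$, same use of hungriness to verify the KKM face condition, and the same choice $\varepsilon' = \varepsilon/(4K)$ with the Lipschitz bound applied to the interval endpoints (prefix sums) to chain the approximate envy-freeness guarantee. The only minor cosmetic differences are that the paper phrases the face-covering argument by directly noting $x_k > 0$ for some $k \in S$ (avoiding your ``all pieces empty'' side case, which in fact cannot occur since $\sum_k x_k = 1$) and counts at most $d$ calls to $u^i$ rather than a ``constant'' number.
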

\begin{proof}
Suppose $(\varepsilon, u^1, \ldots, u^d)$ is an instance of $d$-\cake{}. We define a KKM covering $C^i = (C^i_1, \ldots, C^i_d$) of $\Delta_{d-1}$ for each player $i \in [d]$ by
\[
C^{i}_j \coloneqq \{\xb \in \Delta_{d-1} \mid u^i(I_{j}(\xb)) \geq u^i(I_k(\xb)) \text{ for all } k \in [d] \}.
\]
Hence, $C^{i}_j$ consists of all the cake cuts for which agent $i$ prefers piece $j$. To see that $C^i$ is indeed a KKM covering of $\Delta_{d-1}$, consider any $S \subseteq [d]$ and let $\xb \in F_S = \conv \{\bm{e}^k \mid k \in S \}$. By choice of $\xb$, we have $x_k = 0$ for all $k \notin S$ and $x_k > 0$ for some $k \in S$. The non-negativity property of the utility function $u^i$ then implies that player $i$ prefers some piece of the cake indexed by $S$ to all the pieces not indexed by $S$. It follows that $\xb \in C^i_k$ for some $k \in S$, so $C^i$ is a KKM covering.

We now construct an instance of \rkkm{}. Fix approximation parameter $\varepsilon' = \frac{\varepsilon}{4K}$ (where $K$ is the Lipschitz constant of the utility functions $u^i$), and let $g^1, \ldots, g^d$ be the KKM covering functions corresponding to $C^1, \ldots, C^d$. As $g^i(\xb,j) = 1$ if $\xb \in C^i_j$ and $g^i(\xb,j) = 0$ otherwise, it is straightforward to see that $g^i$ can be implemented efficiently with at most $d$ calls to the utility function $u^i$ of player $i$.

Suppose $(\xb, \pi)$ is a solution to this \rkkm{} instance $(\varepsilon', g^1, \ldots, g^d)$. Hence, there exists a collection of points $\xb^1, \ldots, \xb^d$ such that $\xb^i \in C^i_{\pi(i)}$ and $\| \xb - \xb^i \|_1 \leq \varepsilon'$ for all $i \in [d]$. We will now show that $(\xb,\pi)$ forms an $\varepsilon$-approximate solution to the \cake{} instance.

Fix a player $i \in [d]$. Our goal is to show that she $\varepsilon$-approximately prefers the $\pi(i)$-th piece of cut $\xb$ to all other pieces specified by $\xb$. Fix some $k \in [d]$, and the $k$-th pieces $I_k(\xb) = [a,b]$ and $I'_k(\xb) = [a',b']$ of cuts $\xb$ and $\xb^i$. Note that $a = \sum_{l=1}^{k-1} x_l$ and $a' = \sum_{l=1}^{k-1} x^i_l$, so $\|\xb - \xb^i \|_1 \leq \varepsilon'$ implies $|a-a'| \leq \varepsilon'$. Likewise, $|b-b'| < \varepsilon'$. By the Lipschitz continuity of the agent's utility function, we thus have
\[
u^i([a,b]) - u^i([a',b']) \leq K ( |a-a'| + |b-b'| ) \leq 2\varepsilon'K = \frac{\varepsilon}{2}.
\]
As player $i$ prefers the $\pi(i)$-th piece of cut $\xb^i$ by choice of $\xb^i$, it follows that $u^i(I_{\pi(i)}(\xb)) \geq u^i(I_k(\xb^i)) + \varepsilon$ for every $k \in [d]$.
\end{proof}

\citet{hollender2023envy} showed that \cake{} with a fixed number of players $d \geq 4$ has a query complexity lower bound of $\Omega(\poly(\frac{1}{\varepsilon}))$. The reduction in our proof of \cref{thm:cake-to-rkkm} thus implies the same lower bound on the query complexity for $d$-\housing{} and $d$-\rkkm{}.

\begin{corollary}
$d$-\housing{} and $d$-\rkkm{} with $d \geq 4$ have a query complexity of $\Omega(\poly(\frac{1}{\varepsilon}))$.
\end{corollary}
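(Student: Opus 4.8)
The statement to prove is the corollary: $d$-\housing{} and $d$-\rkkm{} with $d \geq 4$ have a query complexity of $\Omega(\poly(\frac{1}{\varepsilon}))$. The plan is to chain together the facts already established in the excerpt. The argument is almost entirely bookkeeping about how query complexity lower bounds transfer along the polynomial-time reductions we have built, together with the preservation of the number of agents/players/dimension at each step.

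First I would invoke the result attributed to \citet{hollender2023envy} that $d$-\cake{} has query complexity $\Omega(\poly(\frac{1}{\varepsilon}))$ for every fixed $d \geq 4$. Second, I would apply \cref{thm:cake-to-rkkm}: this is a polynomial-time reduction from $d$-\cake{} to $d$-\rkkm{} which, crucially, makes only $O(d)$ queries to the utility oracle $u^i$ per query to the covering oracle $g^i$ and sets $\varepsilon' = \frac{\varepsilon}{4K}$ (with $K = \poly$-bounded). Hence any algorithm solving $d$-\rkkm{} with $q(\frac{1}{\varepsilon'})$ queries yields an algorithm for $d$-\cake{} using $O(d \cdot q(\frac{4K}{\varepsilon}))$ queries; since $d$ is fixed and $K$ polynomially bounded, a $\poly(\frac{1}{\varepsilon})$ lower bound for \cake{} forces a $\poly(\frac{1}{\varepsilon})$ lower bound for $d$-\rkkm{}. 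Third, I would transfer this to $d$-\housing{}: by \cref{corollary:housing-rkkm-query-complexity} (itself a consequence of \cref{thm:housing-to-rkkm,thm:rkkm-to-housing}, both dimension-preserving and query-preserving up to a constant-factor blowup in $\frac{1}{\varepsilon}$), $d$-\housing{} and $d$-\rkkm{} have the same query complexity, so the bound carries over.

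The one genuinely substantive point — and the step I expect to require the most care — is confirming that the reductions involved preserve query complexity lower bounds, not merely existence of reductions. A polynomial-time reduction in the Turing-machine sense need not, a priori, give a query-efficient black-box reduction; what saves us is that in each of \cref{thm:cake-to-rkkm}, \cref{thm:housing-to-rkkm}, and \cref{thm:rkkm-to-housing} the new oracle is implemented by a constant number of calls to the old oracle (here $d = O(1)$ calls), and the approximation parameter is only rescaled by a polynomially-bounded factor ($4K$, $n^2$, $n$ respectively, all $O(1)$ for fixed $d$). Composing these, a solver for $d$-\housing{} with query complexity $q$ gives a solver for $d$-\cake{} with query complexity $O(q)$ evaluated at a polynomially-rescaled accuracy; contrapositively, the $\Omega(\poly(\frac{1}{\varepsilon}))$ bound for \cake{} propagates. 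I would state this transfer principle explicitly as a short paragraph, then conclude the corollary in one line. No nontrivial calculation is needed beyond tracking these constant factors.
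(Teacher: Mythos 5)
Your proposal is correct and follows exactly the route the paper uses: invoke the query lower bound for $d$-\cake{} from \citet{hollender2023envy}, push it through the black-box reduction in \cref{thm:cake-to-rkkm} (noting that the constructed covering oracle is implemented with $O(d)$ calls to the utility oracle and that the approximation parameter is rescaled by a bounded factor), and then transfer from $d$-\rkkm{} to $d$-\housing{} via \cref{corollary:housing-rkkm-query-complexity}. The paper states this in one sentence without spelling out the ``lower bounds transfer backward along query-efficient reductions'' principle; your write-up makes that bookkeeping explicit, which is the only substantive content of the argument, and it is handled correctly.
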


\subsection{The reduction from the \spernerX{} problem}
\label{section:sperner-to-rkkm}
We now develop the reduction from \spernerX{} to 3-\rkkm{}. We do this via the intermediary problem 3D-\kkm{}. As we know from \cref{section:equivalence-reductions} that $n$-\housing{} and $n$-\rkkm{} are computationally equivalent, \cref{thm:n-to-n+1} then implies that both $n$-\housing{} and $n$-\rkkm{} are PPAD-hard for any $n \geq 3$.

\paragraph{The \spernerX{} problem.}
Consider the triangulation of triangle $N \Delta_{2} = \conv \{N\eb^1, N\eb^2, N\eb^3 \}$ with side lengths $N$ into equilateral cells of side length $1$. This is illustrated for $N=4$ in \cref{fig:sperner-triangulation}. The vertices of this triangulation are $V_N \coloneqq \{\vb \in \N_0^3 \mid v_1 + v_2 + v_3 = N\}$. A \textit{coloring} $c:V_N \to \{1,2,3\}$ assigns each vertex one of three colors. It is a \textit{Sperner coloring} if $c(\vb) \neq i$ when $v_i=0$. That is, each corner $N\eb^i$ of the triangle is colored in $i$, and each point on the boundary between $N\eb^i$ and $N\eb^j$ is colored in $i$ or $j$. A cell of the triangulation is \textit{trichromatic} if each of its three vertices receives a different color. Sperner's lemma (\cref{lemma:sperners-lemma}) guarantees the existence of a trichromatic cell.

\begin{lemma}[\citep{sperner1928neuer}]
\label{lemma:sperners-lemma}
Every Sperner coloring of every triangulation of a triangle has at least one trichromatic cell.
\end{lemma}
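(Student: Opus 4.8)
The plan is to prove Sperner's lemma by the classical parity (door-counting) argument, which also makes transparent why the problem lies in PPAD. First I would reduce the two-dimensional statement to a one-dimensional counting fact on edges. Call an edge of the triangulation a \emph{door} if its two endpoints are colored $1$ and $2$. For each cell $T$ of the triangulation, count the number $d(T)$ of doors on the boundary of $T$: a short case check on the (unordered) multiset of colors appearing at the three vertices of $T$ shows that $d(T)$ is odd if $T$ is trichromatic (exactly one door) and even otherwise (zero or two doors). Hence $\sum_T d(T) \equiv \#\{\text{trichromatic cells}\} \pmod 2$.

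Next I would evaluate $\sum_T d(T)$ a second way, by summing over doors rather than cells. Every interior edge of the triangulation borders exactly two cells and every boundary edge borders exactly one, so $\sum_T d(T) = 2 \cdot (\text{number of interior doors}) + (\text{number of doors on the boundary of } N\Delta_2)$. Modulo $2$ this equals the number of doors lying on the outer boundary of the big triangle. The Sperner coloring condition forces all such boundary doors onto the single side joining $N\eb^1$ and $N\eb^2$ (the other two sides cannot carry both colors $1$ and $2$), and on that side the colors form a sequence starting at $1$ and ending at $2$, so the number of $1$--$2$ adjacencies along it is odd. Combining the two counts gives $\#\{\text{trichromatic cells}\} \equiv 1 \pmod 2$, so in particular this number is nonzero, which is exactly the claim.

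For the one-dimensional sub-fact used twice above — that a sequence of colors from $\{1,2\}$ (or with the third color also allowed, restricted to the two relevant sides) that begins with $1$ and ends with $2$ has an odd number of adjacent $1$--$2$ pairs — I would argue by induction on the length of the sequence, or simply observe that each such adjacency is a sign change in a walk that must change sign overall. This is entirely routine.

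The only real subtlety, and the step I would be most careful about, is the case analysis identifying $d(T) \bmod 2$ for a single cell: one must check all color patterns $\{1,1,1\}$, $\{2,2,2\}$, $\{3,3,3\}$, $\{1,1,2\}$, $\{1,2,2\}$, $\{1,1,3\}$, $\{1,3,3\}$, $\{2,2,3\}$, $\{2,3,3\}$, and $\{1,2,3\}$, and confirm that the count of $1$--$2$ edges on the triangle's three sides is $1$ precisely in the last case and even (namely $0$ or $2$) in every other case. Once this local parity computation is in hand, the global double-count is immediate and the proof is complete. (I note that this argument is constructive: following doors from cell to cell defines paths whose endpoints are trichromatic cells and boundary doors, which is precisely the $\eol$-type structure underlying membership of \sperner{} in PPAD.)
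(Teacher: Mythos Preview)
Your argument is the standard and correct parity (door-counting) proof of Sperner's lemma; the case analysis on $d(T)$, the double-count over interior versus boundary edges, and the one-dimensional endpoint observation on the $\eb^1$--$\eb^2$ side are all sound. The paper, however, does not prove this statement at all: \cref{lemma:sperners-lemma} is simply quoted as a classical result with a citation to \citet{sperner1928neuer}, and the text moves directly on to defining the computational problem \spernerX{}. So there is nothing to compare against; you have supplied a full proof where the paper gives none. Your closing remark about the path-following structure is also apt and matches the reason, mentioned elsewhere in the paper, that \sperner{} lies in PPAD.
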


This leads naturally to the computational problem of finding a trichromatic cell. We assume that the size $N$ of the triangulation is given as an input parameter in a binary representation, as the problem is easy otherwise.

\begin{problem}{\spernerX{}}
\textbf{Input:} The size $N \in \N$ of the triangle, and a Sperner coloring function $c:V_N \to \{1,2,3\}$ of the vertices of the triangulation of $N\Delta_2$.

\textbf{Output:} The three vertices of a trichromatic cell in the (unit) triangulation of $N \Delta_2$.
\end{problem}

The \spernerX{} problem stated above is computationally equivalent to the version considered in \citep{chen2009complexity}, which considers a triangulation of a right-angled triangle. It is straightforward to see that we can reduce between these problems. \citet{chen2009complexity} settled the computational complexity of \spernerX{}.

\begin{theorem}[\citep{chen2009complexity}]
    \spernerX{} is PPAD-complete.
\end{theorem}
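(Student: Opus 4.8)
The plan is to prove membership of \spernerX{} in PPAD and PPAD-hardness separately. For \textbf{membership}, I would make the classical ``door-and-room'' proof of Sperner's lemma (\cref{lemma:sperners-lemma}) algorithmic. Call a unit edge of the triangulation a \emph{door} if its two endpoints are colored $1$ and $2$. A cell has exactly one door iff it is trichromatic, exactly two doors iff its colors are $\{1,1,2\}$ or $\{1,2,2\}$, and no door otherwise; the ``outside'' region has one door per $1$--$2$ edge on the boundary, and (after prepending a standard boundary frame that leaves a unique external door, or simply by taking the first color change along the side from $N\eb^1$ to $N\eb^2$ as the canonical entry door) we may assume it has exactly one. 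The graph on cells-plus-outside whose edges are doors then consists of paths and cycles, with the outside node an endpoint of a path whose other endpoint is a trichromatic cell; fixing a consistent orientation of the doors turns this into an \eol{} instance. Since $N$ is given in binary the triangulation has $\Theta(N^2)$ cells, but each cell is named by $O(\log N)$ bits and its at most two door-neighbours are found with a constant number of queries to $c$, so this is an \eol{} instance of polynomial size computable in polynomial time, and a solution of it is a trichromatic cell. Hence \spernerX{} reduces to \eol{} and lies in PPAD.

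For \textbf{hardness}, I would reduce from \eol{}. Given an \eol{} instance on $\{0,1\}^m$ with polynomial-size successor and predecessor circuits, the goal is to build, in time $\poly(m)$, a Sperner-colored triangulation of $N\Delta_2$ with $N = 2^{\Theta(m)}$ whose trichromatic cells correspond exactly to the ends of lines other than the designated source. The natural route is to first construct a discrete two-dimensional direction field (equivalently an approximate Brouwer instance) on a fine grid in a square: along a thin ``tube'' tracing each line of the \eol{} graph the field points forward along the tube, and everywhere else it points toward a fixed corner, so the only grid points at which no consistent direction exists are the tube endpoints; the routing of the tubes and the local forks and joins are produced by laying out the evaluation of the successor and predecessor circuits as a planar picture of polynomially bounded description. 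Translating the direction field into a $3$-coloring of grid vertices in the usual Sperner fashion --- color a vertex by a coordinate along which its displacement is non-positive, breaking ties so the Sperner boundary condition holds --- makes trichromatic cells coincide with the line ends, yielding $\eol$ reducing to \spernerX{}.

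The \textbf{main obstacle} is exactly the step that pushes the classical Brouwer-hardness constructions into dimension three rather than two: drawn in the plane, the tubes tracing different lines of a general \eol{} graph cannot all be kept disjoint, and at a point where two directed tubes would cross the field cannot point forward along both at once. The heart of the reduction is therefore a planar \emph{crossing gadget} --- a bounded-size patch of the grid with a fixed $3$-coloring through which two tubes are threaded simultaneously without the two lines interacting and without creating any spurious trichromatic cell --- and exhibiting such a gadget and checking by a finite case analysis that it introduces no extra solutions is the real work. Everything else (assembling the tube layout from the circuits, choosing the boundary colorings, and the final correctness proof that trichromatic cells are precisely the non-source line ends) is careful but routine bookkeeping. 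Composing the crossing gadget with the tube construction gives the polynomial-time reduction from \eol{} to \spernerX{}, which together with the membership argument shows that \spernerX{} is PPAD-complete.
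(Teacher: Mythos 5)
The paper does not prove this statement: it is quoted as a known result and attributed to \citet{chen2009complexity}, so there is no in-paper proof to compare your attempt against. Evaluating your proposal on its own merits: your \emph{membership} argument is the standard ``door-and-room'' path-following reduction to \eol{} and is essentially correct; the only point worth spelling out is why the outside region can be treated as having a single canonical door with a well-defined orientation (your fix of taking the first $1$--$2$ color change along the $N\eb^1$--$N\eb^2$ side works, but one should check it pairs correctly with the odd parity of such changes so the remaining boundary doors can be matched off internally).

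Your \emph{hardness} argument, however, has a real gap at precisely the step you flag as ``the real work.'' Announcing that a bounded-size planar crossing gadget exists and then moving on is not a proof: that gadget \emph{is} the theorem. Indeed, the reason PPAD-hardness of Sperner was known in dimension three since \citet{papadimitriou1994complexity} but remained open in dimension two until \citet{chen2009complexity} is exactly that one cannot naively thread two directed tubes through each other in the plane without either entangling the two lines of the \eol{} instance or creating spurious trichromatic cells. Chen and Deng's actual argument does not drop a crossing gadget into a generic tube layout for arbitrary \eol{}; it first passes to a carefully structured intermediate problem (a restricted planar variant of the line-following problem, which they call RLEAFD) and proves that variant PPAD-complete, and only then embeds it into the grid. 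The geometry of the intermediate problem is what makes the crossing analysis tractable. Without that structure, the claim that a finite case analysis will show a local patch ``introduces no extra solutions'' is unsupported --- and it is plausible that for a general layout no such patch works. So the proposal correctly identifies \emph{where} the difficulty is but supplies no idea for overcoming it, which is the substance of the cited theorem.
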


We now turn to the problem of reducing from \spernerX{} to \kkm{}. Fix the triangulation of $N \Delta_2$ and a Sperner coloring $c$. For simplicity, consider the variant of \kkm{} that seeks to find an $\varepsilon$-approximate KKM solution given a KKM covering of $N\Delta_2$, and $N$ is given as an input parameter alongside the covering function. This is equivalent to our original definition of \kkm{} with approximation parameter $\varepsilon' = \frac{\varepsilon}{N}$. Define the following KKM covering of $N\Delta_2$. For $i \in \{1,2,3\}$, let
\begin{equation}
\label{eq:Sperner-KKM-covering}
C_i \coloneqq \{ \xb \in N\Delta_2 \mid c(\vb)=i \text{ for some } \vb \in \argmin_{\vb \in V} \| \xb-\vb \|_1\}.
\end{equation}
Hence, $C_i$ contains all the points whose closest vertex is $i$-colored. The three covering sets are illustrated in \cref{fig:sperner-triangulation}; each $C_i$ is the union of identically-colored hexagons. It is straightforward to check that $(C_1, C_2, C_3)$ is a KKM covering of $N\Delta_2$, i.e.~that $F_S = \conv \{N\eb^i \mid i \in S\}$ is contained in $\bigcup_{i \in S} C_i$ for every $S \subseteq [n]$. Indeed, fix $S \subseteq [n]$ and $\xb \in F_S$, so $x_k = 0$ for all $k \notin S$. If $S = \{1\}$, then $\xb$ is a vertex. As $c$ is a Sperner coloring, we have $c(\xb)=1$ and $\xb \in C_1$. If $S = \{1,2\}$, then $\xb$ lies on the boundary of $N\Delta_2$ between $N\eb^1$ and $N\eb^2$. The closest vertex $\vb \in V$ to $\xb$ also lies on this side and, as $c$ is a Sperner coloring, we have $c(\vb) \in \{1,2\}$. Hence, $\xb \in C_1 \cup C_2$. Finally, if $S = \{1,2,3\}$, then $\xb$ lies in the interior. By construction, $\xb \in C_{c(\vb)}$ for some vertex $\vb$ closest to $\xb$. So $\xb \in C_1 \cup C_2 \cup C_3$.

\begin{figure}
    \centering
    \begin{tikzpicture}
\pgfmathtruncatemacro{\N}{4};
\pgfmathsetmacro{\scaling}{1.3};
\tikzstyle{cset}=[
    regular polygon,
    draw=gray!50,
    dashed,
    regular polygon sides=6,
    minimum size=\scaling*1.16cm,
    inner sep=0,
    shape border rotate=30
]
\tikzstyle{triangulation}=[draw=black!80]
\tikzstyle{vertex}=[circle,draw=black,fill=white, inner sep=0, minimum size=0.2cm]

\coordinate (a) at (0,0);
\coordinate (b) at (\scaling*\N,0);
\coordinate (c) at ({\scaling*\N/2},{sqrt(3) * \scaling * \N / 2});
\def\bc(#1:#2:#3){(barycentric cs:a=#1,b=#2,c=#3)}

\draw [triangulation, thick] (a) -- (b) -- (c) -- cycle;
\node [below left]  at (a)    {$\N \eb^1$};
\node [below right] at (b)    {$\N \eb^2$};
\node [above=2]       at (c)    {$\N \eb^3$};

\foreach \x in {1,...,\N} { \draw[triangulation] \bc(\N-\x:\x:0) -- \bc(\N-\x:0:\x); }
\foreach \x in {1,...,\N} { \draw[triangulation] \bc(\N-\x:\x:0) -- \bc(0:\x:\N-\x); }
\foreach \x in {1,...,\N} { \draw[triangulation] \bc(\N-\x:0:\x) -- \bc(0:\N-\x:\x); }

\foreach \x in {0,...,\N} {
    \pgfmathtruncatemacro{\a}{\N-\x};
    \foreach \y in {0,...,\a} {
        \pgfmathtruncatemacro{\z}{\N-\x-\y};
        \node[vertex] at \bc(\x:\y:\z) {};
    }
}
\begin{scope}[on background layer]
    \clip (a) -- (b) -- (c);
    \node[cset, fill=blue!30] at \bc(0:0:4) {};
    \node[cset, fill=red!18] at \bc(0:1:3) {};
    \node[cset, fill=red!18] at \bc(0:2:2) {};
    \node[cset, fill=blue!30] at \bc(0:3:1) {};
    \node[cset, fill=red!18] at \bc(0:4:0) {};
    \node[cset, fill=blue!30] at \bc(1:0:3) {};
    \node[cset, fill=lime!18] at \bc(1:1:2) {};
    \node[cset, fill=red!18] at \bc(1:2:1) {};
    \node[cset, fill=lime!18] at \bc(1:3:0) {};
    \node[cset, fill=lime!18] at \bc(2:0:2) {};
    \node[cset, fill=blue!30] at \bc(2:1:1) {};
    \node[cset, fill=red!18] at \bc(2:2:0) {};
    \node[cset, fill=blue!30] at \bc(3:0:1) {};
    \node[cset, fill=lime!18] at \bc(3:1:0) {};
    \node[cset, fill=lime!18] at \bc(4:0:0) {};
\end{scope}
\end{tikzpicture}
    \caption{Triangulation of simplex $\Delta_{2}$ with side length $4$ into regular triangles with side length $1$, drawn using solid black lines. The three corresponding covering sets $C_1, C_2$ and $C_3$ consist of the yellow, red and blue regions, respectively.}
    \label{fig:sperner-triangulation}
\end{figure}

We now establish, in \cref{lemma:cell-is-trichromatic}, that a solution $\xb$ to the instance of $\kkm{}$ with approximation parameter $\varepsilon=\frac{1}{8}$ and KKM covering $(C^1, C^2, C^3)$ as defined in \eqref{eq:Sperner-KKM-covering} lies in the interior of a trichromatic cell. This forms the main part of our reduction. For this result, we will need the intuitive result of \cref{lemma:closest-vertex} that the closest vertex to a point in a cell, or to a point `almost in a cell', is one of the vertices of this cell. Note that we can apply a distance-preserving map to $N \Delta_2$ that maps any given cell $\Delta$ to~$\Delta_2$. Every vertex $\vb$ in the resulting triangulation satisfies $\vb \in \Z^n$ and $v_1 + v_2 + v_3 = 1$. We make use of this transformation to simplify our proofs of \cref{lemma:cell-is-trichromatic,lemma:closest-vertex}.

\begin{lemma}
\label{lemma:cell-is-trichromatic}
Fix a point $\xb$ in cell $\Delta$ of the triangulation. If its neighborhood $N_{\sfrac{1}{8}}(\xb)$ intersects with $C_1, C_2$ and $C_3$, then $\xb$ lies in the interior of $\Delta$, and $\Delta$ is trichromatic.
\end{lemma}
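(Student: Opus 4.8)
<br>

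\textbf{Proof idea.} The plan is to transport the statement to the standard simplex using the distance-preserving map that sends the cell $\Delta$ to $\Delta_2$, reduce membership in the $C_i$ to a statement about nearest triangulation vertices, read off trichromaticity, and then rule out $\xb \in \partial\Delta$ by a short $L_1$ triangle-inequality estimate.

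After applying this map, the three vertices of $\Delta$ are $\eb^1,\eb^2,\eb^3$, every triangulation vertex lies in $\Z^3$ with coordinate sum $1$, and $\|\eb^i - \eb^j\|_1 = 2$ for $i \ne j$. Since $N_{1/8}(\xb)$ meets each $C_i$, fix for every $i \in \{1,2,3\}$ a point $\yb^i \in N_{1/8}(\xb) \cap C_i$; by the definition of $C_i$ there is a triangulation vertex $\vb^i \in \argmin_{\vb}\|\yb^i - \vb\|_1$ with $c(\vb^i)=i$. Because $\|\yb^i - \xb\|_1 \le \tfrac18$ and $\xb \in \Delta_2$, the point $\yb^i$ is ``almost in $\Delta_2$'', so \cref{lemma:closest-vertex} forces $\vb^i \in \{\eb^1,\eb^2,\eb^3\}$. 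The vertices $\vb^1,\vb^2,\vb^3$ then lie among the three vertices of $\Delta_2$ and carry the three distinct colors $1,2,3$; since a vertex has a single color they are pairwise distinct, hence exhaust the vertex set of $\Delta_2$, which already shows that $\Delta$ is trichromatic. Relabelling so that $\vb^k = \eb^k$ (i.e.\ $c(\eb^k) = k$), the minimality of $\vb^k$ gives $\|\yb^k - \eb^k\|_1 \le \|\yb^k - \eb^j\|_1$ for all $j \in \{1,2,3\}$.

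It remains to show $\xb$ is interior to $\Delta$, i.e.\ $x_k > 0$ for every $k$; recall that $\|\xb - \eb^k\|_1 = 2(1-x_k)$ for $\xb \in \Delta_2$. Suppose instead $x_\ell = 0$ for some $\ell$, so $\|\xb - \eb^\ell\|_1 = 2$. On the other hand, applying the triangle inequality twice together with the minimality of $\eb^\ell$ for $\yb^\ell$, for each $k \ne \ell$ we get
\[
\|\xb - \eb^\ell\|_1 \le \|\xb - \yb^\ell\|_1 + \|\yb^\ell - \eb^\ell\|_1 \le \|\xb - \yb^\ell\|_1 + \|\yb^\ell - \eb^k\|_1 \le 2\|\xb - \yb^\ell\|_1 + \|\xb - \eb^k\|_1 \le \tfrac14 + \|\xb - \eb^k\|_1.
\]
Since $\sum_{k \ne \ell} x_k = 1$, there is some $k \ne \ell$ with $x_k \ge \tfrac12$, so $\|\xb - \eb^k\|_1 = 2(1-x_k) \le 1$, and we obtain the contradiction $2 \le \tfrac14 + 1$. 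Hence $\xb$ lies in the interior of $\Delta$.

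The part needing care is the reduction in the second paragraph: verifying that \cref{lemma:closest-vertex} applies to points lying only within $\tfrac18$ of $\Delta$ rather than inside it, and that this, combined with the three colors being distinct, pins down the nearest vertices exactly. Everything afterward is robust, since the total perturbation $2\cdot\tfrac18 = \tfrac14$ is comfortably below the gap between $\|\xb - \eb^\ell\|_1 = 2$ and $\min_{k \ne \ell}\|\xb - \eb^k\|_1 \le 1$ that holds for any $\xb$ on the face opposite $\eb^\ell$.
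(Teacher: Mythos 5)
Your proof is correct, and it takes a genuinely different route from the paper's. The paper proves the lemma by first showing that $\xb$ lies in the ball $N_{\sfrac{1}{2}}(\zb)$ around the barycenter $\zb$ of $\Delta$; this is done by a contradiction argument with a somewhat involved case analysis on the coordinate differences (showing $x_1 - x_3 > \tfrac14$ or $x_2 - x_3 > \tfrac14$ when $\|\xb - \zb\|_1 > \tfrac12$), which then yields a point of $C_3$ closer to $\eb^1$ or $\eb^2$ than to $\eb^3$. Interiority and trichromaticity are both read off from $\xb \in N_{\sfrac{1}{2}}(\zb)$. Your argument decouples the two conclusions: trichromaticity falls out immediately from \cref{lemma:closest-vertex} applied to the three witnesses $\yb^i$, and interiority is established separately by a clean two-step triangle inequality combined with the argmin property of the nearest vertex, replacing the paper's coordinate case analysis with the single estimate $\|\xb - \eb^\ell\|_1 \le 2\|\xb - \yb^\ell\|_1 + \|\xb - \eb^k\|_1 \le \tfrac14 + 1 < 2$. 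Your version is shorter, makes the role of the $\tfrac18$ margin transparent ($2 \cdot \tfrac18$ must stay below the gap $2 - 1$ between boundary and interior vertex distances), and avoids proving the unneeded stronger claim $\xb \in N_{\sfrac{1}{2}}(\zb)$.

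One caveat worth making explicit: both your proof and the paper's silently use a strengthening of \cref{lemma:closest-vertex} — that \emph{every} vertex in $\argmin_{\vb} \|\yb^i - \vb\|_1$ lies in $\{\eb^1, \eb^2, \eb^3\}$, not merely that \emph{some} minimizer does, which is all the lemma statement asserts. This stronger fact is what you need to conclude that the $i$-colored minimizer $\vb^i$ is a vertex of $\Delta$. It does follow from the lemma's proof (which shows distance $\le \tfrac43 + \tfrac18$ to some cell vertex versus $\ge 2 - \tfrac18$ to any other vertex), but you should either cite the quantitative bounds from that proof or restate the lemma in its stronger form.
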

\begin{lemma}
\label{lemma:closest-vertex}
If $\xb$ lies in $N_{\frac{1}{8}}(\yb)$ for some point $\yb$ of a cell $\Delta$, then one of the three vertices of this cell~$\Delta$ is a closest vertex to $\xb$.
\end{lemma}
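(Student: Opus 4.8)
The plan is to exploit the distance‑preserving transformation introduced just before the statement, which carries the cell $\Delta$ onto the standard simplex $\Delta_2 = \conv\{\eb^1,\eb^2,\eb^3\}$ and turns every triangulation vertex into an integer point $\vb\in\Z^3$ with $v_1+v_2+v_3=1$. Since this map is an $L_1$‑isometry and a bijection on the vertex set, it suffices to prove the claim in the transformed coordinates: for $\yb\in\Delta_2$ and any $\xb$ with $\|\xb-\yb\|_1\leq\frac18$, at least one of $\eb^1,\eb^2,\eb^3$ is an $L_1$‑closest triangulation vertex to $\xb$. The first (easy) ingredient is the distance of $\yb\in\Delta_2$ to its own vertices: since $y_k\in[0,1]$ and $\sum_l y_l=1$, one computes $\|\yb-\eb^k\|_1=(1-y_k)+\sum_{l\neq k}y_l=2(1-y_k)$, so $\min_k\|\yb-\eb^k\|_1=2(1-\max_k y_k)\leq\frac43$, using $\max_k y_k\geq\frac13$.

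The main ingredient is a uniform lower bound on the distance from $\yb\in\Delta_2$ to every \emph{other} triangulation vertex. For such a vertex $\vb\in\Z^3$ with $\sum_i v_i=1$, set $A\coloneqq-\sum_{i:\,v_i<0}v_i\geq 0$ and $B\coloneqq\sum_{i:\,v_i\geq 1}v_i$; as the integer coordinates split into those $\leq 0$ and those $\geq 1$, we have $B-A=1$. Then, using $y_i\leq v_i$ when $v_i\geq 1$ and $y_i\geq v_i$ when $v_i\leq 0$,
\[
\|\yb-\vb\|_1=\sum_{v_i\leq 0}(y_i-v_i)+\sum_{v_i\geq 1}(v_i-y_i)=A+B+\Big(\sum_{v_i\leq 0}y_i-\sum_{v_i\geq 1}y_i\Big)\geq A+B-1=2A,
\]
where the inequality uses $y_i\geq 0$ for all $i$ and $\sum_i y_i=1$. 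Finally, $A=0$ forces $\vb$ to have no negative coordinate, hence (being a nonnegative integer point summing to $1$) to equal one of $\eb^1,\eb^2,\eb^3$; so every \emph{other} vertex has $A\geq 1$ and therefore $\|\yb-\vb\|_1\geq 2$.

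To conclude, choose $k^\ast$ with $y_{k^\ast}=\max_k y_k$, so $\|\yb-\eb^{k^\ast}\|_1\leq\frac43$. The triangle inequality together with $\|\xb-\yb\|_1\leq\frac18$ gives $\|\xb-\eb^{k^\ast}\|_1\leq\frac43+\frac18=\frac{35}{24}$, while for every triangulation vertex $\vb\notin\{\eb^1,\eb^2,\eb^3\}$ we get $\|\xb-\vb\|_1\geq 2-\frac18=\frac{45}{24}>\frac{35}{24}$. Hence $\eb^{k^\ast}$ is strictly closer to $\xb$ than any ``foreign'' vertex, so a vertex of $\Delta$ lies in $\argmin_{\vb}\|\xb-\vb\|_1$ (the set appearing in \eqref{eq:Sperner-KKM-covering}); transforming back yields the statement. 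The only step requiring genuine care is the lower bound $\|\yb-\vb\|_1\geq 2$: one must ensure it really covers \emph{all} non‑$\Delta$ vertices, including diagonal neighbours such as permutations of $(1,1,-1)$ which are closer than the ``corner'' neighbours $(2,-1,0)$ etc.; the bookkeeping with $A$ and $B$ handles every case at once, and the remainder is routine triangle‑inequality estimation.
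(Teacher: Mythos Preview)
Your argument is correct and follows the same overall skeleton as the paper: transport $\Delta$ to $\Delta_2$, show that some vertex of $\Delta_2$ is within $\tfrac{4}{3}$ of $\yb$, show that every foreign lattice vertex is at $L_1$-distance at least $2$ from $\yb$, and finish with the triangle inequality and the $\tfrac{1}{8}$ slack. The one noteworthy difference is in how the lower bound $\|\yb-\vb\|_1\geq 2$ is established. The paper argues by a somewhat laborious case split (first $\yb$ on the boundary of $\Delta_2$ with several sub-cases on the signs of $v_1,v_2,v_3$, then $\yb$ in the interior via a boundary projection, then $\xb\notin\Delta$ via another projection). Your $A/B$ bookkeeping---setting $A=-\sum_{v_i<0}v_i$ and $B=\sum_{v_i\geq 1}v_i$, noting $B-A=1$, and bounding the signed $y$-contribution by $-1$---handles all foreign vertices uniformly in a single line and makes transparent why $A\geq 1$ (i.e.\ at least one negative coordinate) is exactly what distinguishes foreign vertices from $\eb^1,\eb^2,\eb^3$. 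This buys a cleaner and shorter proof of the same inequality; the paper's version, in exchange, gives slightly sharper constants in the boundary case ($\leq 1$ instead of $\leq\tfrac{4}{3}$), though these are not needed for the lemma.
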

\begin{proof}
Apply a distance-preserving map to $N \Delta_2$ so that $\Delta = \Delta_{2}$. Let $\vb \notin \{ \eb^1, \eb^2, \eb^3 \}$ be a vertex of the triangulation. Recall that its entries are integral and $v_1 + v_2 + v_3 = 1$. We now show that $\| \xb - \vb \|_1 \geq 2-\frac{1}{8}$ and $\| \xb - \eb^i\| \leq \frac{4}{3}$ for some $i \in \{1,2,3\}$.

Suppose first that $\xb$ lies on the boundary of $\Delta$. Without loss of generality, $\xb$ is a convex combination of $\eb^1$ and $\eb^2$ and $x_1 \geq x_2$, so $x_3 = 0$ and $x_1 + x_2 = 1$. Hence $\|\xb-\eb^1\|_1 = 1 - x_1 + x_2 \leq 1$, so the $L_1$-distance of $\xb$ to one of the vertices of $\Delta$ is at most $1$. Next, we show that $\|\xb-\vb\|_1 = |x_1 - v_1| + |x_2 - v_2| + |v_3| \geq 2$ by distinguishing between different possibilities for $\vb$. If $v_1, v_2 \geq 1$, then $v_3 \leq -1$ and so $\|\xb - \vb\|_1 \geq 2 - x_1 - x_2 + |v_3| \geq 2$. If $v_1, v_2 \leq 0$, with at least one strict inequality, then $v_3 \geq 2$ and so $\|\xb-\vb\|_1 = x_1 + x_2 - v_1 - v_2 + |v_3| \geq 4$. If $v_1 = 0$ and $v_2 \geq 2$, then $v_3 \leq -1$ and so $\|\xb-\vb\|_1 = x_1 +x_2 + v_2 + |v_3| \geq 2$. If $v_1 < 0$ and $v_2 = 1$, then $v_3 \geq 1$ implies $\|\xb-\vb\|_1 \geq 2$. Finally, if $v_1 < 0$ and $v_2 > 1$, then $v_2 - v_1 \geq 3$ guarantees the same.

Now suppose that $\xb$ lies in the interior of $\Delta$. Without loss of generality, let $x_1 \geq x_2 \geq x_3$, so $\|\xb - \eb^1 \|_1 \leq \|\xb - \eb^2 \|_1$ and $\|\xb - \eb^1 \|_1 \leq \|\xb - \eb^3 \|_1$. Moreover, $\|\xb - \eb^1 \|_1 = 1 - x_1 + x_2 + x_3$ is maximized when $\xb$ is the barycenter of $\Delta_2$, so $x_1 = x_2 = x_3 = \frac{1}{3}$. Hence $\|\xb - \eb^i\|_1 \leq \frac{4}{3}$. Let $\yb$ be the point at which the line segment from $\xb$ to $\vb$ intersects the boundary of $\Delta$. It is straightforward that $\|\xb - \vb \|_1 = \|\xb-\yb\|_1 + \|\yb-\vb \|_1 \geq \|\yb-\vb\|_1 \geq 2$, where the second inequality follows from the previous paragraph.

Finally, suppose $\xb$ does not lie in $\Delta$. By construction, there exists $\yb$ on the boundary of $\Delta$ with $\|\xb - \yb\|_1 \leq \frac{1}{8}$. By the triangle inequality and the previous paragraph, the distance from $\xb$ to $\eb^1$, $\eb^2$ or $\eb^3$ is at most $1+\frac{1}{8}$, and the distance from $\xb$ to $\yb$ is at least $2-\frac{1}{8}$, so the result holds.
\end{proof}

\begin{proof}[Proof of \cref{lemma:cell-is-trichromatic}]
Without loss of generality, let $\Delta$ be the unit simplex $\Delta_2$ (by applying a suitable distance-preserving map) and suppose each of its vertices $\eb^i$ is $i$-colored. Let $\zb \coloneqq (\frac{1}{3}, \frac{1}{3}, \frac{1}{3})$ be the barycenter of~$\Delta$.

Assume that $N_{\sfrac{1}{8}}(\xb) \cap C_i \neq \emptyset$, so there exist $\xb^i \in C_i$, for all $i \in \{1,2,3\}$. Suppose we can show that $\xb$ lies in $N_{\sfrac{1}{2}}(\zb)$. Then $\xb^i \in N_{\sfrac{5}{8}}(\zb)$ for every $i$. As $N_{\sfrac{5}{8}}(\zb)$ is contained in $\Delta$, the points $\xb, \xb^1, \xb^2$ and $\xb^3$ then all lie in the interior of~$\Delta$. But as $\xb^i \in C_i$, and $\eb^1, \eb^2$ and $\eb^3$ are the three closest vertices to any point in $\Delta$ by \cref{lemma:closest-vertex}, it then follows from the definition of $C^i$ that $\Delta$ is trichromatic.

It remains to show that $\xb \in N_{\sfrac{1}{2}}(\zb)$. Suppose, for the sake of contradiction, that $\xb \notin N_{\sfrac{1}{2}}(\zb)$. This means that $\|\xb - \zb \|_1 = |x_1 - \frac{1}{3}| + |x_2 - \frac{1}{3}| + |x_3 - \frac{1}{3}| > \frac{1}{2}$. Without loss of generality, assume that $x_1 \geq x_2 \geq x_3$. As $0 \leq \xb \neq \zb$ and $x_1 + x_2 + x_3 = 1$, we have $x_1 > x_3$. Below, we will argue that
\begin{equation}
    \label{eq:trichromatic-cell}
    x^3_1 > x^3_3 \text{ or } x^3_2 > x^3_3.
\end{equation}
This implies $\|\xb^3 - \eb^3\|_1 - \|\xb^3 - \eb^1\|_1 = x^3_1 - x^3_3 > 0$ or $\|\xb^3 - \eb^3\|_1 - \|\xb^3 - \eb^2\|_1 = x^3_2 - x^3_3 > 0$. The fact that $\xb^3$ is thus closer to $\eb^1$ or $\eb^2$ leads to the contradiction $\xb^3 \notin C^3$.

We now argue that \eqref{eq:trichromatic-cell} holds. As $\|\xb - \xb^3 \|_1 \leq \frac{1}{8}$, we have $|x_i - x^3_i| \leq \frac{1}{8}$ for every $i$. Hence \eqref{eq:trichromatic-cell} holds immediately if $x_1 - x_3 > \frac{1}{4}$ or $x_2 - x_3 > \frac{1}{4}$. It remains to prove this last claim. As $\xb$ lies in $\Delta$, we have $x_1 + x_2 + x_3 = 1$. Hence, $x_1 \geq x_2 \geq x_3$ and $x_1 > x_3$ implies $x_1 > \frac{1}{3}$ and $x_3 < \frac{1}{3}$. It follows that $\|\xb - \zb \|_1 = x_1 - x_3 + |x_2 - \frac{1}{3}| > \frac{1}{2}$, so $x_1 - x_3 > \frac{1}{4}$ or $|x_2 - \frac{1}{3}| > \frac{1}{4}$. If $x_1 - x_3 > \frac{1}{4}$, we are done. So suppose $x_1 - x_3 \leq \frac{1}{4}$ and $|x_2 - \frac{1}{3}| > \frac{1}{4}$. This implies $x_2> \frac{1}{4} + \frac{1}{3}$ or $x_2 < \frac{1}{3} - \frac{1}{4}$. Suppose the latter holds, so $x_2 < \frac{1}{3} - \frac{1}{4}$. We also bound $x_1$ from above by noting that $x_1 - x_3 \leq \frac{1}{4}$ and $x_3 < \frac{1}{3}$ imply $x_1 < \frac{1}{4} + \frac{1}{3}$. It follows that $x_1 + x_2 + x_3 < \frac{1}{4} + \frac{1}{3} + \frac{1}{3} - \frac{1}{4} + \frac{1}{3} = 1$, a contradiction. Hence, we must have $x_2 > \frac{1}{4} + \frac{1}{3}$. It follows that $x_2 - x_3 > x_2 - \frac{1}{3} > \frac{1}{4}$, and we are done.
\end{proof}

\begin{theorem}
\label{thm:sperner-to-kkm}
There exists a polynomial-time reduction from \spernerX{} to 3D-\textsc{KKM}.
\end{theorem}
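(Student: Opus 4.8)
The plan is to recast the construction built up over the preceding pages into a formal polynomial-time reduction, relegating the geometric content to \cref{lemma:closest-vertex,lemma:cell-is-trichromatic}. Given a \spernerX{} instance consisting of a size $N \in \N$ (in binary) and a Sperner colouring $c:V_N \to \{1,2,3\}$, the instance map outputs the \kkm{} instance with dimension $3$, approximation parameter $\varepsilon' = \tfrac{1}{8N}$, and KKM covering function $g$ encoding the covering $(C_1, C_2, C_3)$ of $N\Delta_2$ from \eqref{eq:Sperner-KKM-covering}, rescaled to $\Delta_2$. Equivalently, and more convenient for the analysis, we view the output as the $N\Delta_2$-variant of \kkm{} with parameter $\tfrac{1}{8}$, which is an instance of 3D-\kkm{} with parameter $\tfrac{1}{N}\cdot\tfrac{1}{8} = \varepsilon'$ by the equivalence recorded just above \eqref{eq:Sperner-KKM-covering}. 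Note that $\varepsilon' \in (0,\tfrac{1}{4})$, and that $(C_1, C_2, C_3)$ is a genuine KKM covering of $N\Delta_2$ has already been checked in the paragraph preceding \cref{lemma:cell-is-trichromatic}, so this is a legal instance.

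Second, I would confirm that $g$ admits a polynomial-time implementation (making only a constant number of calls to $c$, which also handles the black-box model). On input $(\xb, i)$ with $\xb \in N\Delta_2$, one must decide whether some $L_1$-closest triangulation vertex to $\xb$ is coloured $i$: a closest vertex is obtained by rounding the coordinates of $\xb$ to the nearest integers and repairing their sum to equal $N$, and the constant number of closest vertices arising from ties can all be enumerated with $O(1)$ arithmetic on the coordinates of $\xb$; we then query $c$ on each and return $1$ iff some answer equals $i$. As $N$ is given in binary and $\varepsilon' = \tfrac{1}{8N}$ has bit-length $O(\log N)$, the whole construction runs in polynomial time; the only point needing care is that ties in the $\argmin$ be broken so that $g$ computes exactly the covering function of \eqref{eq:Sperner-KKM-covering}.

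Third, for the solution map, suppose $\xb$ is an $\varepsilon'$-approximate KKM solution of the produced instance; we may take $\xb \in \Delta_2$, since the returned point may be assumed to lie in the domain. Rescaling, $\yb \coloneqq N\xb$ lies in $N\Delta_2$, and from the witnesses $\xb^1, \xb^2, \xb^3$ --- with $\xb^j$ in the rescaled $C_j$ and $\|\xb - \xb^j\|_1 \leq \varepsilon'$ --- we get $N\xb^j \in C_j$ satisfying $\|\yb - N\xb^j\|_1 = N\|\xb - \xb^j\|_1 \leq N\varepsilon' = \tfrac{1}{8}$, so $N_{\sfrac{1}{8}}(\yb)$ meets each of $C_1, C_2, C_3$. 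Picking any cell $\Delta$ of the triangulation that contains $\yb$ --- identifiable in polynomial time from the coordinates of $\yb$ --- \cref{lemma:cell-is-trichromatic} shows that $\yb$ lies in the interior of $\Delta$ (so $\Delta$ is unique) and that $\Delta$ is trichromatic. The solution map therefore returns the three vertices of $\Delta$, which form a valid \spernerX{} solution, and correctness of the reduction follows.

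Finally, the main obstacle has essentially already been overcome: the geometric heart of the argument --- that an $\tfrac{1}{8}$-approximate KKM point cannot lie near any triangulation vertex outside its own cell (\cref{lemma:closest-vertex}), and that meeting all three colour classes within distance $\tfrac{1}{8}$ of that point forces its cell to be trichromatic (\cref{lemma:cell-is-trichromatic}) --- is precisely what those two lemmas supply. What is left for this theorem is routine bookkeeping: matching the approximation parameters under the $\Delta_2 \leftrightarrow N\Delta_2$ rescaling, verifying that nearest-vertex rounding with ties and cell identification are polynomial-time, and observing that a returned point marginally outside the simplex causes no trouble. None of this is delicate.
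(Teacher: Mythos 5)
Your proposal is correct and follows essentially the same route as the paper: construct the covering $(C_1,C_2,C_3)$ of $N\Delta_2$ from~\eqref{eq:Sperner-KKM-covering}, set the (rescaled) approximation parameter to $\tfrac{1}{8}$, implement $g$ by nearest-vertex rounding with a constant number of calls to $c$, and invoke \cref{lemma:cell-is-trichromatic} to extract the trichromatic cell from the approximate KKM point. The only superficial difference is that you carry the $\Delta_2 \leftrightarrow N\Delta_2$ rescaling explicitly, whereas the paper works directly in the $N\Delta_2$ variant after noting the equivalence; otherwise the two arguments coincide.
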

\begin{proof}
Suppose $(N, c)$ is a \spernerX{} instance. We construct an instance $(\varepsilon, g)$ of \kkm{} with $\varepsilon=\frac{1}{8}$. The covering function $g$ is associated with the KKM covering $C_1, C_2, C_3$ of $N \Delta_{2}$ constructed from $u$ as described in \eqref{eq:Sperner-KKM-covering}. Note that for any point $\xb \in N\Delta_{2}$, we have $g(\xb,i) = 1$ if a nearest vertex to $\xb$ is $i$-colored, and $g(\xb,i) = 0$ otherwise. Given an arbitrary point $\xb \in N\Delta_2$, we can round up or down the entries of $\xb$ to determine its nearest vertices $v^1 = (\lfloor x_1 \rfloor, \lceil x_2 \rceil, \lceil x_3 \rceil)$, $v^1 = (\lceil x_1 \rceil, \lfloor x_2 \rfloor, \lceil x_3 \rceil)$ and $v^3 = (\lceil x_1 \rceil, \lceil x_2 \rceil, \lfloor x_3 \rfloor)$. (If $\xb$ is a vertex of the triangulation, $v_1, v_2$ and $v_3$ will coincide; if $\xb$ lies on a face separating two cells, two of these vertices will coincide.) Then, with at most $3$ calls to the coloring function, we can determine whether $v^1$, $v^2$ or $v^3$ are $i$-colored. This implements the covering function $g$ efficiently.

Now suppose $\xb$ is a solution to the \textsc{KKM} instance $(\varepsilon,g)$, so there exists a point $\xb^i \in C_i$ with $\| \xb^i - \xb\|_1 \leq \frac{1}{8}$ for every $i \in \{1,2,3\}$. The point $\xb$ lies in some cell $\Delta$ of the triangulation. By \cref{lemma:cell-is-trichromatic}, $\xb$ lies in the interior of $\Delta$, and the cell is trichromatic. As above, we can compute its vertices $v^1, v^2$ and $v^3$ efficiently by rounding up and down to recover a solution to the \spernerX{} instance $(N,c)$.
\end{proof}

\begin{theorem}
\label{thm:kkm-to-rkkm}
There exists a polynomial-time reduction from 3D-\kkm{} to 3-\rkkm{}.
\end{theorem}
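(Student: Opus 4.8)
The plan is to use the obvious reduction: replicate the single KKM covering three times. Given a \texttt{3D-}\kkm{} instance $(\varepsilon, g)$ whose covering function $g$ is associated with a KKM covering $(C_1, C_2, C_3)$ of $\Delta_2$, I would construct the $3$-\rkkm{} instance $(\varepsilon, g^1, g^2, g^3)$ by setting $g^1 \coloneqq g^2 \coloneqq g^3 \coloneqq g$. Since $g$ describes a valid KKM covering of $\Delta_2$ by hypothesis, each $g^i$ describes the KKM covering $C^i \coloneqq (C_1, C_2, C_3)$, so the constructed tuple is a legitimate $3$-\rkkm{} instance. Each $g^i$ is implemented with a single call to $g$, so the instance map $f$ is polynomial-time (and makes only one oracle call per query, which is what we need to transfer the query-complexity relation). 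Note that no rescaling of the domain is required here: at $n = 3$ both \kkm{} and \rkkm{} take covering functions of $\Delta_{2}$, so the domains already match.

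For the solution map: suppose $(\xb, \pi)$ is an $\varepsilon$-approximate Rainbow-KKM solution for the constructed instance. By definition there exist points $\xb^1, \xb^2, \xb^3 \in \Delta_2$ with $\xb^i \in C^i_{\pi(i)} = C_{\pi(i)}$ and $\|\xb - \xb^i\|_1 \leq \varepsilon$ for every $i \in \{1,2,3\}$. Because $\pi$ is a permutation of $\{1,2,3\}$, the indices $\pi(1), \pi(2), \pi(3)$ are exactly $1,2,3$ in some order; hence for each $j \in \{1,2,3\}$ there is a point of $C_j$ within $\ell_1$-distance $\varepsilon$ of $\xb$. This is precisely the condition required of a solution to the original \texttt{3D-}\kkm{} instance, so the output map simply returns $\xb$ (discarding the permutation $\pi$), which is trivially polynomial-time.

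The \textbf{only point that requires any care} is the small observation that, because $\pi$ is a bijection, ``$\xb$ is $\varepsilon$-close to $C_{\pi(i)}$ for every $i$'' is equivalent to ``$\xb$ is $\varepsilon$-close to $C_j$ for every $j \in \{1,2,3\}$'' --- there is no genuine obstacle. I would close by remarking that the reduction preserves the approximation parameter $\varepsilon$ exactly, so combined with \cref{thm:n-to-n+1} and the equivalence of $n$-\housing{} and $n$-\rkkm{} from \cref{section:equivalence-reductions}, it follows that $n$-\housing{} and $n$-\rkkm{} are PPAD-hard for every $n \geq 3$, with three agents sharing identical preferences in the \housing{} formulation.
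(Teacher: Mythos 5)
Your proposal is correct and follows essentially the same idea as the paper's proof: triplicate the single KKM covering and observe that, because $\pi$ is a bijection, the $\varepsilon$-approximate Rainbow-KKM point is $\varepsilon$-close to each of $C_1, C_2, C_3$. The only cosmetic difference is that the paper works with the $N\Delta_2$-variant of \kkm{} set up in the preceding Sperner reduction and therefore rescales by $N$ and adjusts $\varepsilon$; you use the standard $\Delta_2$ formulation directly, so no rescaling is needed.
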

\begin{proof}
Suppose $(\varepsilon, N, g)$ is an instance of KKM associated with KKM covering $C = (C_1, C_2, C_3)$ of $N\Delta_{2}$. We reduce to an instance $(\varepsilon', h^1, h^2, h^3)$ of $3$-\rkkm{} with approximation parameter $\varepsilon' \coloneqq \frac{\varepsilon}{N}$ and three identical copies of the KKM covering $(D_1, D_2, D_3)$ of $\Delta_2$ defined by $D_i = \frac{1}{N} C_i$. A ${\varepsilon'}$-approximate solution $\xb$ to the latter, together with any permutation $\pi$ of $[3]$, is an $\varepsilon$-approximate solution to the former, and the covering functions $h^i$ for the $3$-\rkkm{} instance can be constructed efficiently.
\end{proof}

We can now combine the equivalence of $n$-\housing{} and $n$-\rkkm{} shown in \cref{section:equivalence-reductions} with \cref{thm:n-to-n+1,thm:kkm-to-rkkm} to get the following hardness result.
\begin{corollary}
\label{corollary:sperner-to-rkkm}
$n$-\housing{} and $n$-\rkkm{} are PPAD-hard for any $n \geq 3$.
\end{corollary}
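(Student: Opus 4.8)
The plan is to assemble the chain of polynomial-time reductions already established in the paper and invoke transitivity of such reductions, followed by a short induction on the number of agents. First I would observe that \spernerX{} is PPAD-complete, hence PPAD-hard, so by definition \eol{} reduces to it in polynomial time. Composing this reduction with the reduction from \spernerX{} to 3D-\kkm{} (\cref{thm:sperner-to-kkm}) and then the reduction from 3D-\kkm{} to 3-\rkkm{} (\cref{thm:kkm-to-rkkm}) yields a polynomial-time reduction from \eol{} to 3-\rkkm{}, which establishes that 3-\rkkm{} is PPAD-hard. Applying the dimension-preserving reduction of \cref{thm:rkkm-to-housing} with $n=3$ then transfers this hardness to 3-\housing{}.

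Next I would propagate the hardness to all $n \geq 3$ by induction, with base case $n=3$ handled above. For the inductive step, assume $n$-\housing{} is PPAD-hard; since \cref{thm:n-to-n+1} provides a polynomial-time reduction from $n$-\housing{} to $(n+1)$-\housing{}, transitivity gives that $(n+1)$-\housing{} is PPAD-hard as well. This shows $n$-\housing{} is PPAD-hard for every $n \geq 3$. Finally, applying the dimension-preserving reduction of \cref{thm:housing-to-rkkm}, which maps an instance of $n$-\housing{} to an instance of $n$-\rkkm{}, shows that $n$-\rkkm{} is likewise PPAD-hard for every $n \geq 3$, completing the proof.

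The argument is essentially bookkeeping, so I do not expect a genuine obstacle; the only points meriting care are that all the cited reductions are genuinely dimension-preserving (so the parameter $n$ stays controlled along the chain and through the induction), and that the induction is anchored at $n=3$ — one cannot start lower, since $2$-\housing{} is in fact polynomial-time solvable by the binary search described after \cref{corollary:housing-rkkm-query-complexity}. The closest thing to a subtlety is simply verifying that "PPAD-hard" is preserved under composition of polynomial-time reductions, which follows from their transitivity.
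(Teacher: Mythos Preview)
Your proposal is correct and follows essentially the same approach as the paper, which simply states that the result follows by combining the equivalence of $n$-\housing{} and $n$-\rkkm{} from \cref{section:equivalence-reductions} with \cref{thm:n-to-n+1,thm:kkm-to-rkkm} (together with \cref{thm:sperner-to-kkm} and the PPAD-hardness of \spernerX{}). Your write-up merely makes explicit the induction on $n$ and the back-and-forth between \housing{} and \rkkm{} that the paper leaves implicit.
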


\section{Membership of \housing{} and \rkkm{} in PPAD}
\label{section:membership}
We now show that $\rkkm{}$ lies in PPAD by reducing it to \sperner{}. By \cref{thm:housing-to-rkkm}, this also immediately implies that \housing{} lies in PPAD. Our reduction uses the same labeling and coloring technique employed by \citet{deng2012algorithmic} to reduce \cake{} to \sperner{}.

Suppose $C^1, \ldots, C^n$ is a family of KKM coverings. Like \citet{deng2012algorithmic}, we first we define the \textit{Kuhn triangulation} of the `large' cube $[0,N]^{n-1}$ obtained by dividing the cube into $N^{n-1}$ unit cubes, and then subdividing each unit cube into $(n-1)!$ simplicial cells. Next, we label \textit{and} color each vertex of this triangulation with one of colors $[n-1]_0$ and labels $[n-1]_0$.%
\footnote{Unlike \citet{deng2012algorithmic}, we label and color the entire triangulated large cube, and not just a subdivision of it. This allows us to reduce our problem to the standard \sperner{} problem defined on the subdivision of a cube.}
In order to define our labeling rule and coloring function, we first divide the `large' cube into $(n-1)!$ `large' simplices. Then we associate each vertex $\vb$ with its barycentric coordinates $\alpha(\vb)$ w.r.t.~the vertices of the large simplex in which $\vb$ lies. The labeling rule and coloring function are then specified with respect to $\alpha(\vb)$. We use the same labeling rule as \citet{deng2012algorithmic}; it guarantees that the vertices of a cell all receive different labels. Our coloring function then assigns each vertex $\vb$, labeled $i$, the smallest color $j$ for which $\alpha(\vb)$ lies in covering set $C^i_j$. Sperner's lemma implies that there exists a panchromatic cell. Thus for every label $i$, the $i$-labeled vertex of this cell is colored differently. Finally, as the $L_1$-distance between the barycentric coordinates of any two vertices in a cell of the Kuhn triangulation is at most $\frac{n}{N}$, we arrive at an $\frac{n}{N}$-approximate Rainbow-KKM solution to the \rkkm{} problem with KKM coverings $C^1, \ldots, C^n$.

\subsection{Kuhn's triangulation}

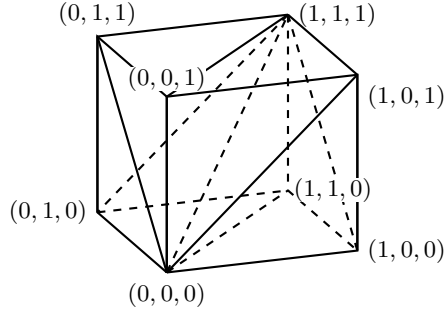
\begin{figure}
    \centering
    \begin{tikzpicture}[font=\footnotesize]
\tikzstyle{vx}=[black, fill=white, inner sep=0]
\tikzstyle{wire}=[black, thick]
\tikzstyle{bg}=[black, thick, dashed]
\begin{axis}[
    width=5cm,
    height=5cm,
    view={-20}{20},
    xmin=0, xmax=1,
    ymin=0, ymax=1,
    zmin=0, zmax=1,
    hide axis,
    clip=false,
    ]

    \draw[wire] (0,0,0) -- (0,0,1) -- (1,0,1) -- (1,0,0) -- (0,0,0);  
    \draw[wire] (0,0,0) -- (0,0,1) -- (0,1,1) -- (0,1,0) -- (0,0,0);  
    \draw[wire] (1,0,0) -- (1,0,0) -- (1,0,1) -- (1,1,1);  
    \draw[wire,bg] (1,1,1) -- (1,1,0) -- (1,0,0);  
    \draw[wire] (0,1,0) -- (0,1,1) -- (1,1,1);  
    \draw[wire,bg] (0,1,0) -- (1,1,0);  

    \draw[wire] (0,0,0) -- (0,1,1);
    \draw[wire] (0,0,0) -- (1,0,1);
    \draw[wire, dashed] (0,0,0) -- (1,1,0);
    \draw[wire, dashed] (0,0,0) -- (1,1,1);
    \draw[wire] (0,0,1) -- (1,1,1);
    \draw[wire, dashed] (1,0,0) -- (1,1,1);
    \draw[wire, dashed] (1,1,1) -- (0,1,0);

    \node[below] at (0,0,0) {$(0,0,0)$};
    \node[vx, above=2] at (0,0,1) {$(0,0,1)$};
    \node[left] at (0,1,0) {$(0,1,0)$};
    \node[right] at (1,0,0) {$(1,0,0)$};
    \node[above] at (0,1,1) {$(0,1,1)$};
    \node[below right] at (1,0,1) {$(1,0,1)$};
    \node[vx, right=2] at (1,1,0) {$(1,1,0)$};
    \node[right] at (1,1,1) {$(1,1,1)$};

\end{axis}
\end{tikzpicture}
    \caption{The Kuhn triangulation of the unit cube.}
    \label{fig:kuhn-triangulation}
\end{figure}

We describe Kuhn's method for triangulating a cube introduced by \citet{scarf1982computation} and \citet{deng2012algorithmic}. We begin by defining Kuhn's triangulation of the $n$-dimensional unit cube $[0,1]^n$. For each permutation $\pi$ of $[n]$, define the base simplex $\hat{\Delta}_{\pi} \coloneqq \{\xb \in [0,1]^n \mid x_{\pi(1)} \geq x_{\pi(2)} \geq \cdots \geq x_{\pi(n)}\}$. The vertices of this simplex are given by $\vb^{0}_{\pi} \coloneqq \bm{0}$ and $\vb^i_{\pi} \coloneqq \vb^{i-1}_{\pi} + \eb^{\pi(i)}$ for all $i \in [n]$. It is straightforward that the unit cube consists of the union of $\hat{\Delta}_{\pi}$ taken over all permutations $\pi$, and that the interiors of the base simplices do not overlap. The Kuhn triangulation of the unit cube is illustrated for $n=3$ in \cref{fig:kuhn-triangulation}.

Next, fix a positive integer $N \in \N$. We describe a way to triangulate the cube $[0,N]^n$ of side length $N$. We also call this the Kuhn triangulation of the ``large cube''. First, divide the cube into $N^d$ unit cubes. Then triangulate each unit cube into \emph{base simplices} (or \emph{cells}) by applying Kuhn's triangulation as described in the previous paragraph. Hence each unit cube contains $n!$ cells, one for each permutation $\pi$. In particular, if $\xb$ is the smallest point in the cube, then its cell associated with permutation $\pi$ is $\xb + \hat{\Delta}_{\pi}$, and is the convex hull of vertices $\vb^0_{\pi} \coloneqq \xb$ and $\vb^{i}_{\pi} \coloneqq \vb^{i-1}_{\pi} + \eb^{\pi(i)}$ for all $i \in [n]$. The set of all vertices of the triangulation is $V_N \coloneqq [n]_0^n$.

Alternatively, we can also divide the large cube $[0,N]^n$ into simplices $N \hat{\Delta}_{\pi} \coloneqq \{ N \xb \mid \xb \in \hat{\Delta}_{\pi} \}$ for all $\pi$. (This is analogous to performing Kuhn's triangulation of the unit cube, scaled up by a factor of $N$.) We will refer to the $n!$ resulting simplices $N \hat{\Delta}_{\pi}$ as the ``large simplices''. Kuhn's triangulation of the large cube has the important property, stated in \cref{lemma:cell-in-large-simplex}, that every cell is contained entirely in one of the large simplices. For more details on Kuhn's triangulation, see \citet{deng2012algorithmic} and \citet{scarf1982computation}.

\begin{lemma}[\citep{deng2012algorithmic}]
\label{lemma:cell-in-large-simplex}
Every cell of Kuhn's triangulation of the large cube $[0,N]^n$ is contained in a large simplex $N \hat{\Delta}_{\pi}$ for some permutation $\pi$ of $[n]$.
\end{lemma}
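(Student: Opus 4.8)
The plan is to reduce the claim to a statement about the $n+1$ vertices of a single cell, and then to exhibit an explicit permutation $\pi$ and verify containment by a two-case comparison of consecutive coordinates.

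First I would record that each large simplex $N\hat{\Delta}_\pi = \{\wb \in [0,N]^n \mid w_{\pi(1)} \geq w_{\pi(2)} \geq \cdots \geq w_{\pi(n)}\}$ is convex, being the intersection of the cube with the half-spaces $\{w_{\pi(i)} \geq w_{\pi(i+1)}\}$. Since a cell is the convex hull of its vertices, it therefore suffices to find a permutation $\pi$ containing all $n+1$ vertices of the cell. Fix a cell: it lies in the unit cube anchored at some $\xb \in \{0,\ldots,N-1\}^n$ and corresponds to a permutation $\sigma$, so its vertices are $\vb^0_\sigma = \xb$ and $\vb^k_\sigma = \vb^{k-1}_\sigma + \eb^{\sigma(k)}$ for $k \in [n]$; equivalently $(\vb^k_\sigma)_j = x_j + 1$ if $\sigma^{-1}(j) \leq k$ and $x_j$ otherwise, so every coordinate of every vertex lies in $[x_j,x_j+1] \subseteq [0,N]$.

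Next I would define $\pi$ by sorting the coordinates $j \in [n]$ in \emph{decreasing} lexicographic order of the key $(x_j,\, -\sigma^{-1}(j))$: primarily by the integer part $x_j$ (larger first), breaking ties so that the coordinate incremented earlier by $\sigma$ comes first. Then, for a vertex $\vb^k_\sigma$ and a consecutive pair $\pi(t),\pi(t+1)$, I would check $(\vb^k_\sigma)_{\pi(t)} \geq (\vb^k_\sigma)_{\pi(t+1)}$ in two cases. If $x_{\pi(t)} > x_{\pi(t+1)}$, then $x_{\pi(t)} \geq x_{\pi(t+1)}+1$ because these are integers, and since each vertex coordinate differs from its integer part by at most $1$ we get $(\vb^k_\sigma)_{\pi(t)} \geq x_{\pi(t)} \geq x_{\pi(t+1)}+1 \geq (\vb^k_\sigma)_{\pi(t+1)}$. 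If $x_{\pi(t)} = x_{\pi(t+1)}$, the tie-break gives $\sigma^{-1}(\pi(t)) < \sigma^{-1}(\pi(t+1))$, so whenever $\pi(t+1)$ has already been incremented by step $k$ the coordinate $\pi(t)$ has too, and again $(\vb^k_\sigma)_{\pi(t)} \geq (\vb^k_\sigma)_{\pi(t+1)}$. The case $x_{\pi(t)} < x_{\pi(t+1)}$ cannot occur by construction of $\pi$. This shows every vertex lies in $N\hat{\Delta}_\pi$, hence so does the whole cell. Note that $\pi$ depends only on $\xb$ and $\sigma$, which are exactly the data of the cell.

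The only genuinely delicate point is choosing the tie-breaking rule so that one fixed ordering works simultaneously for all $n+1$ vertices. The integer part $\xb$ imposes a coarse ordering that no vertex-to-vertex move can violate, since moves have size exactly $1$ while gaps between distinct integers are at least $1$; within a block of coordinates sharing the same integer part, the order must match the order in which $\sigma$ turns them on. Making this ``lexicographic'' intuition precise and confirming its consistency across all vertices at once is the heart of the argument, and everything else is bookkeeping.
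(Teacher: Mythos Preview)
Your argument is correct. The paper does not give its own proof of this lemma; it is stated with a citation to \citet{deng2012algorithmic} and used as a black box, so there is nothing in the paper to compare against directly. Your approach---reducing to the vertices by convexity of $N\hat{\Delta}_\pi$, then exhibiting $\pi$ by sorting the coordinates lexicographically on the key $(x_j,-\sigma^{-1}(j))$---is the natural one, and the two-case verification (integer gap of at least $1$ versus tie-break by $\sigma$-order) is exactly right.
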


For every vertex $\vb \in V_N$, let $\alpha(\vb)$ denote its barycentric coordinates $\alpha(\vb) = (\alpha(\vb)_0, \ldots, \alpha(\vb)_n)$ w.r.t.~the vertices $v^{0}_{\pi}, \ldots, v^{n}_{\pi}$ of the large simplex $N \hat{\Delta}_{\pi}$ in which $\vb$ lies. Note that $\alpha(\vb)$ is $(n+1)$-dimensional and satisfies $\alpha(\vb) \geq \bm{0}$ and $\sum_{i \in [n]_0} \alpha(\vb)_i = 1$. If a vertex lies in two large simplices $N\hat{\Delta}_{\pi}$ and $N\hat{\Delta}_{\tau}$, then \cref{lemma:barycentric-coordinates-coincide} states that its barycentric coordinates w.r.t.~the vertices of $N\hat{\Delta}_{\pi}$ and $N\hat{\Delta}_{\tau}$ are identical. This ensures that $\alpha(\vb)$ is well-defined. Moreover, \cref{lemma:barycentric-coordinates-close} shows that the $L_1$-distance between the barycentric coordinates of two vertices of a cell is at most $\frac{n+1}{N}$. The proof of \cref{lemma:barycentric-coordinates-close} is essentially identical to the proof in \citet{deng2012algorithmic}, and is stated for completeness.

\begin{lemma}
\label{lemma:barycentric-coordinates-coincide}
Suppose $\xb$ lies in $N\hat{\Delta}_{\pi}$ and $N\hat{\Delta}_{\pi}$. Then
\[
\sum_{i=0}^n \alpha_i v^i_{\pi} = \sum_{i=0}^n \alpha_i v^i_{\tau}
\]
for some barycentric coordinate vector $\alpha \in \Delta_n$.
\end{lemma}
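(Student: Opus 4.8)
The plan is to write down an explicit formula for the barycentric coordinates of a point $\xb\in N\hat\Delta_\pi$ with respect to the vertices $\vb^0_\pi,\dots,\vb^n_\pi$, and then to observe that this formula depends only on the weakly decreasing rearrangement of the coordinates of $\xb$, and not on the particular permutation $\pi$ used to produce it. Since a point lying in both $N\hat\Delta_\pi$ and $N\hat\Delta_\tau$ is sorted by both $\pi$ and $\tau$, the two barycentric coordinate vectors must then agree, and the lemma follows immediately.

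Concretely, I would first recall that $N\hat\Delta_\pi=\{\xb\in[0,N]^n\mid x_{\pi(1)}\ge x_{\pi(2)}\ge\cdots\ge x_{\pi(n)}\}$ and that its vertices are $\vb^0_\pi=\bm 0$ and $\vb^i_\pi=N(\eb^{\pi(1)}+\cdots+\eb^{\pi(i)})$ for $i\in[n]$, so that for any coefficient vector $\alpha=(\alpha_0,\dots,\alpha_n)$ the $\pi(k)$-th coordinate of $\sum_{i=0}^n\alpha_i\vb^i_\pi$ equals $N\sum_{i=k}^n\alpha_i$. Then, given $\xb\in N\hat\Delta_\pi$, I would define $\alpha_0\coloneqq 1-x_{\pi(1)}/N$, $\alpha_i\coloneqq (x_{\pi(i)}-x_{\pi(i+1)})/N$ for $1\le i\le n-1$, and $\alpha_n\coloneqq x_{\pi(n)}/N$. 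A one-line telescoping computation shows $\alpha_i\ge 0$, $\sum_{i=0}^n\alpha_i=1$, and $N\sum_{i=k}^n\alpha_i=x_{\pi(k)}$ for every $k$, hence $\sum_{i=0}^n\alpha_i\vb^i_\pi=\xb$; so this $\alpha$ is precisely the barycentric coordinate vector $\alpha(\xb)\in\Delta_n$ featuring in the statement.

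The crucial observation is that the entries of $\alpha$ are read off from the \emph{sorted list of coordinate values} of $\xb$: $\alpha_0$ from the largest entry, $\alpha_n$ from the smallest, and each $\alpha_i$ ($1\le i\le n-1$) as the $i$-th consecutive gap. This list $x_{\pi(1)}\ge\cdots\ge x_{\pi(n)}$ is the weakly decreasing rearrangement of $(x_1,\dots,x_n)$, which is intrinsic to $\xb$. Therefore, if $\xb$ lies in both $N\hat\Delta_\pi$ and $N\hat\Delta_\tau$, the permutations $\pi$ and $\tau$ both sort $\xb$ into this same sequence, the formula yields the \emph{same} vector $\alpha$ in both cases, and the previous step gives $\sum_{i=0}^n\alpha_i\vb^i_\pi=\xb=\sum_{i=0}^n\alpha_i\vb^i_\tau$, which is the claim.

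I expect the only point requiring care — and the one the proof must address explicitly — is the treatment of ties, which is exactly the configuration in which $\xb$ can belong to more than one large simplex (a point with pairwise distinct coordinates lies in the interior of a unique $N\hat\Delta_\pi$). At an index $k$ with $x_{\pi(k)}>x_{\pi(k+1)}$ the set $\{\pi(1),\dots,\pi(k)\}$ is forced to be the set of indices of the $k$ largest coordinates of $\xb$, so it coincides with $\{\tau(1),\dots,\tau(k)\}$ and hence $\vb^k_\pi=\vb^k_\tau$; at an index $k$ with $x_{\pi(k)}=x_{\pi(k+1)}$ the vertices $\vb^k_\pi$ and $\vb^k_\tau$ may genuinely differ, but there $\alpha_k=(x_{\pi(k)}-x_{\pi(k+1)})/N=0$, so the discrepancy is annihilated. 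Thus the equality $\sum_i\alpha_i\vb^i_\pi=\sum_i\alpha_i\vb^i_\tau$ in fact holds term by term, which simultaneously proves the lemma and confirms that $\alpha(\vb)$ is well defined in the discussion preceding it.
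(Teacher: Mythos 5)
Your proof is correct, and it takes a genuinely different and more direct route than the paper's. The paper argues by induction: it constructs a chain of permutations $\pi = \pi^0, \pi^1, \ldots, \pi^n = \tau$, where each $\pi^k$ is obtained from $\pi^{k-1}$ by moving the element $\tau(k)$ into position $k$, and then verifies step by step that $\xb$ remains in every intermediate large simplex $N\hat\Delta_{\pi^k}$ and that the barycentric coordinate vector is unchanged from $\pi^{k-1}$ to $\pi^k$. Your proof instead writes the barycentric coordinates of $\xb$ with respect to $N\hat\Delta_\pi$ in closed form, namely $\alpha_0 = 1-x_{\pi(1)}/N$, the consecutive gaps $\alpha_i = (x_{\pi(i)}-x_{\pi(i+1)})/N$ for $1\le i\le n-1$, and $\alpha_n = x_{\pi(n)}/N$, and simply observes that every entry of this vector is a function only of the weakly decreasing rearrangement of the coordinates of $\xb$, which is intrinsic to $\xb$ and therefore identical whether read off via $\pi$ or via $\tau$. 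This approach is shorter, dispenses with the combinatorial bookkeeping of the intermediate-permutation chain, and makes the underlying reason for the coincidence entirely transparent. Your closing paragraph, showing that the identity $\sum_i \alpha_i \vb^i_\pi = \sum_i \alpha_i \vb^i_\tau$ even holds term by term (since $\vb^k_\pi = \vb^k_\tau$ whenever $x_{\pi(k)} > x_{\pi(k+1)}$, while $\alpha_k = 0$ otherwise, and $\vb^0$ and $\vb^n$ are independent of the permutation), is a nice refinement, though strictly unnecessary once both sums have been shown to equal $\xb$.
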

\begin{proof}
We will construct a sequence $\pi = \pi^0, \pi^1, \ldots, \pi^n = \tau$ of permutations of $[n]$ that satisfies three properties. Let $\bm{\alpha}^k$ be the barycentric coordinates of $\xb$ w.r.t.~the vertices $v^0 = \bm{0}$ and $v^{i}_{\pi^k} = v^{i-1}_{\pi^k} + \eb^{\pi(i)}$ ($\forall i \in [n]$) of $N \hat{\Delta}_{\pi^k}$. The three properties are:
\begin{enumerate}[(i)]
    \item \label{prop:one} $\pi^{k}_1, \ldots, \pi^k_k = \tau_1, \ldots, \tau_k$ for all $k \in [n]$,
    \item \label{prop:two} $\xb \in N\hat{\Delta}_{\pi^k}$ for all $k \in [n]_0$,
    \item \label{prop:three} $\bm{\alpha}^k = \bm{\alpha}^{k-1}$ for all $k \in [n]$.
\end{enumerate}
These three properties immediately imply the lemma.

We now define the sequence of permutations. Let $\pi^0 \coloneqq \pi$. We then iteratively define $\pi^k$ from $\pi^{k-1}$ as follows. Intuitively, the $k$-th iteration moves the element $\tau(k)$ in permutation $\pi^{k-1}$ to the $k$-th position of $\pi^k$. Formally, we let $j \in [n]$ be the index for which~$\pi^{k}(j) = \tau(k)$ and define
\[
    \pi^k(l) \coloneqq
    \begin{cases}
        \pi^{k-1}(j) & \text{if } l = k, \\
        \pi^{k-1}(l-1) & \text{if } l \in \{k+1, \ldots, j\},\\
        \pi^{k-1}(l) & \text{else}\\
    \end{cases}
\]
It is immediate that property \eqref{prop:one} holds, and that $j \geq k$ in every iteration; so the sequence of permutations is well-defined. We now prove property \eqref{prop:two} by induction on $k$. The base case $\xb \in \hat{\Delta}_{\pi^0}$ holds by definition. Now suppose that $\xb \in N\hat{\Delta}_{\pi^{k-1}}$, and let $j$ be as chosen above to define $\pi^k$. Hence, $x_{\pi^{k-1}(k)} \geq \cdots \geq x_{\pi^{k-1}(j)}$. Moreover, $x \in N\hat{\Delta}_{\tau}$ implies $x_{\pi^{k-1}(j)} \geq x_{\pi^{k-1}(k)}$ by our choice of $j$. It follows that $x_{\pi^{k-1}(k)} = \cdots = x_{\pi^{k-1}(j)}$, so $\xb \in N\hat{\Delta}_{\pi^k}$. Finally, we prove property \eqref{prop:three}. For this, note that the vertices corresponding to $\pi^{k-1}$ and $\pi^{k}$ satisfy $v^{i}_{\pi^{k-1}} = v^{i}_{\pi^{k}}$ for $i \in \{1, \ldots, k-1\}$ and $i \in \{j, \ldots, n\}$ by construction of the vertices of $N\hat{\Delta}_{\pi^{k-1}}$ and $N\hat{\Delta}_{\pi^{k}}$. Moreover, $\xb \in N\hat{\Delta}_{\pi^{k-1}}$ and $\xb \in N\hat{\Delta}_{\pi^{k}}$ inductively imply $\alpha^{k-1}_i = \alpha^k_i$ for $i \in \{j, \ldots, n\}$, then $\alpha^{k-1}_i = 0 = \alpha^k_i$ for $i \in \{k, \ldots, i-1\}$, and finally $\alpha^{k-1}_i = \alpha^k_i$ for $i \in \{1, \ldots, k-1\}$.
\end{proof}

\begin{lemma}[cf.,~\citet{deng2012algorithmic}]
\label{lemma:barycentric-coordinates-close}
For any two vertices $\yb$ and $\zb$ of a cell, $\|\alpha(\yb) - \alpha(\zb)\|_1 \leq \frac{n+1}{N}$.
\end{lemma}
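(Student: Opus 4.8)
The plan is to reduce the statement to an elementary telescoping estimate by first writing the barycentric coordinates in closed form. By \cref{lemma:cell-in-large-simplex} the cell is contained in some large simplex $N\hat{\Delta}_\pi$; I fix such a $\pi$ and recall that the vertices of $N\hat{\Delta}_\pi$ are $\vb^0_\pi = \bm{0}$ and $\vb^i_\pi = \vb^{i-1}_\pi + N\eb^{\pi(i)}$ for $i \in [n]$. For a vertex $\vb$ of the cell we have $\vb \in N\hat{\Delta}_\pi$, so by \cref{lemma:barycentric-coordinates-coincide} the choice of reference simplex is immaterial and $\alpha(\vb)$ is the unique $\alpha \in \Delta_n$ with $\vb = \sum_{i=0}^n \alpha_i \vb^i_\pi$. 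Reading off the partial sums $v_{\pi(k)} = N\sum_{i=k}^n \alpha_i$ and solving yields
\[
\alpha(\vb)_0 = 1 - \tfrac{v_{\pi(1)}}{N}, \qquad
\alpha(\vb)_k = \tfrac{v_{\pi(k)} - v_{\pi(k+1)}}{N} \ \ (k \in [n-1]), \qquad
\alpha(\vb)_n = \tfrac{v_{\pi(n)}}{N};
\]
that is, up to the factor $1/N$ the coordinates of $\alpha(\vb)$ are the consecutive gaps of the monotone sequence $N \geq v_{\pi(1)} \geq \cdots \geq v_{\pi(n)} \geq 0$.

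Next I would exploit the combinatorial structure of a Kuhn cell. Writing the cell as $\xb_0 + \hat{\Delta}_\tau$ with $\xb_0 \in V_N$ and $\tau$ a permutation of $[n]$, its vertices form a chain $\xb_0,\ \xb_0 + \eb^{\tau(1)},\ \xb_0 + \eb^{\tau(1)} + \eb^{\tau(2)}, \ldots$, so any two vertices $\yb, \zb$ of the cell satisfy $d \coloneqq \yb - \zb = \pm(\eb^{\tau(a+1)} + \cdots + \eb^{\tau(b)})$ for some $0 \le a \le b \le n$. In particular every entry of $d$ lies in $\{0,1\}$, or every entry of $d$ lies in $\{0,-1\}$.

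Finally I would combine the two. Subtracting the closed forms above entrywise gives
\[
\|\alpha(\yb) - \alpha(\zb)\|_1 = \tfrac{1}{N}\Bigl( |d_{\pi(1)}| + \sum_{k=1}^{n-1} |d_{\pi(k)} - d_{\pi(k+1)}| + |d_{\pi(n)}| \Bigr),
\]
which is exactly $\tfrac{1}{N}$ times the total variation of the length-$(n+2)$ sequence $0, d_{\pi(1)}, \ldots, d_{\pi(n)}, 0$. Since $d$ has all nonzero entries of a single sign and of absolute value $1$, each of the $n+1$ consecutive differences of this sequence has absolute value at most $1$, so the total variation is at most $n+1$, and the bound $\|\alpha(\yb) - \alpha(\zb)\|_1 \le \frac{n+1}{N}$ follows. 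The only point requiring care is the very first one: since we are comparing two barycentric coordinate vectors, they must be taken with respect to a common reference simplex, which is precisely the role of \cref{lemma:cell-in-large-simplex,lemma:barycentric-coordinates-coincide}; once that is in place the remainder is a routine telescoping computation, and I anticipate no further obstacle.
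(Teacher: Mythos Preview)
Your proof is correct and follows essentially the same approach as the paper. The paper assumes $\pi=(1,\ldots,n)$ and shows the incremental relation $\alpha(\wb^i_\tau)=\alpha(\wb^{i-1}_\tau)+\tfrac{1}{N}\eb^{\tau(i)}-\tfrac{1}{N}\eb^{\tau(i)-1}$, concluding that $\alpha(\zb)-\alpha(\yb)\in\tfrac{1}{N}\{-1,0,1\}^{n+1}$; your closed form for $\alpha(\vb)$ and the total-variation bound on $(0,d_{\pi(1)},\ldots,d_{\pi(n)},0)$ are just a repackaging of the same computation, arriving at the identical conclusion that each of the $n+1$ entries of $\alpha(\yb)-\alpha(\zb)$ has absolute value at most $\tfrac{1}{N}$.
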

\begin{proof}
Let $\yb$ and $\zb$ be two vertices of a cell $\Delta$ that is contained in large simplex $N\hat{\Delta}_{\pi}$. Without loss of generality, we assume that $\pi = (1, \ldots, n)$. (For other permutations, the result holds by symmetry). The vertices of the large simplex are $\vb^0 = \bm{0}$ and $\vb^i = \vb^{i-1} + \eb^i$ for every $i \in [n]$.

Let $\xb$ be the smallest point in the unit cube containing $\Delta$, and the $n+1$ vertices of $\Delta$ be $\wb^0_{\tau} = \xb$ and $\wb^i_{\tau} = \wb^{i-1}_{\tau} + \eb^{\tau(i)}$ ($\forall i \in [n]$) for some permutation $\tau$. Without loss of generality, we let $\yb = \wb^k$ and $\zb = \wb^l$ with $k < l$. By construction of the vertices of the large simplex, the barycentric coordinates of $\wb^i_{\tau}$ for $i \in [n]$ are given by $\alpha(\wb^{i}_{\tau}) = \alpha(\wb^{i-1}_{\tau}) + \frac{1}{N} \eb^{\tau(i)} - \frac{1}{N} \eb^{\tau(i)-1}$. Hence,
\[
\alpha(\zb) - \alpha(\yb) = \frac{1}{N} \sum_{i=k+1}^l \left (\eb^{\tau(i)} - \eb^{\tau(i)-1} \right) \in \frac{1}{N} \{ -1, 0, 1 \}^{n+1}.
\]
\end{proof}

Starting from vertex $\bm{0}$ and using the same argument as in the proof of \cref{lemma:barycentric-coordinates-close}, we can iteratively show for every vertex $\vb \in V_N$ that the entries of $\alpha(\vb)$ can be expressed as a proper fraction with denominator $N$.
\begin{corollary}
\label{corollary:barycentric-coordinates}
For any vertex $\vb \in V_N$ and $i \in [n]_0$, we have $\alpha(\vb)_i \in \left \{ \frac{1}{N}, \frac{2}{N}, \ldots, 1 \right \}$.
\end{corollary}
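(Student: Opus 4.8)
The plan is to prove the slightly sharper statement that for every vertex $\vb \in V_N$ the full barycentric coordinate vector $\alpha(\vb)$ lies in $\tfrac{1}{N}\Z^{n+1}$. Since we also know $\alpha(\vb) \geq \bm{0}$ and $\sum_{i \in [n]_0}\alpha(\vb)_i = 1$, each coordinate $\alpha(\vb)_i$ is then a nonnegative integer multiple of $\tfrac1N$ that is at most $1$, which is exactly the form claimed in \cref{corollary:barycentric-coordinates}. I would establish this by induction on $\|\vb\|_1$. The base case is $\vb = \bm{0}$: the origin is the vertex $\vb^0_{\pi}$ of every large simplex $N\hat{\Delta}_{\pi}$, so $\alpha(\bm{0}) = (1,0,\ldots,0) \in \tfrac1N\Z^{n+1}$.

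For the inductive step, take $\vb \neq \bm{0}$, pick a coordinate $j$ with $v_j \geq 1$, and set $\vb' \coloneqq \vb - \eb^j$; then $\vb' \in V_N$ and $\|\vb'\|_1 = \|\vb\|_1 - 1$, so the inductive hypothesis gives $\alpha(\vb') \in \tfrac1N\Z^{n+1}$. The key point is that $\vb'$ and $\vb$ are two vertices of a common cell of Kuhn's triangulation. Concretely, choose the small corner $\xb$ of a unit cube with $x_j = v_j - 1$ and, for $k \neq j$, $x_k \in \{v_k-1,\,v_k\} \cap \{0,\ldots,N-1\}$ (nonempty since $0 \leq v_k \leq N$), and let $\pi$ be a permutation that lists first the indices $k \neq j$ with $x_k < v_k$, then $j$, then the remaining indices. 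Unpacking the definition of the base simplices $\hat{\Delta}_\pi$ shows that $\vb'$ and $\vb$ are the consecutive vertices $\wb^a_{\pi}$ and $\wb^{a+1}_{\pi}$ of the cell $\xb + \hat{\Delta}_{\pi}$, with $a = |\{k \neq j : x_k < v_k\}|$ and $\pi(a+1) = j$. By \cref{lemma:cell-in-large-simplex} this cell sits inside a single large simplex, and by \cref{lemma:barycentric-coordinates-coincide} the vectors $\alpha$ of all its vertices are computed with respect to that same large simplex; hence the increment identity derived in the proof of \cref{lemma:barycentric-coordinates-close} applies and yields $\alpha(\vb) - \alpha(\vb') = \tfrac1N\big(\eb^{\pi(a+1)} - \eb^{\pi(a+1)-1}\big) \in \tfrac1N\{-1,0,1\}^{n+1}$. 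Adding this to $\alpha(\vb') \in \tfrac1N\Z^{n+1}$ closes the induction.

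The only step that is not pure bookkeeping is the claim that the axis-parallel unit edge $\vb' \to \vb$ is genuinely an edge of some Kuhn cell — that is, that the corner $\xb$ and permutation $\pi$ above exist and define such a cell — and this is immediate once one recalls that the edges of $\hat{\Delta}_\pi$ are exactly the pairs $\wb^a_{\pi},\wb^b_{\pi}$ with $\wb^b_{\pi} - \wb^a_{\pi} = \sum_{l=a+1}^{b}\eb^{\pi(l)}$. So I do not expect a real obstacle here. Everything else — the increment formula, well-definedness of $\alpha(\vb)$, and the final passage from ``nonnegative integer multiple of $\tfrac1N$'' to membership in $\{\tfrac1N,\ldots,1\}$ — is already supplied by the preceding lemmas. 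One could equally well replace the $\|\vb\|_1$-induction by an induction along the lexicographic order on $V_N$, using the same common-cell argument to relate each vertex to its predecessor.
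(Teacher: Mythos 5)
Your argument is correct and matches the paper's approach: the paper also proves this by starting at $\vb = \bm{0}$ and iteratively propagating the barycentric increment identity from the proof of \cref{lemma:barycentric-coordinates-close} along unit edges of the Kuhn triangulation. One small note: what you actually prove is $\alpha(\vb)_i \in \{0, \tfrac1N, \ldots, 1\}$, and this is the right statement (e.g.\ $\alpha(\bm{0})_i = 0$ for $i \geq 1$), so the corollary as printed, which omits $0$, appears to be a typo that your argument silently corrects rather than reproduces.
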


Finally, we use the mapping $\alpha$ of vertices to their barycentric coordinates to define a labeling of the vertices. We use the same labeling rule as \citet{deng2012algorithmic}, but extend it to the entire triangulated large cube. \Cref{lemma:Simmons-Su-labeling} shows that, under this labeling, every cell's vertices are labeled with all numbers $[n]_0$. (We restate the proof from \citet{deng2012algorithmic} for completeness.) Such a labeling is called a \textit{Simmons-Su labeling}. Note also that $\L(\vb)$ is well-defined, as $N\alpha(\vb)_i$ is integral by \cref{corollary:barycentric-coordinates}. For each $\vb \in V$, let 
\begin{equation}
\label{eq:simmons-su-labeling}
\L(\vb) \coloneqq \sum_{i \in [n]_0} i N \alpha(\vb)_i \mod n+1.
\end{equation}

\begin{lemma}[cf.,~\citet{deng2012algorithmic}]
\label{lemma:Simmons-Su-labeling}
The labeling $\L$ is a Simmons-Su labeling.
\end{lemma}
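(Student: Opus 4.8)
The plan is to prove the equivalent statement that the $n+1$ vertices of any single cell of Kuhn's triangulation of the large cube receive pairwise distinct labels; since $\L$ takes values in the $(n+1)$-element set $[n]_0$, this is the same as saying that every cell's vertex set carries all of $[n]_0$, which is the definition of a Simmons-Su labeling.

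First I would fix a cell $\Delta$ and use \cref{lemma:cell-in-large-simplex} to place $\Delta$ inside a single large simplex $N\hat{\Delta}_{\pi}$; after relabeling the $n$ cube coordinates we may assume $\pi = (1, \ldots, n)$. The point of this step is that the barycentric coordinate map $\alpha$ is then computed against a common reference frame for all vertices of $\Delta$ (with consistency across overlapping large simplices guaranteed by \cref{lemma:barycentric-coordinates-coincide}), so that differences $\alpha(\vb) - \alpha(\vb')$ are well-defined vectors in $\R^{n+1}$. I would then list the vertices of $\Delta$ in Kuhn order as $\wb^0_\tau, \wb^1_\tau, \ldots, \wb^n_\tau$ with $\wb^i_\tau = \wb^{i-1}_\tau + \eb^{\tau(i)}$ for some permutation $\tau$ of $[n]$.

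The computational heart of the argument is already isolated in the proof of \cref{lemma:barycentric-coordinates-close}: moving from $\wb^{i-1}_\tau$ to $\wb^i_\tau$ changes the barycentric coordinates by
\[
\alpha(\wb^i_\tau) - \alpha(\wb^{i-1}_\tau) = \tfrac1N \eb^{\tau(i)} - \tfrac1N \eb^{\tau(i)-1},
\]
where now $\eb^0, \ldots, \eb^n$ denote the standard basis of $\R^{n+1}$ indexing the barycentric components, and $\tau(i) \in [n]$ so that both $\tau(i)$ and $\tau(i)-1$ lie in $[n]_0$. Substituting this into the definition $\L(\vb) = \sum_{l \in [n]_0} l\,N\alpha(\vb)_l \bmod (n+1)$ — which is well-defined because $N\alpha(\vb)_l \in \Z$ by \cref{corollary:barycentric-coordinates} — yields
\[
\L(\wb^i_\tau) - \L(\wb^{i-1}_\tau) \equiv \sum_{l \in [n]_0} l \cdot N \cdot \Bigl(\tfrac1N \eb^{\tau(i)} - \tfrac1N \eb^{\tau(i)-1}\Bigr)_l = \tau(i) - (\tau(i)-1) = 1 \pmod{n+1}.
\]
Telescoping over $i$ gives $\L(\wb^i_\tau) \equiv \L(\wb^0_\tau) + i \pmod{n+1}$ for all $i \in [n]_0$, so the labels $\L(\wb^0_\tau), \ldots, \L(\wb^n_\tau)$ form a complete residue system modulo $n+1$ and therefore exhaust $[n]_0$.

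I do not expect a genuine obstacle here: once the right bookkeeping is in place the proof is a one-line telescoping calculation. The only two points requiring care are (a) making sure every barycentric coordinate appearing in the calculation is taken with respect to the \emph{same} large simplex, which is exactly what \cref{lemma:cell-in-large-simplex} (together with \cref{lemma:barycentric-coordinates-coincide}) provides, and (b) not conflating the two index sets — cube coordinates are indexed by $[n]$, barycentric coordinates by $[n]_0$ — since it is precisely the shift $\eb^{\tau(i)} \mapsto \eb^{\tau(i)-1}$ in the barycentric indexing that forces the increment of $\L$ along a Kuhn edge to equal exactly $1$.
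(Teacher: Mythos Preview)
Your proposal is correct and follows essentially the same approach as the paper: both arguments use the identity $\alpha(\wb^i_\tau) - \alpha(\wb^{i-1}_\tau) = \tfrac{1}{N}\eb^{\tau(i)} - \tfrac{1}{N}\eb^{\tau(i)-1}$ (established in the proof of \cref{lemma:barycentric-coordinates-close}) and plug it into the definition of $\L$ to conclude that consecutive Kuhn vertices differ in label by exactly $1 \pmod{n+1}$. Your write-up is simply more explicit about the role of \cref{lemma:cell-in-large-simplex} and \cref{lemma:barycentric-coordinates-coincide} in ensuring the barycentric coordinates are computed consistently, and about the telescoping step, but the mathematical content is the same.
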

\begin{proof}
Suppose $\Delta$ is a cell in the triangulation of the large cube. As in the proof of \cref{lemma:barycentric-coordinates-close}, we see that the barycentric coordinates of the vertices $\vb^0_{\pi}, \ldots, \vb^n_{\pi}$ of $\Delta$ satisfy $\alpha(\vb^i_\pi) = \alpha(\vb^{i-1}_\pi) + \frac{1}{N}\eb^{\pi(i)} - \frac{1}{N}\eb^{\pi(i)-1}$. Hence every vertex of $\Delta$ receives a different label, as
\[
\L(\vb^i_\pi) \equiv \L(\vb^{i-1}_{\pi}) + \pi(i) - (\pi(i)-1) \equiv \L(\vb^{i-1}_{\pi})+1 \mod n+1.
\]

\end{proof}

\subsection{The \textsc{Sperner} problem}
\label{section:sperner-problem}
We now define the \sperner{} problem for arbitrary dimensions. In contrast to \spernerX{}, which was defined using a triangulated triangle, \citet{papadimitriou1994complexity} introduces the \sperner{} problem for arbitrary dimensions using a `large' cube divided into unit cubes. Suppose, for instance, that we are given a (Kuhn) triangulation of the large cube $[0,N]^n$. A coloring $c:V_N \to [n]_0$ of its vertices is a \textit{Sperner coloring} if it satisfies
\begin{enumerate}
    \item $c(\vb) \neq i$ if $v_i = 0$, for all $i \in [n]$,
    \item $c(\vb) \neq 0$ if $v_i = N$ for any $i \in [n]$.
\end{enumerate}

By embedding the triangulated cube into the triangulation of an $n$-dimensional simplex, it can be shown that Sperner's lemma implies the existence of a \textit{panchromatic} cell. That is, a cell whose $n+1$ vertices all receive a different color. This inspires the following total search problem, analogous to \spernerX{}. \citet[Proof of Theorem 3]{papadimitriou1994complexity} shows that \sperner{} lies in PPAD.

\begin{problem}{\sperner{}}
\textbf{Input:} Dimension $n \in \N$ and side length $N \in \N$ specifying Kuhn triangulation of large cube $[0,N]^n$. Sperner coloring function $c : [N]_0^n \to [n]_0$.

\textbf{Output:} The vertices of a panchromatic cell.
\end{problem}

\begin{theorem}[\citep{papadimitriou1994complexity}]
\label{thm:sperner-in-PPAD}
\sperner{} is in PPAD.
\end{theorem}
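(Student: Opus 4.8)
The plan is to give a polynomial-time reduction from \sperner{} to the canonical PPAD problem \eol{}, reproducing---in the cube and Kuhn-triangulation setting fixed above---the classical ``door-to-door'' path-following argument behind Sperner's lemma; this is the argument sketched by \citet[Proof of Theorem~3]{papadimitriou1994complexity}. Fix an input $(n, N, c)$. Encode each cell $\Delta$ of Kuhn's triangulation of $[0,N]^n$ by the minimal corner $\xb$ of the unit cube containing it together with the permutation $\pi$ for which $\Delta = \xb + \hat{\Delta}_{\pi}$; this is a string of length $\poly(n, \log N)$, and from it one reads off the $n+1$ vertices of $\Delta$ and (with $n+1$ queries to $c$) their colors. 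Call an $(n-1)$-face $F$ of a cell a \emph{door} if its $n$ vertices carry exactly the colors $\{0,1,\ldots,n-1\}$, each once. The \eol{}-graph will have the cells as vertices, plus one extra source vertex $s$, with an edge between two cells whenever they share a door.

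First I would prove the local parity lemma: a cell $\Delta$ has exactly one door if it is panchromatic (all colors $[n]_0$ appear on its vertices) and has either zero or two doors otherwise. This is the usual count over the $n+1$ vertices: a door of $\Delta$ is obtained by deleting one vertex so that the remaining $n$ realize all of $\{0,\ldots,n-1\}$; if $\Delta$ is panchromatic the only choice is to delete the vertex colored $n$, and if $\Delta$ is not panchromatic then either some color in $\{0,\ldots,n-1\}$ is absent (no door) or exactly one such color is repeated, on two vertices, either of which may be deleted (two doors). Alongside this I would record the combinatorial adjacency of Kuhn's triangulation: a door $F$ of $\Delta$ either lies on the boundary of the cube or is a facet of exactly one other cell $\Delta'$, and $\Delta'$ is computable from $\Delta$ and $F$ in polynomial time by the standard local ``flip'' (reordering an adjacent pair of entries of $\pi$, possibly moving to a neighbouring unit cube). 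Thus, restricted to the cells, the graph has maximum degree $2$.

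Next I would construct the source $s$, which is the delicate point. By the defining conditions of a Sperner coloring, a facet $\{x_i = 0\}$ of the cube with $i < n$ cannot exhibit color $i$, and a facet $\{x_i = N\}$ cannot exhibit color $0$, so the only facet of the cube that can contain a door is $\{x_n = 0\}$; restricting $c$ to it identifies that facet with $[0,N]^{n-1}$ carrying a Sperner coloring into the colors $[n-1]_0$. An induction on dimension (base case a one-dimensional parity count along $[0,N]$) shows this lower-dimensional instance has an odd number of panchromatic cells, i.e.\ the facet $\{x_n = 0\}$ contains an odd number of doors; more usefully, the inductive \eol{} construction produces a \emph{canonical} such door (the endpoint of the path from the canonical $(n-1)$-dimensional source). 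Joining $s$ by an edge to the unique $n$-cell incident to that canonical boundary door makes the \eol{} instance well-formed: every vertex has in-degree and out-degree at most $1$; $s$ has in-degree $0$ and out-degree $1$; cells with no door are isolated; non-panchromatic cells have degree $0$ or $2$; and panchromatic cells have degree $1$. Hence any \eol{}-solution (a non-$s$ vertex of degree $1$) is a panchromatic cell, whose vertices are read off from the encoding; this defines the solution map $g$.

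The main obstacle is exactly this source construction: making the induction on dimension rigorous, verifying that no facet other than $\{x_n = 0\}$ hosts a door and that that facet hosts an odd number of them, and---crucially for a valid reduction---committing to a concrete polynomial-time rule that singles out one boundary door so that the successor and predecessor functions of the \eol{} instance are genuine partial inverses away from the path endpoints. The remaining ingredients (the per-cell parity count and the explicit description of which cell of Kuhn's triangulation lies across a given facet) are routine but must be executed carefully, since \eol{} membership hinges on the degree bounds holding at every vertex.
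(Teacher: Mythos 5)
The paper does not give its own proof of this statement; it is quoted as a known result of \citet{papadimitriou1994complexity}, and the surrounding text explicitly defers to ``\citet[Proof of Theorem~3]{papadimitriou1994complexity}''. So there is no in-paper proof to compare against. That said, your sketch is exactly the standard door-to-door path-following reduction to \eol{} that Papadimitriou uses for this theorem, and the pieces you identify as routine (per-cell parity count, the combinatorial flip giving the neighbor across a facet of Kuhn's triangulation) and as delicate (only the facet $\{x_n=0\}$ can host a door, inductive construction of a canonical boundary door to serve as the source, and getting genuine partial inverses for the successor and predecessor so that the instance is a legal \eol{} instance) are the right ones. So your proposal is a correct reconstruction of the cited argument rather than an alternative route.
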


\subsection{The reduction from \rkkm{} to \sperner{}}
Suppose $C^1, \ldots, C^n$ is a family of $n$ KKM coverings of $\Delta_n$ and let $\varepsilon > 0$ be an approximation parameter. For notational convenience, index the sets of each covering $C^i$ by $[n-1]_0$ instead of $[n]$, so $C^i = (C^i_0, C^i_1, \ldots, C^i_{n-1})$. Without loss of generality, we assume that the KKM coverings are sparse (cf.~\cref{section:rkkm}). Consider the Kuhn triangulation of the cube $[0,N]^{n-1}$ with $N = \frac{n}{\varepsilon}$ and let its vertices be the set $V$. Note that the cube is $(n-1)$-dimensional. The vertices of this triangulation are labeled with labels $[n-1]_0$ according to labeling $\L$. The key to our reduction is to define the coloring $c:V \to [n-1]_0$ as
\begin{equation}
\label{eq:sperner-coloring}
    c(\vb) = \min \left \{j \in [n-1]_0 \mid \alpha(\vb) \in C^{\L(\vb)}_j \right \}.
\end{equation}

We now argue that $c$ is a Sperner coloring by verifying both conditions (1) and (2) of the definition in \cref{section:sperner-problem}. Fix some vertex $\vb \in V$. If $v_j = 0$ for some $j \in [n-1]$, then $\alpha_j = 0$, and so $\alpha \in F_{[n-1]_0 \setminus \{j\}}$. As the KKM coverings are sparse, we have $\alpha \notin C^i_j$ for all coverings $C^i$. This verifies condition (1). Now suppose $v_j = N$ for some $j \in [n-1]$. This implies $\sum_{k=1}^{n-1} \alpha_k = 1$, and so $\alpha_0 = 0$. By the same argument as above, we get $\alpha \notin C^i_0$ for all coverings $C^i$. This verifies condition (2).

Hence there exists a cell of the triangulation that is panchromatic, so every vertex of the cell has a different color. Fix such a cell. As $\L$ is a Simmons-Su labeling, each vertex of the cell also has a different label. Label these vertices $\vb^0, \ldots, \vb^{n-1}$ so that $\L(\vb^i) = i$. Define the permutation $\pi$ of $[n-1]_0$ as $\pi(i) = c(\vb^i)$. Then $\alpha(\vb^i) \in C^i_{\pi(i)}$. By \cref{lemma:barycentric-coordinates-close}, the $L_1$-distance between any two $\vb^i$ and $\vb^j$ is at most $\frac{n}{N} = \varepsilon$. It thus follows that any $\vb^i$ together with $\pi$ form an $\varepsilon$-approximate Rainbow-KKM solution for the family of KKM coverings $C^1, \ldots, C^n$.

\begin{theorem}
\label{thm:rkkm-to-sperner}
There is a polynomial-time reduction from \rkkm{} to \sperner{}.
\end{theorem}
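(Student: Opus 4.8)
The reduction is essentially the construction laid out in the discussion preceding the theorem; the work is to confirm it runs in polynomial time and is correct, and to splice it together with \cref{proposition:sparse-KKM-covering}. Given an \rkkm{} instance $(\varepsilon, g^1, \ldots, g^n)$, I would first apply \cref{proposition:sparse-KKM-covering} to pass, in polynomial time, to an \rkkm{} instance with \emph{sparse} KKM coverings $C^1, \ldots, C^n$ of $\Delta_{n-1}$ and approximation parameter $\varepsilon/2$; it then suffices to reduce this sparse problem to \sperner{}. For the sparse instance I would set $N \coloneqq \lceil 2n/\varepsilon \rceil$ (an integer whose bit-length is polynomial in the input size) and output the \sperner{} instance given by dimension $n-1$, side length $N$ (specifying the Kuhn triangulation of $[0,N]^{n-1}$), and the coloring function $c$ of \eqref{eq:sperner-coloring} built from the Simmons--Su labeling \eqref{eq:simmons-su-labeling}, with the sets of each $C^i$ re-indexed by $[n-1]_0$ as in the preamble.

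Next I would check that $c$ is computable in polynomial time with few oracle calls. For a vertex $\vb \in [N]_0^{n-1}$: sort its coordinates in decreasing order to find a permutation $\pi$ with $\vb \in N\hat\Delta_\pi$; by \cref{lemma:barycentric-coordinates-coincide} the ambiguity in $\pi$ on faces shared by several large simplices does not matter, so $\alpha(\vb) \in \Delta_{n-1}$ is well-defined and is read off as the consecutive differences of the sorted coordinates of $\vb/N$ (together with $1$ minus the largest coordinate). Then $\L(\vb)$ comes straight from \eqref{eq:simmons-su-labeling}, well-defined because $N\alpha(\vb)_i \in \Z$ by \cref{corollary:barycentric-coordinates}. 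Finally, scanning $j = 0, 1, \ldots, n-1$ and querying $g^{\L(\vb)}(\alpha(\vb), j)$, return the smallest $j$ with $\alpha(\vb) \in C^{\L(\vb)}_j$; such a $j$ exists because $C^{\L(\vb)}$ is a covering of $\Delta_{n-1}$ and $\alpha(\vb) \in \Delta_{n-1}$. This uses at most $n$ oracle calls plus $\poly(n, \log N)$ time, so $c$ is a legitimate \sperner{} input in either computational model.

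Then I would establish correctness. That $c$ is a Sperner coloring is exactly the verification given in the discussion above: if $v_j = 0$ then $\alpha(\vb)_j = 0$, so $\alpha(\vb) \in F_{[n-1]_0 \setminus \{j\}}$ and sparseness forces $\alpha(\vb) \notin C^i_j$ for all $i$; if $v_j = N$ for some $j \in [n-1]$ then $\alpha(\vb)_0 = 0$, and sparseness forces $\alpha(\vb) \notin C^i_0$ for all $i$. Hence \sperner{} returns a panchromatic cell, whose $n$ vertices carry the $n$ distinct labels $[n-1]_0$ by \cref{lemma:Simmons-Su-labeling}; name them $\vb^0, \ldots, \vb^{n-1}$ with $\L(\vb^i) = i$ and put $\pi(i) \coloneqq c(\vb^i)$. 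Panchromaticity makes $\pi$ a permutation of $[n-1]_0$, and $\alpha(\vb^i) \in C^i_{\pi(i)}$ by definition of $c$. Applying \cref{lemma:barycentric-coordinates-close} to the $(n-1)$-dimensional cube used here (whence the bound $\tfrac{n}{N}$) gives $\|\alpha(\vb^0) - \alpha(\vb^i)\|_1 \le \tfrac{n}{N} \le \tfrac{\varepsilon}{2}$ for every $i$, so $(\alpha(\vb^0), \pi)$ is an $(\varepsilon/2)$-approximate Rainbow-KKM solution of the sparse instance; feeding it through the solution map of \cref{proposition:sparse-KKM-covering} yields an $\varepsilon$-approximate solution of the original instance.

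\textbf{Main obstacle.} The genuinely nontrivial content---that Kuhn's triangulation together with the Simmons--Su labeling forces, inside every panchromatic cell, one vertex per label with pairwise barycentric-coordinate distance $\le \tfrac{n}{N}$---is already packaged in \cref{lemma:barycentric-coordinates-coincide,lemma:barycentric-coordinates-close,lemma:Simmons-Su-labeling}. What still needs care is showing that $c$ is a \emph{Sperner} coloring rather than merely some coloring: this is precisely where sparseness of the KKM coverings is indispensable, which is why I route the reduction through \cref{proposition:sparse-KKM-covering}. The remaining points are bookkeeping---well-definedness of $c$ on faces shared by several large simplices (handled by \cref{lemma:barycentric-coordinates-coincide}) and the elementary accounting linking $N = \lceil 2n/\varepsilon \rceil$ to the final $\varepsilon$-guarantee.
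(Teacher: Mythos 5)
Your proposal is correct and follows essentially the same approach as the paper: reduce to sparse coverings, build the Kuhn triangulation of $[0,N]^{n-1}$, color via the Simmons--Su labeling and smallest-index rule, verify the Sperner coloring conditions via sparseness, and recover an approximate Rainbow-KKM solution from a panchromatic cell using \cref{lemma:barycentric-coordinates-close}. The only difference is bookkeeping: where the paper invokes ``without loss of generality, assume the coverings are sparse'' and sets $N = n/\varepsilon$, you explicitly route through \cref{proposition:sparse-KKM-covering}, halving the target accuracy and choosing $N = \lceil 2n/\varepsilon\rceil$ so that $N$ is a genuine integer of polynomial bit-length -- a slightly more careful account of the same argument.
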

\begin{proof}
Suppose $(\varepsilon, g^1, \ldots, g^n)$ is an instance of $\rkkm{}$, and each $g^i$ is associated with KKM covering $C^i$. It remains to argue that we can efficiently construct a routine that implements coloring function $c$ using functions $g^i$. Recall the definition of the coloring $c$ from \eqref{eq:sperner-coloring}. In order to determine the color of vertex $\vb$, we first compute its barycentric coordinates $\alpha$. Then compute its label $l = \sum_{i \in [n-1]_0} i N \alpha(\vb) \mod n$. Finally, determine the smallest $j \in [n-1]_0$ for which $\alpha \in C^l_j$, using at most $n$ calls to the covering function $g^l$.
\end{proof}

In the white-box model, \cref{thm:sperner-in-PPAD,thm:rkkm-to-sperner} imply that \housing{} and \rkkm{} lie in PPAD. Moreover, the reduction also immediately implies a brute-force approach for \rkkm{} (and thus also for \housing{}): compute the color $c(\vb)$ of each vertex $\vb \in V$ and check for each cell whether it is panchromatic. In the black-box model, this algorithm has an exponential query complexity $O(N^n)$ in the input size $\log \frac{1}{\varepsilon}$ of the approximation parameter $\varepsilon$. We can improve this upper bound on the query complexity of \housing{} and \rkkm{}. \Citet{deng2011discrete} show that \sperner{} has a query complexity of $\Theta(N^{n-1})$. Our reduction in the proof of \cref{thm:rkkm-to-sperner} solves an instance of \rkkm{} with $O(N^{n-1})$ queries to the coloring function $c$ and every such query costs at most $n$ queries to the covering function.

\begin{corollary}
\housing{} and \rkkm{} have query complexity $O(N^{n-1})$.
\end{corollary}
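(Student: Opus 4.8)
The plan is to compose the reduction of \cref{thm:rkkm-to-sperner} with the known query-complexity upper bound for \sperner{}, and then transfer the bound to \housing{} via \cref{corollary:housing-rkkm-query-complexity}. First I would recall the structure of that reduction: given an instance $(\varepsilon, g^1, \ldots, g^n)$ of \rkkm{}---which, by \cref{proposition:sparse-KKM-covering}, may be assumed to have sparse coverings at the cost of only a constant-factor change in $\varepsilon$---it produces a \sperner{} instance on the Kuhn triangulation of the cube $[0,N]^{n-1}$ with $N = \frac{n}{\varepsilon}$ and coloring function $c$ from \eqref{eq:sperner-coloring}, such that the vertices of any panchromatic cell together with the induced permutation form an $\varepsilon$-approximate Rainbow-KKM solution. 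The key feature is that a single evaluation $c(\vb)$ requires no oracle access beyond computing the label $\L(\vb)$ from the barycentric coordinates $\alpha(\vb)$ (which needs no queries) followed by at most $n$ queries to the one covering function $g^{\L(\vb)}$, used to find the smallest $j \in [n-1]_0$ with $\alpha(\vb) \in C^{\L(\vb)}_j$.

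Next I would invoke \citet{deng2011discrete}, who show that \sperner{} has query complexity $\Theta(N^{n-1})$; in particular a panchromatic cell of our triangulation can be found deterministically using $O(N^{n-1})$ queries to the coloring function $c$. Combining this with the preceding paragraph, the reduction solves the \rkkm{} instance using $O(N^{n-1})$ evaluations of $c$, each costing at most $n$ queries to the covering functions, for a total of $O(n N^{n-1}) = O(N^{n-1})$ queries when the dimension $n$ is treated as fixed. Finally, \cref{corollary:housing-rkkm-query-complexity} states that \housing{} and \rkkm{} have the same query complexity, so the $O(N^{n-1})$ bound carries over to \housing{}, which completes the proof.

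There is no substantial mathematical obstacle; the statement is a composition of results already established. The one place I would take care is the accounting of oracle calls: one must verify that the \sperner{}-solving procedure of \citet{deng2011discrete} accesses its instance only through the coloring function, so that the total number of covering-function queries is at most the number of $c$-evaluations performed multiplied by the per-evaluation cost of at most $n$, and that the sparsification preprocessing alters $N$ by only a constant factor, which is absorbed into the $O(\cdot)$ for fixed $n$.
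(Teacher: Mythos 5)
Your proposal is correct and takes essentially the same approach as the paper: invoke the $\Theta(N^{n-1})$ query bound for \sperner{} from \citet{deng2011discrete}, note that each coloring query $c(\vb)$ in the reduction of \cref{thm:rkkm-to-sperner} costs at most $n$ covering-function queries, and transfer the resulting $O(N^{n-1})$ bound to \housing{} via \cref{corollary:housing-rkkm-query-complexity}. Your extra care about the sparsification step of \cref{proposition:sparse-KKM-covering} (a constant-factor change to $\varepsilon$) and the per-evaluation cost accounting mirrors what the paper leaves implicit and is entirely appropriate.
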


\section{Conclusion}\label{section:conclusion}
We have shown that \housing{} and the closely related \rkkm{} are PPAD-complete. Whereas the housing market is tractable with two agents, we show that PPAD-completeness holds even when we restrict the market to $n \geq 3$ agents with identical preferences. Moreover, we describe exponential upper and lower bounds for the query complexity of solving the two problems in the white-box model that hold when $n \geq 4$.

While our results may be perceived as negative for market design, it is not uncommon for PPAD-complete problems to be implemented for allocation in practice (e.g., approximate competitive equilibrium with equal incomes for course allocation). Rather we hope that our results will stimulate further examination of the computational complexity of allocation systems with income effects. Three sets of open questions naturally arise.

First, what is the computational complexity of \housing{} when preferences are provided as utility functions instead of preference sets? Does hardness continue to hold under the natural assumptions of monotonicity in money and no externalities as in \citet{quinzii1984core} and \citet{svensson1984competitive}? What is the complexity of computing competitive equilibrium in model with multi-good demand under assumptions, such as net substitutability condition, that guarantee existence \citep{baldwin2023equilibrium}?
Secondly, what is the exact mapping between the assumptions in \housing{} and the monotonicity assumptions on valuations in \cake{} imposed by \citet{deng2012algorithmic} and \citet{hollender2023envy} to achieve polynomial-time algorithms for cutting a cake with $3$ and $4$ players?
Thirdly, what algorithms might work well in practice for approximating equilibria in the presence of income effects?

\bibliographystyle{abbrvnat}
\bibliography{bib}

\begin{thebibliography}{36}
\providecommand{\natexlab}[1]{#1}
\providecommand{\url}[1]{\texttt{#1}}
\expandafter\ifx\csname urlstyle\endcsname\relax
  \providecommand{\doi}[1]{doi: #1}\else
  \providecommand{\doi}{doi: \begingroup \urlstyle{rm}\Url}\fi

\bibitem[Baisa(2017)]{baisa2017auction}
B.~Baisa.
\newblock Auction design without quasilinear preferences.
\newblock \emph{Theoretical Economics}, 12\penalty0 (1):\penalty0 53--78, 2017.

\bibitem[Baldwin et~al.(2023)Baldwin, Jagadeesan, Klemperer, and
  Teytelboym]{baldwin2023equilibrium}
E.~Baldwin, R.~Jagadeesan, P.~Klemperer, and A.~Teytelboym.
\newblock The equilibrium existence duality.
\newblock \emph{Journal of Political Economy}, 131\penalty0 (6):\penalty0
  1440--1476, 2023.

\bibitem[Benoit and Krishna(2001)]{benoit2001multiple}
J.-P. Benoit and V.~Krishna.
\newblock Multiple-object auctions with budget constrained bidders.
\newblock \emph{The Review of Economic Studies}, 68\penalty0 (1):\penalty0
  155--179, 2001.

\bibitem[Bhattacharya et~al.(2010)Bhattacharya, Goel, Gollapudi, and
  Munagala]{bhattacharya2010budget}
S.~Bhattacharya, G.~Goel, S.~Gollapudi, and K.~Munagala.
\newblock Budget constrained auctions with heterogeneous items.
\newblock In \emph{Proceedings of the Forty-Second ACM Symposium on Theory of
  Computing}, STOC '10, page 379–388, New York, NY, USA, 2010. Association
  for Computing Machinery.
\newblock \doi{10.1145/1806689.1806743}.

\bibitem[Br{\^a}nzei and Nisan(2017)]{branzei2017query}
S.~Br{\^a}nzei and N.~Nisan.
\newblock The query complexity of cake cutting.
\newblock \emph{arXiv preprint arXiv:1705.02946}, 2017.

\bibitem[Che and Gale(1998)]{che1998standard}
Y.-K. Che and I.~Gale.
\newblock Standard auctions with financially constrained bidders.
\newblock \emph{The Review of Economic Studies}, 65\penalty0 (1):\penalty0
  1--21, 1998.

\bibitem[Chen et~al.(2022)Chen, Chen, Peng, and
  Yannakakis]{chen2022computational}
T.~Chen, X.~Chen, B.~Peng, and M.~Yannakakis.
\newblock Computational hardness of the {Hylland-Zeckhauser} scheme.
\newblock In \emph{Proceedings of the 2022 Annual ACM-SIAM Symposium on
  Discrete Algorithms (SODA)}, pages 2253--2268. SIAM, 2022.

\bibitem[Chen and Deng(2009)]{chen2009complexity}
X.~Chen and X.~Deng.
\newblock On the complexity of {2D} discrete fixed point problem.
\newblock \emph{Theoretical Computer Science}, 410\penalty0 (44):\penalty0
  4448--4456, 2009.

\bibitem[Chen and Teng(2009)]{chen2009spending}
X.~Chen and S.-H. Teng.
\newblock Spending is not easier than trading: on the computational equivalence
  of {Fisher} and {Arrow-Debreu} equilibria.
\newblock In \emph{International Symposium on Algorithms and Computation},
  pages 647--656. Springer, 2009.

\bibitem[Chen et~al.(2009{\natexlab{a}})Chen, Dai, Du, and
  Teng]{chen2009settlingarrow}
X.~Chen, D.~Dai, Y.~Du, and S.-H. Teng.
\newblock Settling the complexity of {Arrow-Debreu} equilibria in markets with
  additively separable utilities.
\newblock In \emph{2009 50th Annual IEEE Symposium on Foundations of Computer
  Science}, pages 273--282. IEEE, 2009{\natexlab{a}}.

\bibitem[Chen et~al.(2009{\natexlab{b}})Chen, Deng, and
  Teng]{chen2009settlingnash}
X.~Chen, X.~Deng, and S.-H. Teng.
\newblock Settling the complexity of computing two-player {Nash} equilibria.
\newblock \emph{Journal of the ACM (JACM)}, 56\penalty0 (3):\penalty0 1--57,
  2009{\natexlab{b}}.

\bibitem[Chen et~al.(2017)Chen, Paparas, and Yannakakis]{chen2017complexity}
X.~Chen, D.~Paparas, and M.~Yannakakis.
\newblock The complexity of non-monotone markets.
\newblock \emph{Journal of the ACM (JACM)}, 64\penalty0 (3):\penalty0 1--56,
  2017.

\bibitem[Daskalakis et~al.(2009)Daskalakis, Goldberg, and
  Papadimitriou]{daskalakis2009complexity}
C.~Daskalakis, P.~W. Goldberg, and C.~H. Papadimitriou.
\newblock The complexity of computing a {Nash} equilibrium.
\newblock \emph{Communications of the ACM}, 52\penalty0 (2):\penalty0 89--97,
  2009.

\bibitem[Deng et~al.(2011)Deng, Qi, Saberi, and Zhang]{deng2011discrete}
X.~Deng, Q.~Qi, A.~Saberi, and J.~Zhang.
\newblock Discrete fixed points: Models, complexities, and applications.
\newblock \emph{Mathematics of Operations Research}, 36\penalty0 (4):\penalty0
  636--652, 2011.

\bibitem[Deng et~al.(2012)Deng, Qi, and Saberi]{deng2012algorithmic}
X.~Deng, Q.~Qi, and A.~Saberi.
\newblock Algorithmic solutions for envy-free cake cutting.
\newblock \emph{Operations Research}, 60\penalty0 (6):\penalty0 1461--1476,
  2012.

\bibitem[Dobzinski et~al.(2012)Dobzinski, Lavi, and Nisan]{dobzinski2012multi}
S.~Dobzinski, R.~Lavi, and N.~Nisan.
\newblock Multi-unit auctions with budget limits.
\newblock \emph{Games and Economic Behavior}, 74\penalty0 (2):\penalty0
  486--503, 2012.

\bibitem[Gale(1984)]{gale1984equilibrium}
D.~Gale.
\newblock Equilibrium in a discrete exchange economy with money.
\newblock \emph{International Journal of Game Theory}, 13\penalty0
  (1):\penalty0 61--64, 1984.

\bibitem[Goldberg(2011)]{goldberg2011survey}
P.~W. Goldberg.
\newblock A survey of {PPAD}-completeness for computing {Nash} equilibria.
\newblock \emph{arXiv preprint arXiv:1103.2709}, 2011.

\bibitem[Hollender and Rubinstein(2023)]{hollender2023envy}
A.~Hollender and A.~Rubinstein.
\newblock Envy-free cake-cutting for four agents.
\newblock In \emph{2023 IEEE 64th Annual Symposium on Foundations of Computer
  Science (FOCS)}, pages 113--122. IEEE, 2023.

\bibitem[Hylland and Zeckhauser(1979)]{hylland1979efficient}
A.~Hylland and R.~Zeckhauser.
\newblock The efficient allocation of individuals to positions.
\newblock \emph{Journal of Political Economy}, 87\penalty0 (2):\penalty0
  293--314, 1979.

\bibitem[Knaster et~al.(1929)Knaster, Kuratowski, and
  Mazurkiewicz]{knaster1929beweis}
B.~Knaster, K.~Kuratowski, and S.~Mazurkiewicz.
\newblock {Ein Beweis des Fixpunktsatzes f{\"u}r n-dimensionale Simplexe}.
\newblock \emph{Fundamenta Mathematicae}, 14\penalty0 (1):\penalty0 132--137,
  1929.

\bibitem[Koopmans and Beckmann(1957)]{koopmans1957assignment}
T.~C. Koopmans and M.~Beckmann.
\newblock Assignment problems and the location of economic activities.
\newblock \emph{Econometrica}, pages 53--76, 1957.

\bibitem[Morimoto and Serizawa(2015)]{morimoto2015strategy}
S.~Morimoto and S.~Serizawa.
\newblock Strategy-proofness and efficiency with non-quasi-linear preferences:
  A characterization of minimum price walrasian rule.
\newblock \emph{Theoretical Economics}, 10\penalty0 (2):\penalty0 445--487,
  2015.

\bibitem[Papadimitriou(1994)]{papadimitriou1994complexity}
C.~H. Papadimitriou.
\newblock On the complexity of the parity argument and other inefficient proofs
  of existence.
\newblock \emph{Journal of Computer and system Sciences}, 48\penalty0
  (3):\penalty0 498--532, 1994.

\bibitem[Quinzii(1984)]{quinzii1984core}
M.~Quinzii.
\newblock Core and competitive equilibria with indivisibilities.
\newblock \emph{International Journal of Game Theory}, 13\penalty0
  (1):\penalty0 41--60, 1984.

\bibitem[Roth and Postlewaite(1977)]{roth1977weak}
A.~E. Roth and A.~Postlewaite.
\newblock Weak versus strong domination in a market with indivisible goods.
\newblock \emph{Journal of Mathematical Economics}, 4\penalty0 (2):\penalty0
  131--137, 1977.

\bibitem[Rubinstein(2019)]{rubinstein2019hardness}
A.~Rubinstein.
\newblock \emph{Hardness of Approximation between P and NP}.
\newblock Morgan \& Claypool, 2019.

\bibitem[Saitoh and Serizawa(2008)]{saitoh2008vickrey}
H.~Saitoh and S.~Serizawa.
\newblock Vickrey allocation rule with income effect.
\newblock \emph{Economic Theory}, 35:\penalty0 391--401, 2008.

\bibitem[Scarf(1982)]{scarf1982computation}
H.~E. Scarf.
\newblock The computation of equilibrium prices: an exposition.
\newblock \emph{Handbook of Mathematical Economics}, 2:\penalty0 1007--1061,
  1982.

\bibitem[Shapley and Scarf(1974)]{shapley1974cores}
L.~Shapley and H.~Scarf.
\newblock On cores and indivisibility.
\newblock \emph{Journal of Mathematical Economics}, 1\penalty0 (1):\penalty0
  23--37, 1974.

\bibitem[Shapley and Shubik(1971)]{shapley1971assignment}
L.~S. Shapley and M.~Shubik.
\newblock The assignment game {I: The core}.
\newblock \emph{International Journal of game theory}, 1:\penalty0 111--130,
  1971.

\bibitem[Sperner(1928)]{sperner1928neuer}
E.~Sperner.
\newblock \emph{{Neuer Beweis für die Invarianz der Dimensionszahl und des
  Gebietes}}.
\newblock 7. Vandenhoeck \& Ruprecht, 1928.

\bibitem[Stromquist(1980)]{stromquist1980how}
W.~Stromquist.
\newblock How to cut a cake fairly.
\newblock \emph{The American Mathematical Monthly}, 87\penalty0 (8):\penalty0
  640--644, 1980.

\bibitem[Su(1999)]{su1999rental}
F.~E. Su.
\newblock Rental harmony: {Sperner's} lemma in fair division.
\newblock \emph{The American Mathematical Monthly}, 106\penalty0 (10):\penalty0
  930--942, 1999.

\bibitem[Svensson(1984)]{svensson1984competitive}
L.-G. Svensson.
\newblock Competitive equilibria with indivisible goods.
\newblock \emph{Zeitschrift f{\"u}r National{\"o}konomie/Journal of Economics},
  44\penalty0 (4):\penalty0 373--386, 1984.

\bibitem[Zhang et~al.(2013)Zhang, Branzei, and
  Procaccia]{zhang2013externalities}
J.~Zhang, S.~Branzei, and A.~Procaccia.
\newblock Externalities in cake cutting.
\newblock International Joint Conferences on Artificial Intelligence, 2013.

\end{thebibliography}
\end{document}